\newif\ifdraft \draftfalse
\newif \ifsubmission \submissiontrue
\newif \ifshort \shorttrue
\newif \iffirst \firsttrue
\makeatletter \@input{tex.flags} \makeatother
\newcommand{\SUBSECTION}[1]{\ifshort\paragraph*{#1}\else\subsection{#1}\fi}
\newenvironment{mathdisplayfull}{\ifshort $ \else \begin{displaymath} \fi}{\ifshort
  $ \ignorespaces \else \end{displaymath}\fi}
\newenvironment{smalldisplay}{\ifshort \begin{center} $ \else
  \begin{displaymath}\fi}
  {\ifshort $ \end{center} \else \end{displaymath} \fi}
\newcommand{\longquad}{{\ifshort \;\; \else  \qquad \fi}}
\newcommand{\thelongref}[1]{\ifshort the extended version\else \Cref{#1}\fi}
\newcommand{\preflong}[1]{\ifshort\unskip\else(\Cref{#1})\fi}
\definecolor{DarkGreen}{rgb}{0.1,0.5,0.1}
\definecolor{DarkRed}{rgb}{0.5,0.1,0.1}
\definecolor{DarkBlue}{rgb}{0.1,0.1,0.5}
\newcommand{\sw}[1]{\ifdraft \textcolor{blue}{[Steven: #1]}\fi}
\newcommand{\jh}[1]{\ifdraft \textcolor{red}{[Justin: #1]}\fi}
\newcommand{\ar}[1]{\ifdraft \textcolor{green}{[Aaron: #1]}\fi}
\newcommand\RR{\mathbb{R}}
\newcommand\EE{\mathbb{E}}
\newcommand\cA{\mathcal{A}}
\newcommand\cB{\mathcal{B}}
\newcommand\cL{\mathcal{L}}
\newcommand\cM{\mathcal{M}}
\newcommand\cO{\mathcal{O}}
\newcommand\cP{\mathcal{P}}
\newcommand\cQ{\mathcal{Q}}
\newcommand\cR{\mathcal{R}}
\newcommand\cX{\mathcal{X}}
\newcommand\cD{\mathcal{D}}
\newcommand\cU{\mathcal{U}}
\newcommand\cN{\mathcal{N}}
\newcommand\cY{\mathcal{Y}}
\renewcommand{\tilde}{\widetilde}
\DeclareMathOperator*{\Expectation}{\mathbb{E}}
\newcommand{\Ex}[2]{\Expectation_{#1}\left[#2\right]}
\newcommand{\PDD}{\mbox{{\sf PrivDude}}\xspace}
\newcommand{\TPDD}{\mbox{{\sf TrueDude}}\xspace}
\newcommand{\TRPDD}{\mbox{{\sf TightDude}}\xspace}
\newcommand{\RPDD}{\mbox{{\sf RoundDude}}\xspace}
\newcommand{\eps}{\varepsilon}
\def\epsilon{\varepsilon}
\DeclareMathOperator{\Lap}{Lap}
\newcommand{\OPT}{\mathrm{OPT}}
\renewcommand{\hat}{\widehat}
\renewcommand{\bar}{\overline}
\DeclareMathOperator*{\argmax}{\mathrm{argmax}}
\newcommand{\INDSTATE}[1][1]{\STATE\hspace{#1\algorithmicindent}}
\def\<{\left\langle}
\def\>{\right\rangle}
\newtheorem*{remark*}{Remark}
\newtheorem*{corollary*}{Corollary}
\newtheorem*{theorem*}{Theorem}
\newtheorem*{proposition*}{Proposition}
\newtheorem*{assumption*}{Assumption}
\newtheorem*{example*}{Electricity Example}
\declaretheorem[
  name=Theorem,
  refname={theorem, theorems},
  Refname={Theorem, Theorems}]{theorem}
\declaretheorem[
  name=Lemma,
  refname={lemma, lemmas},
  Refname={Lemma, Lemmas}]{lemma}
\declaretheorem[
  name=Fact,
  refname={fact, facts},
  Refname={Fact, Facts}]{fact}
\declaretheorem[
  name=Corollary,
  refname={corollary, corollaries},
  Refname={Corollary, Corollaries}]{corollary}
\declaretheorem[
  name=Definition,
  refname={definition, definitions},
  Refname={Definition, Definitions}]{definition}
\title{Jointly Private Convex Programming}
\author{Justin Hsu\thanks{Department of Computer and Information
    Science, University of Pennsylvania. Email: {\tt
      justhsu@cis.upenn.edu}} \qquad Zhiyi Huang\thanks{Department of
    Computer Science, University of
    Hong Kong. Email: {\tt zhiyi@cs.hku.hk}} \qquad Aaron
  Roth\thanks{Department of Computer and Information Science,
    University of Pennsylvania. Email: {\tt aaroth@cis.upenn.edu}}
  \qquad Zhiwei Steven Wu\thanks{Department of Computer and Information Science,
    University of Pennsylvania. Email: {\tt wuzhiwei@cis.upenn.edu}}
}
\begin{document}

\maketitle

\begin{abstract}

In this paper we present an extremely general method for approximately solving a
large family of convex programs where the solution can be divided between
different agents, subject to joint differential privacy. This class includes
multi-commodity flow problems, general allocation problems, and
multi-dimensional knapsack problems, among other examples.
The accuracy of our algorithm depends on the \emph{number} of constraints that
bind between individuals, but crucially, is \emph{nearly independent} of the
number of primal variables and hence the number of agents who make up the
problem. As the number of agents in a problem grows, the error we introduce
often becomes negligible.

We also consider the setting where agents are strategic and have preferences
over their part of the solution. For any convex program in this class that
maximizes \emph{social welfare}, there is a generic reduction that makes the
corresponding optimization \emph{approximately dominant strategy truthful} by
charging agents prices for resources as a function of the
approximately optimal dual variables, which are themselves computed under
differential privacy.
Our results substantially expand the class of problems that are known to be
solvable under both privacy and incentive constraints.
\end{abstract}

\vfill
\thispagestyle{empty}
\setcounter{page}{0}
\pagebreak

\ifsubmission
  \shorttrue
\else
  \shortfalse
\fi

\firsttrue

\section{Introduction}

One hot summer, a large electricity provider has a big problem: electricity
demand is in
danger of rising above generation capacity. Rather than resorting to rolling
brown-outs, however, the utility has the ability to remotely modulate the
air-conditioners of individual buildings. The utility hopes that it can
coordinate shut-offs so that nobody is ever uncomfortable---say, guaranteeing
that every apartment's air conditioner runs during some 10 minute interval of
every hour when someone is home---while keeping peak electricity usage under
control.

The utility's scheduling problem can be modeled as a capacitated max-weight
matching problem with a minimum service constraint: we want to match apartments
to 10-minute intervals of time such that a) the usage is below the max power
production capacity at all time, and b) every apartment is matched to at least
one interval in every hour in which it is occupied. However, there is a privacy
concern: the utility will now make decisions \emph{as a function of when
  customers report they are at home}, which is sensitive information. How can we
solve this problem so that no coalition of customers $j \neq i$ can learn about
the schedule of customer $i$?

There is also a question of incentives: customers may lie about when they are
home if doing so gives them better access to air-conditioning without requiring
that they pay more. Can we solve the scheduling problem while also producing
electricity prices (on an interval by interval basis) so that no customer
has significant incentive to misreport their values?  Since the prices will be
publicly available, they must also protect the privacy of the customer
schedules.

Thinking more generally, we can imagine many other similar distributed, private
optimization problems:

\begin{itemize}
  \item \emph{Privacy-preserving multi-commodity flow:~} Consider the
      problem of routing packets through a network to minimize the total
      cost subject to edge capacities. Here, the private data of each
      individual corresponds to her source-destination pair, which may be
      sensitive information. We would like to solve this problem so that no
      coalition of other individuals $j \neq i$ can learn about the
      source/destination pair of agent $i$. We also want to produce a
      price per edge so that no agent has significant incentive to misreport
      the valuations or source/destination pairs.

  \item \emph{Privacy-preserving multi-dimensional knapsack:~} Consider the
    problem of scheduling jobs on a cluster that has constraints on various
    kinds of resources (CPU cycles, disk space, RAM, GPU cycles, etc.).
    The jobs have different resource demands and values.  We wants to schedule
    the most valuable set of jobs subject to the resource constraints, without
    revealing potentially sensitive information about any individual agent's
    jobs to the other agents. We also want to set prices for each
    resource so that no agent has a significant incentive to misreport the value
    or resource demands of their jobs.
\end{itemize}

To solve these problems (and many others---see \Cref{sec:examples}) in one
stroke, we exploit a common structure of these problems: they are linear
programs (or more generally, convex programs) where both the input data and the
output are partitioned among the agents defining the problem. The algorithms
need to report to each agent her air conditioning schedule, or her route through
the network, or whether her job is scheduled on the cluster, but not the other
agent's portion of the output. An appropriate notion of privacy in this setting
is \emph{joint differential privacy} \citep{KPRU14}.

Informally, joint differential privacy requires the joint distribution on the
portion of the outputs given to agents $j \neq i$ to be insensitive to the
portion of the input provided by agent $i$---the ``combined view'' of the other
agents should not reveal much about the remaining agent. In contrast, the
standard differential privacy guarantee requires the \emph{entire} output must
be insensitive to any agent's input. Jointly differentially private algorithms
are known for a few problem (e.g., vertex cover and set cover \citep{GLMRT10},
synthetic data generation for data analysts who want privacy for their queries
\citep{DNV12, HRU13}, equilibrium computation in certain large games
\citep{KPRU14, RR14, CKRW14}), but these algorithms are custom-designed for each
problem, and do not easily generalize.

\subsection{Our Results and Techniques}

Our main contribution is a general technique for solving a large family of
convex optimization problems under joint differential privacy. Concretely,
consider any convex optimization problem that can be written in the following
\emph{linearly separable} form:
\[
\begin{aligned}
& \underset{x}{\textrm{maximize}} & & \textstyle \sum_{i=1}^n v^{(i)} (x^{(i)})
+ v^{(0)}(x^{(0)}) & & \\
& \textrm{subject to} & & x^{(i)} \in S^{(i)} & & (\text{for all } i) \\
& & & \textstyle \sum_{i=1}^n c_j^{(i)} (x^{(i)}) +c_j^{(0)} (x^{(0)}) \leq b_j
& & (\text{for all } j) .
\end{aligned}
\]

\noindent Here, $x^{(i)}$ denotes the variables that form agent $i$'s portion
of the output, for $i = 1, \dots, n$. We also allow data for a special ``agent
0'' to model auxiliary variables and constraints that don't depend on private
data. The functions $\{v^{(i)}\}_{i}$ in the objective are concave, while the
constraint functions $\{c_j^{(i)}\}_{i,j}$ are convex; both can depend on the
private data of agent $i$. The compact and convex sets $S^{(i)}$ model the
feasible region for a single agent; they naturally depend on the private data.

There are two types of constraints in the above convex program. The first
type of constraints ($x^{(i)} \in S^{(i)}$) involve \emph{only} the variables
of a single agent (for example, the flow conservation constraints between an
agent's private source and destination in the multi-commodity flow problem).
These are the ``easy'' constraints from a privacy perspective---if these were
the only kinds of constraints, then each agent could separately optimize her
portion of the objective subject to these easy constraints. This is trivially
jointly differentially private: an agent's output would be independent of the
other agents' data.


The second type of constraints involve variables from different agents (for
example, the capacity constraints in multi-commodity flow, or resource
constraints in multi-dimensional knapsack). These constraints are the
``hard'' constraints from the perspective of privacy in the sense that they
are the ones that couple different agents and require that the problem be
solved in a coordinated manner.

We give a general method for solving such problems so that we
\emph{approximately} optimize the objective and \emph{approximately} satisfy
the coupling constraints, while exactly satisfying the personal constraints,
all subject to joint differential privacy:

\begin{theorem*}[Informal, some important parameters missing]
  There is an $\epsilon$-jointly differentially private algorithm which can find
  a solution to linearly separable convex programs (1) exactly satisfying the
  personal constraints, (2) obtaining objective value at least $\OPT - \alpha$,
  and (3) guaranteeing that the sum violation over the $k$ coupling constraints
  is at most:
  \[
    \alpha \approx \tilde{O}\left(\frac{k\sigma}{\epsilon}\right)
  \]
  where $\sigma$ is a measure of the sensitivity of the convex program. For
  packing problems, we can also guarantee \emph{no} violation of the coupling
  constraints at a small additional cost in the objective. For a broad class of
  problems, we can also round to integral solutions with little additional loss.
\end{theorem*}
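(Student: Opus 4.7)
The plan is to reduce the jointly private convex program to a low-dimensional private optimization over the \emph{dual} variables associated with the $k$ coupling constraints, exploiting the fact that once those are fixed, the Lagrangian separates across agents. Concretely, introduce dual variables $\lambda \in \RR^{k}_{\geq 0}$ for the coupling constraints and form the Lagrangian
\[
  L(x, \lambda) \;=\; \sum_{i=0}^{n} v^{(i)}(x^{(i)}) \;-\; \sum_{j=1}^{k} \lambda_j \Bigl( \sum_{i=0}^{n} c_j^{(i)}(x^{(i)}) - b_j \Bigr).
\]
For a fixed price vector $\lambda$, maximizing $L(\cdot,\lambda)$ subject to the personal constraints $x^{(i)} \in S^{(i)}$ decouples into $n+1$ independent per-agent problems, each of which agent $i$ can solve locally using only her own data. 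So the whole question reduces to privately computing an approximately optimal $\lambda^\star$ that each agent can then best-respond to.

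Next, I would solve the dual minimization $\min_{\lambda \ge 0} \max_{x \in \prod_i S^{(i)}} L(x,\lambda)$ by a no-regret procedure on $\lambda$, e.g.\ projected subgradient descent or multiplicative weights over a bounded dual domain $\lambda_j \in [0,\Lambda]$. At each of $T$ rounds, I maintain the current $\lambda^{(t)}$, ask every agent to report her best-response contribution $c_j^{(i)}(x^{(i)}(\lambda^{(t)}))$, sum to get the noisy subgradient coordinate $\hat g_j^{(t)} = b_j - \sum_i c_j^{(i)}(x^{(i)}(\lambda^{(t)})) + \mathrm{Lap}(\sigma/\epsbase)$, and update $\lambda^{(t+1)}$. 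The output primal is the average $\bar x = \tfrac{1}{T}\sum_{t} x(\lambda^{(t)})$, and the published dual is $\bar\lambda = \tfrac{1}{T}\sum_t \lambda^{(t)}$; each agent then receives her coordinate $\bar x^{(i)}$.

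Privacy comes in two layers. First, the sequence of noisy sums $\{\hat g^{(t)}\}$ is $\epsilon$-differentially private in the standard sense: each coordinate has per-agent sensitivity $\sigma$, there are $k$ coordinates updated in each of $T$ rounds, and advanced composition gives total privacy cost $\approx \epsilon$ when $\epsbase \approx \epsilon / \sqrt{kT\log(1/\delta)}$. Second, each agent's output $\bar x^{(i)}$ is a deterministic function of the \emph{public} sequence of noisy gradients (equivalently, of $\bar\lambda$) together with her own data, so the ``billboard lemma'' of \citet{KPRU14} upgrades differential privacy of the dual trajectory to joint differential privacy of the primal output.

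For accuracy, standard no-regret analysis gives average regret $O(\Lambda \sqrt{k/T})$ in the noiseless case, and the injected Laplace noise contributes an additional $\tilde O(\Lambda k \sigma / \epsbase)$ per coordinate after averaging. Balancing $T$ and $\epsbase$ to minimize the sum of these errors and then translating the dual regret bound into a primal feasibility/optimality guarantee via strong duality (LP duality, or a minimax argument in the convex case) yields an additive loss on the objective of order $\alpha$ and total coupling-constraint violation $\tilde O(k\sigma/\epsilon)$, matching the claim. For packing problems, I would simply shrink the right-hand sides $b_j$ by $\alpha$ before running the algorithm, so that with high probability the noisy solution still satisfies the true constraints; the small loss in objective is controlled by Lipschitzness of $\OPT$ in $b$. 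Rounding to integers in the structured cases would proceed by a dependent rounding that respects the per-agent constraints coordinate-by-coordinate, using $\bar x^{(i)}$ as marginal probabilities.

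The main obstacle I anticipate is not the privacy accounting but the joint tuning of $T$, $\Lambda$, and the noise scale so that (i) the no-regret bound, (ii) the accumulated Laplace error, and (iii) the sensitivity-based privacy composition all line up to give the stated $\tilde O(k\sigma/\epsilon)$ violation without a hidden dependence on $n$. Bounding $\Lambda$ a priori---i.e.\ arguing that optimal dual prices lie in a box whose size does not scale with $n$---will be the most delicate step and likely requires a Slater-type assumption baked into the ``sensitivity $\sigma$'' parameter of the program.
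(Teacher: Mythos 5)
Your proposal follows essentially the same route as the paper: the partial Lagrangian decomposition, no-regret updates on the dual with noisy gradients plus per-agent primal best responses, the billboard lemma for joint privacy, constraint-tightening for packing programs, and randomized rounding from the best-response sequence. The only substantive divergences are cosmetic (the paper uses Gaussian noise with the $\ell_2$ Gaussian mechanism rather than per-coordinate Laplace composition, saving a $\sqrt{k}$ in the privacy accounting), and the delicate step you correctly anticipate---bounding the dual box---is handled in the paper by introducing a separate ``dual bound'' parameter $\tau$ (how much the objective can exceed $\OPT$ per unit of constraint violation) rather than folding it into $\sigma$; $\tau$ is one of the ``important parameters missing'' from the informal statement and multiplies the objective error.
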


Our technique is based on the \emph{dual decomposition} method in distributed
optimization.\footnote{%
  See \citet{dualdecomp} for an overview of this and related techniques.}
While the original motivation for such techniques was to solve large
optimization problems on distributed networks of computers, distributed solvers
are conceptually attractive from a privacy standpoint: they give a way to solve
problems in which data is distributed, while minimizing the amount of
communication necessary between players.

Specifically, we construct a ``partial Lagrangian'' by bringing the ``hard''
constraints into the objective, while letting the ``easy'' constraints continue
to constrain the primal feasible region. Doing this leaves us with the following
equivalent minimax problem:
\[
  \textstyle
  \underset{\lambda \ge 0}{\textrm{minimize}} \,\,\, \underset{x : x^{(i)} \in S^{(i)}}{\textrm{maximize}} \,\,\, \cL(x, \lambda) := \sum_{i} v^{(i)} (x^{(i)}) - \sum_j \lambda_j \left(\sum_i c_j^{(i)}(x_i) - b_j\right) .
\]
The partial Lagrangian can be viewed as the payoff of zero-sum game between the
primal player (the maximization player, controlling the $x$ variables) and the
dual player (the minimization player, controlling the $\lambda$ variables).
Equilibrium strategies $(x^*, \lambda^*)$ form the optimal primal and dual
solutions. Further, approximate equilibria correspond to approximately optimal
primal-dual solution pairs (in a sense we will make precise).

We then compute an approximate equilibrium of the above zero-sum game
under joint differential privacy. Without privacy, there are a variety
of methods to do this; we do the following: Repeatedly simulate play
of the game, letting one player (for us, the dual player) update his
variables using a no-regret algorithm (for us, gradient descent),
while letting the other player (for us, the primal player) repeatedly best
respond to his opponent's actions. The time averaged play of this
simulation is known to converge quickly to an approximate minimax
equilibrium of the game~\citep{FS96}.

In our case, the best-response problem for the primal player has a particularly
nice form. Crucially, the form of the Lagrangian allows us to compute a primal
best response by having each of the $n$ agents best respond \emph{individually},
without coordination: Given the current dual variables $\lambda^{(t)}_j$, agent
$i$ solves the problem
\[
\begin{aligned}
& \underset{x^{(i)}}{\textrm{maximize}} & & \textstyle v^{(i)} (x^{(i)}) - \sum_j \lambda^{(t)}_j c_j^{(i)}(x^{(i)}) \\
& \text{subject to} & & x^{(i)} \in S^{(i)} ,
\end{aligned}
\]
which is an optimization problem that is independent of the private data of
other players $j \neq i$. The combination of these individual agent best
responses forms a primal best response in the optimization problem, and the
time-averaged strategies of individual agents will form our near optimal primal
solution. Because the portion of the solution corresponding to agent $i$ was
generated in a method that was \emph{independent} of the private data of any
other agent $j \neq i$ (except through the dual variables), the primal solution
we compute will satisfy joint-differential privacy so long as the sequence of
plays computed by the dual player satisfies differential privacy.

All that remains is to give a privacy preserving implementation of the dual
player's updates. To do this, we use relatively standard techniques to implement
privacy preserving gradient descent, by adding Gaussian noise to the gradient of
the Lagrangian at each step.

Finally, consider the setting where the agents have preferences over their parts
of the final solution, and the objective of the convex program is to maximize
the total value of all agents (\emph{social welfare}). We can guarantee
asymptotic truthfulness at little additional cost to the approximation factor,
with only minor modifications to the algorithm.  By taking our approximately
optimal primal/dual solution pair and making a small modification (losing a bit
more in approximate feasibility), we can treat the dual variables as prices.
Then, every agent is allocated her favorite set of primal variables given her
constraints and the prices.\footnote{%
  The allocation and prices can also be seen as an \emph{Walrasian equilibrium}
  \citep{GS99}.}

In general, this property would not be enough to guarantee truthfulness: while
no agent would prefer a different solution at the given prices, agents may be
able to \emph{manipulate} the prices to their advantage (say, by lowering the
price on the goods they want).  However, in our case, the prices arise from the
dual variables, which are computed under differential privacy; agents can't
substantially manipulate them. As we show, this makes truth telling an
approximate dominant strategy for all players. In the ``large game'' setting
where the number of players $n$ increases, the amount any player can gain by
deviating will tend to $0$ if the constraints are also relaxed.


\subsection{Related Work}


Distributed optimization techniques date back to the 1950s, with the original
goal of solving large optimization problems with networks of computationally
limited machines. The area is very rich with mathematically elegant algorithms,
many of which work well in practice. 
The method we develop in this paper is based on a simple technique called dual
decomposition. For a more comprehensive overview, the reader can consult the
excellent survey by \citet{dualdecomp}.


Differential privacy emerged from a series of papers \citep{DN03,BDMN05,DMNS06},
culminating in the definition given by \citet{DMNS06}. This literature is far
too vast to summarize, but we refer the reader to \citet{DR14} for a textbook
introduction.

This paper fits into a line of work which seeks to solve problems (mostly, but
not exclusively optimization problems) which cannot be solved under the standard
constraint of differential privacy, but must be solved under a relaxation
(termed \emph{joint differential privacy} by \citet{KPRU14}) that allows the
portion of the solution given to agent $i$ to be sensitive in agent $i$'s data.
Our work is most related to \citet{HHRRW14}, who give an algorithm for
computing approximately max-weight matchings under joint differential
privacy.  This algorithm implements an ascending price auction using a
deferred-acceptance-esque algorithm and also uses prices (which are the dual
variables in the matching problem) to coordinate the allocation.
\citet{HHRRW14} use techniques rather specific to matchings, which do not
generalize even to the $k$-demand allocation problem (when agents have
general preferences, rather than ``gross substitute preferences''), and
certainly not to general convex optimization problems. In fact, simply
solving the allocation problem beyond gross substitutes valuations was stated
as the main open problem by \citet{HHRRW14}; our technique in particular
solves this problem as a special case. Our algorithm also gives approximate
truthfulness, unlike the one by \citet{HHRRW14}, which had to sacrifice
approximate truthfulness to improve the accuracy of their algorithm.

For related work in private optimization, \citet{HRRU14} consider how to solve
various classes of linear programs under (standard) differential privacy.  Their
work contains many negative results because many natural linear programs cannot
be solved under the standard constraint of differential privacy, while we give
broadly positive results (under the looser notion of joint differential
privacy).  \citet{NRS07} also consider some combinatorial optimization problems,
also under the standard constraint of differential privacy.

Our work is also related to the literature on solving private empirical risk
minimization problems, which involves \emph{unconstrained} convex minimization
subject to the standard differential privacy constraint
\citep{CMS11,JKT11,KST12,JT14,BST14}.  Many of these papers use privacy
preserving variants of gradient descent (and other optimization algorithms),
which we use as part of our algorithm  (for the \emph{dual} player).

The connection between differential privacy and approximate truthfulness was
first made by \citet{MT07}, and later extended in many papers (e.g., by
\citet{NST12} and \citet{CCKMV13}).  \citet{HK12} showed a generic (and
typically computationally inefficient) method to make welfare maximization
problems exactly truthful subject to (standard) differential privacy.  The
method of \citet{HK12}, and more generally any differentially private
algorithm, cannot solve most natural auction problems (in which each agent
receives some allocation) with non-trivial social welfare guarantees.
In contrast, we show a generic method of obtaining \emph{joint} differential
privacy and (approximate) truthfulness for any problem that can be posed as a
linearly separable convex program, which covers (the relaxations of)
combinatorial auctions as a special case.  Our method is also efficient
whenever the non-private version can be solved efficiently.

The connection between joint differential privacy and truthfulness is
more subtle than the connection with standard differential privacy. In order for a jointly differentially private
computation to be approximately truthful, it must in some sense be
computing an equilibrium of an underlying game \citep{KPRU14,RR14,
  CKRW14}. In our case, this corresponds to computing dual variables
which serve as Walrasian-equilibrium like prices. This also
circumvents the criticisms of (standard) differential privacy as a
solution concept, given by \citet{NST12,Xia13}---under joint
differential privacy, it is not the case that \emph{all} strategies
are approximate dominant strategies, even though truthful reporting
can be made so.

\section{Preliminaries}

\SUBSECTION{Differential Privacy Preliminaries} We represent the private data
of an agent as an element from some \emph{data universe} $\cU$. A
\emph{database} $D\in \cU^n$ is a collection of private data from $n$ agents.
Two databases $D, D'$ are \emph{$i$-neighbors} if they differ only in their
$i$-th index: that is, if $D_j = D_j'$ for all $j\neq i$. If two databases
$D$ and $D'$ are $i$-neighbors for some $i$, we say they are
\emph{neighboring
  databases}. We will be interested in randomized algorithms that
take a databases as input, and output an element from some abstract range $\cQ$.

\begin{definition}[\citet{DMNS06}]
A mechanism $\cM\colon \cU^n \rightarrow \cQ$ is $(\eps,
\delta)$-\emph{differentially private} if for every pair of
neighboring databases $D, D'\in \cU^n$ and for every subset of outputs
$S\subseteq \cQ$,
\[
\Pr[\cM(D) \in S] \leq \exp(\eps) \Pr[\cM(D') \in S] + \delta.
\]
\end{definition}

For the class of problems we consider, elements in both the domain and the
range of the mechanism are partitioned into $n$ components, one for each
agent. In this setting, \emph{joint differential privacy}~\citep{KPRU14} is a
more natural constraint: For all $i$, the \emph{joint} distribution on
outputs given to players $j\neq i$ is differentially private in the input of
agent $i$. Given a vector $x = (x_1, \ldots, x_n)$, we write $x_{-i} = (x_1,
\ldots, x_{i-1}, x_{i+1} , \ldots, x_n)$ to denote the vector of length
$(n-1)$ which contains all coordinates of $x$ except the $i$-th coordinate.

\begin{definition}[\citet{KPRU14}]
A mechanism $\cM \colon \cU^n \rightarrow \cQ^n$ is $(\eps,
\delta)$-\emph{jointly differentially private} if for every $i$, for
every pair of $i$-neighbors $D, D'\in \cU^n$, and for every subset of
outputs $S\subseteq \cQ^{n-1}$,
\[
\Pr[\cM(D)_{-i} \in S] \leq \exp(\eps) \Pr[\cM(D')_{-i} \in S] + \delta.
\]
\end{definition}
Note that this is still a very strong privacy guarantee; the mechanism preserves
the privacy of any agent $i$ against arbitrary coalitions of other agents. It
only weakens the constraint of differential privacy by allowing agent $i$'s
output to depend arbitrarily on her \emph{own} input.


We rely on a basic but useful way of guaranteeing joint differential
privacy---the \emph{billboard model}. Algorithms in the billboard model compute
some differentially private signal (which can be viewed as being visible on a
public billboard); then each agent $i$'s output is computed as a function only
of this private signal and the private data of agent $i$. The following lemma
shows that algorithms operating in the billboard model satisfy joint
differential privacy.

\begin{lemma}[\citet{HHRRW14, RR14}] \label{billboard}
  Suppose $\cM : \cU^n \rightarrow \cQ$ is $(\eps,
  \delta)$-differentially private. Consider any set of functions $f_i
  : \cU \times \cQ \rightarrow \cQ'$. Then the mechanism $\cM'$ that
  outputs to each agent $i$: $f_i(D_i, \cM(D))$ is $(\eps,
  \delta)$-jointly differentially private.
\end{lemma}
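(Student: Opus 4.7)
The plan is to reduce the claim to the standard post-processing property of $(\eps,\delta)$-differential privacy. The key observation is that an $i$-neighbor pair $D, D'$ agrees on $D_j$ for all $j \neq i$, so the only way agent $i$'s data can influence the vector of outputs handed to the other agents is through the private signal $\cM(D)$. Once this is recognized, the outputs given to the other agents are simply a deterministic function of $\cM(D)$ that does not depend on the disputed coordinate, and the result follows from post-processing.

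Concretely, first I would fix an index $i$ and a pair of $i$-neighbors $D, D' \in \cU^n$. Since $D_j = D_j'$ for every $j \neq i$, I can define a single post-processing map $g \colon \cQ \to (\cQ')^{n-1}$ by
\[
g(q) \;=\; \bigl( f_j(D_j, q) \bigr)_{j \neq i} \;=\; \bigl( f_j(D_j', q) \bigr)_{j \neq i}.
\]
The equality between the two expressions is the crux of the argument: it uses exactly the fact that $D$ and $D'$ are $i$-neighbors. By construction, $\cM'(D)_{-i} = g(\cM(D))$ and $\cM'(D')_{-i} = g(\cM(D'))$.

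Next, for any measurable $S \subseteq (\cQ')^{n-1}$, I would rewrite $\{\cM'(D)_{-i} \in S\} = \{\cM(D) \in g^{-1}(S)\}$ and apply the $(\eps,\delta)$-differential privacy of $\cM$ on the neighboring pair $D, D'$ to the event $g^{-1}(S) \subseteq \cQ$:
\[
\Pr[\cM'(D)_{-i} \in S] \;=\; \Pr[\cM(D) \in g^{-1}(S)] \;\le\; \exp(\eps)\,\Pr[\cM(D') \in g^{-1}(S)] + \delta \;=\; \exp(\eps)\,\Pr[\cM'(D')_{-i} \in S] + \delta.
\]
Since $i$ and the $i$-neighbor pair were arbitrary, this is exactly the definition of $(\eps,\delta)$-joint differential privacy.

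There is not really a hard obstacle here, but the one step that must be handled with care is making sure that the post-processing map $g$ does not itself depend on which of $D$ or $D'$ we are looking at; that is what allows a genuine post-processing argument rather than a group-privacy-style argument with a factor-of-two loss in the parameters. This is guaranteed precisely because joint differential privacy excludes agent $i$'s own output from the view being protected, so $g$ only needs to read $D_j$ for $j \neq i$, and those coordinates are identical in $D$ and $D'$.
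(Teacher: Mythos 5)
Your proof is correct and is essentially the standard argument: the paper itself states this lemma without proof (citing \citet{HHRRW14} and \citet{RR14}), and the proof given in those references is exactly your reduction to post-processing via the map $g(q) = (f_j(D_j,q))_{j\neq i}$, whose well-definedness rests on $D_j = D_j'$ for $j \neq i$. You correctly identify that independence of $g$ from the disputed coordinate is the crux, and your use of preimages is valid since the $f_j$ are deterministic.
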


\ifshort
We defer other privacy preliminaries and no-regret preliminaries to the full
version.
\else
We will also use the \emph{advanced composition theorem}, which shows how our
differential privacy parameters degrade under adaptive composition of
differentially private subroutines.
\begin{lemma}[\citet{dwork-composition}]\label{lem:composition}
Let $\cA\colon \cU^n \rightarrow \cQ^T$ be a $T$-fold adaptive
composition of $(\eps, \delta)$ differentially private mechanisms.
Then $\cA$ satisfies $(\eps', T\delta + \delta')$-differential privacy
for \[
\eps' = \eps \sqrt{2T\ln(1/\delta')} + T\eps(e^\eps - 1).
\]
In particular, for any $\eps \leq 1$, if $\cA$ is a $T$-fold adaptive
composition\footnote{%
  See~\citet{dwork-composition} for a more detailed
  discussion of $T$-fold adaptive composition.}
of $(\eps/\sqrt{8T\ln(1/\delta')} , \delta)$-differentially private
mechanisms, then $\cA$ satisfies $(\eps, T\delta +
\delta')$-differential privacy.
\end{lemma}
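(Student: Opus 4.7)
The plan is to prove this via the standard privacy loss random variable technique. For fixed neighboring databases $D, D'$ and any output transcript $o = (o_1, \ldots, o_T)$ of the adaptive composition, define the privacy loss
\[
L(o) = \ln \frac{\Pr[\cA(D) = o]}{\Pr[\cA(D') = o]} = \sum_{t=1}^T \ln \frac{\Pr[\cM_t(D) = o_t \mid o_{<t}]}{\Pr[\cM_t(D') = o_t \mid o_{<t}]} =: \sum_{t=1}^T L_t(o_{\le t}).
\]
A standard reduction shows that to establish $(\eps', T\delta + \delta')$-DP of $\cA$ it suffices to show $\Prob{o \sim \cA(D)}{L(o) > \eps'} \leq T\delta + \delta'$. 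So the proof reduces to a concentration statement about $\sum_t L_t$.

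First I would absorb the per-round $\delta$ slack by a ``good event'' argument: for each round $t$, $(\eps,\delta)$-DP of $\cM_t$ implies the existence of an event $E_t$ of probability at least $1-\delta$ (under both $\cA(D)$ and $\cA(D')$) on which $\cM_t$ behaves as an $(\eps,0)$-DP mechanism. Union bounding, on the intersection $E = \bigcap_t E_t$ (of probability at least $1-T\delta$) we may assume pure $\eps$-DP at every round, deferring the $T\delta$ leakage to the final $\delta$ parameter.

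The core step is concentration of $\sum_t L_t$ under $o \sim \cA(D)$, conditioned on $E$. On $E$, each $L_t$ satisfies $|L_t| \leq \eps$ almost surely. Moreover a short calculation (the KL-vs-max-divergence bound for $\eps$-DP distributions) yields $\Ex{o_t \sim \cM_t(D)\mid o_{<t}}{L_t} \leq \eps(e^\eps - 1)$. Therefore the centered sequence $M_t := L_t - \Ex{}{L_t \mid o_{<t}}$ is a martingale difference sequence with $|M_t| \leq 2\eps$, and Azuma--Hoeffding gives
\[
\Prob{o \sim \cA(D)}{\sum_{t=1}^T L_t > T\eps(e^\eps - 1) + \eps \sqrt{2T \ln(1/\delta')}} \leq \delta'.
\]
Combining with the $T\delta$ leakage from restricting to $E$ yields the claimed $(\eps', T\delta + \delta')$-DP bound with $\eps'$ as in the statement.

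The ``in particular'' form follows by routine algebra: with $\eps_0 := \eps/\sqrt{8T\ln(1/\delta')} \leq 1$, the inequality $\eps_0(e^{\eps_0}-1) \leq 2\eps_0^2$ gives $T\eps_0(e^{\eps_0}-1) \leq \eps^2/(4\ln(1/\delta')) \leq \eps/2$, while $\eps_0 \sqrt{2T\ln(1/\delta')} = \eps/2$, summing to the target $\eps$. The main obstacle I anticipate is handling \emph{adaptive} composition carefully: a priori the per-round privacy losses depend on previously observed outputs, so one must verify the martingale property of the centered losses under the output distribution on $\cA(D)$ (not on some fixed transcript), which is where the good-event reduction and the choice of filtration matter; everything else is concentration and bookkeeping.
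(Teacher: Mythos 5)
The paper does not actually prove this lemma---it is imported as a black box from \citet{dwork-composition}---so there is no in-paper argument to compare against; your sketch reconstructs the standard proof of the cited theorem (privacy-loss random variable, reduction of $(\eps',\delta'')$-DP to a tail bound on the loss, a good-event argument absorbing the per-round $\delta$'s, the bound $\EE[L_t \mid o_{<t}] \le \eps(e^\eps-1)$, then Azuma), and the overall structure is correct. Two points of bookkeeping. First, the Azuma step as written does not give the stated constant: from $|M_t| \le 2\eps$ the crude form of Azuma--Hoeffding yields a deviation of $\eps\sqrt{8T\ln(1/\delta')}$, not $\eps\sqrt{2T\ln(1/\delta')}$. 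To recover the claimed bound you should instead use that $L_t$ conditionally takes values in an interval of length $2\eps$ (namely $[-\eps,\eps]$) and apply the range form of Azuma--Hoeffding via Hoeffding's lemma, $\EE[e^{s M_t} \mid o_{<t}] \le e^{s^2\eps^2/2}$, which gives exactly the $\eps\sqrt{2T\ln(1/\delta')}$ term. Second, the ``good event'' step is slightly stronger than what $(\eps,\delta)$-DP literally provides; the precise statement needed is the approximation/coupling lemma (Lemma 3.17 of \citet{DR14}, or Lemma III.2 of \citet{dwork-composition}) saying that $(\eps,\delta)$-indistinguishable distributions are within statistical distance roughly $\delta$ of a pair that is $(\eps,0)$-indistinguishable; with that substitution the $T\delta$ term comes out as you describe. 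The ``in particular'' algebra is fine for $\eps \le 1$ using $e^x - 1 \le 2x$ on $[0,1]$ (with the innocuous implicit requirement $\ln(1/\delta') \ge \eps/2$ so that $T\eps_0(e^{\eps_0}-1) \le \eps/2$). These are constant-level repairs, not conceptual gaps; your approach is the right one and matches the source the paper cites.
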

Another basic tool we need is the
\emph{Gaussian Mechanism}, which releases private perturbations of vector valued
functions $f:\cU^n\rightarrow \mathbb{R}^k$ by adding Gaussian noise with scale
proportional to the $\ell_2$ sensitivity of the function $f$:
\[
\Delta_2(f) = \max_{\text{neighboring }D, D'\in \cU^n} \|f(D) -
f(D')\|_2.
\]
\begin{lemma}[\citet{DMNS06} (see e.g. \citet{DR14} for a proof)]
  \label{lem:l2-noise}
  Let $\eps, \delta \in (0,1)$ be arbitrary, $f\colon \cU^n
  \rightarrow \RR^k$ be a function with $\ell_2$-sensitivity
  $\Delta_2(f)$. Let $c$ be a number such that $c^2 \geq
  2\log(1.25/\delta)$, then the Gaussian Mechanism (which outputs $f(D) + Z$
  where $Z \sim \cN(0, \sigma^2)^k$ with $\sigma \geq c \Delta_2(f)/\eps$) is
  $(\eps, \delta)$-differentially private.
\end{lemma}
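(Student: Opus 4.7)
The plan is to reduce the $k$-dimensional problem to a one-dimensional Gaussian shift via rotational invariance of the spherical Gaussian, analyze the privacy loss random variable directly, and apply a Gaussian tail bound.

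First, I would fix neighboring databases $D, D'$ and set $u = f(D) - f(D') \in \RR^k$, so $\|u\|_2 \le \Delta_2(f)$. Since $Z \sim \cN(0, \sigma^2 I_k)$ is spherically symmetric, I can rotate coordinates so that $u$ is aligned with the first axis. Under this rotation the densities of $f(D)+Z$ and $f(D')+Z$ differ only in their first coordinate, with the same variance $\sigma^2$ in every direction, so the ratio of densities depends only on the first coordinate. Hence it suffices to prove the bound for scalar Gaussians $\cN(\mu, \sigma^2)$ vs.\ $\cN(\mu', \sigma^2)$ with $|\mu - \mu'| \le \Delta_2(f)$.

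Next I would analyze the \emph{privacy loss} random variable
\[
L(X) \;=\; \log\frac{p_{\cN(\mu, \sigma^2)}(X)}{p_{\cN(\mu', \sigma^2)}(X)} \;=\; \frac{(\mu - \mu')(2X - \mu - \mu')}{2\sigma^2},
\]
when $X \sim \cN(\mu, \sigma^2)$. Writing $X = \mu + \sigma W$ with $W \sim \cN(0,1)$ and $\Delta = \mu - \mu'$, this becomes $L = \frac{\Delta^2}{2\sigma^2} + \frac{\Delta}{\sigma} W$, which is itself Gaussian with mean $\frac{\Delta^2}{2\sigma^2}$ and standard deviation $\frac{|\Delta|}{\sigma}$. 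The standard reduction (e.g.\ \citet{DR14}) shows that if $\Pr[L(X) > \eps] \le \delta$, then for every measurable $S$ we have $\Pr[f(D)+Z \in S] \le e^\eps \Pr[f(D')+Z \in S] + \delta$, giving $(\eps,\delta)$-differential privacy.

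The remaining step is the tail bound: I need to choose $\sigma$ so that $\Pr_W\!\left[\frac{\Delta^2}{2\sigma^2} + \frac{\Delta}{\sigma} W > \eps\right] \le \delta$. Using $|\Delta| \le \Delta_2(f)$ and the Mills ratio bound $\Pr[W > t] \le \frac{1}{t\sqrt{2\pi}} e^{-t^2/2}$, the worst-case threshold becomes $t = \frac{\eps \sigma}{|\Delta|} - \frac{|\Delta|}{2\sigma}$, and plugging in $\sigma = c \Delta_2(f)/\eps$ reduces the requirement to an inequality of the form $\frac{1}{t\sqrt{2\pi}} e^{-t^2/2} \le \delta$ that is satisfied whenever $c^2 \ge 2\log(1.25/\delta)$. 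The main obstacle is getting the constant $1.25$ to come out correctly: the argument of the exponential gives $c^2/2$ cleanly, but the polynomial prefactor $1/(t\sqrt{2\pi})$ and the $|\Delta|/(2\sigma)$ correction term must be bounded uniformly, and tracking these constants carefully (using $\eps, \delta \in (0,1)$ to absorb lower-order terms) is what yields $1.25$ rather than a larger constant. Once the scalar tail bound is established, the rotational reduction transfers it to the $k$-dimensional setting without further loss, completing the proof.
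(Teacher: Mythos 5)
The paper does not prove this lemma itself---it imports it from \citet{DMNS06} with a pointer to \citet{DR14} for the proof---and your proposal is exactly that standard argument: reduction to one dimension by rotational invariance, the Gaussian privacy loss random variable, and the tail bound whose constant tracking produces the $1.25$. Your outline is correct and matches the cited proof, so there is nothing to add.
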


\SUBSECTION{Zero-Sum Games and No-Regret Dynamics}
Consider a two-player zero-sum game with payoff function $A:\cX \times \cY
\rightarrow \mathbb{R}$. The maximization player selects an action $x \in \cX$,
with the goal of maximizing $A(x,y)$. Simultaneously, the minimization player
selects an action $y \in \cY$ with the goal of minimizing $A(x,y)$.

We will consider games such that:
\begin{itemize}
  \item for all $x,y \in \cX\times\cY$ the payoff $A(x,y)$ is bounded;
  \item $\cX$ and $\cY$ are 
    closed, compact, convex sets;
  \item fixing $y \in \cY$, $A(x,y)$ is a concave function in $x$; and
  \item fixing $x \in \cX$, $A(x,y)$ is a convex function in $y$.
\end{itemize}

It is known that any such game has a \emph{value} $V$: the maximization
player has an action $x^* \in \cX$ such that $A(x^*, y) \geq V$ for all $y \in
\cY$, and the minimization player has action $y^*$ such that $A(x,y^*) \leq V$
for all $x \in \cX$~\citep{Kneser52}. We can define \emph{approximate} minimax
equilibria of such games as follows:

\begin{definition}
Let $\alpha \geq 0$. A pair of strategies $x\in \cX$ and $y\in \cY$ form
an $\alpha$-\emph{approximate minimax equilibrium} if
\[
A(x, y) \geq \max_{x'\in \cX} A(x', y) - \alpha \geq V - \alpha \quad \mbox{ and } \quad A(x,y)
\leq \min_{y'\in \cY} A(x, y') + \alpha \leq V + \alpha.
\]
\end{definition}
To compute an approximate minimax equilibrium, we will use the
following $T$-round no-regret dynamics: one player plays online
gradient descent~\citep{zinkevich} as an no-regret algorithm, while
the other player selects a \emph{best-response} action in each round.
We will let the min player be the no-regret learner, who produces a sequence of
actions $\{y_1, \ldots, y_T\}$ against max player's best responses $\{x_1,
\ldots, x_T\}$. For each $t\in[T]$,
\[
y_{t+1} =  \Pi_\cY \left[ y_t - \eta\cdot {\nabla_{y} A(x_t, y_t)}\right]
\qquad \mbox{ and } \qquad
x_{t} = \arg\max_x A(x, y_t),
\]
where $\Pi_\cY$ is the Euclidean projection map onto the set $\cY$:
$\min_{y'} \|y - y'\|_2$, $\eta$ is the step size.
In the end, denote the minimization player's regret as
\[
\cR_y \equiv \frac{1}{T}\sum_{t=1}^T A(x_t, y_t) - \frac{1}{T}\min_{y \in \cY}
\sum_{t=1}^T A(x_t, y) .
\]

Now consider the average actions for both players in this
dynamics: $\bar x = \frac{1}{T} \sum_{t=1}^T x_t$ and $\bar y =
\frac{1}{T} \sum_{t=1}^T y_t$. \citet{FS96} showed that the average
plays form an approximate equilibrium:
\begin{theorem}[\citet{FS96}]
 \label{thm:equil-restr}
 $(\bar{x}, \bar{y})$ forms a $\cR_y$-approximate minimax equilibrium.
\end{theorem}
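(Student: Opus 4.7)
The plan is to sandwich both $\max_{x'} A(x', \bar{y})$ and $\min_{y'} A(\bar{x}, y')$ around the average in-play payoff $\bar{A} := \frac{1}{T}\sum_{t=1}^{T} A(x_t, y_t)$, using convexity/concavity of $A$ together with, respectively, the max player's best-response property and the min player's no-regret guarantee. Once I establish $\max_{x'} A(x', \bar{y}) \leq \bar{A} \leq \min_{y'} A(\bar{x}, y') + \cR_y$, the definition of an $\cR_y$-approximate minimax equilibrium falls out of the trivial sandwich $\min_{y'} A(\bar{x}, y') \leq A(\bar{x}, \bar{y}) \leq \max_{x'} A(x', \bar{y})$.

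For the upper bound on $\max_{x'} A(x', \bar{y})$, I fix any $x' \in \cX$ and apply Jensen's inequality to the convex function $A(x', \cdot)$ to obtain $A(x', \bar{y}) \leq \frac{1}{T}\sum_t A(x', y_t)$. Taking $\max$ over $x'$ on both sides and using that $\max$ commutes with averaging only up to an inequality in the correct direction, $\max_{x'} \frac{1}{T}\sum_t f_t(x') \leq \frac{1}{T}\sum_t \max_{x'} f_t(x')$, yields $\max_{x'} A(x', \bar{y}) \leq \frac{1}{T}\sum_t A(x_t, y_t) = \bar{A}$, since $x_t = \argmax_{x} A(x, y_t)$ by the best-response step. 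Symmetrically, Jensen's inequality applied to the concave function $A(\cdot, y')$ gives $A(\bar{x}, y') \geq \frac{1}{T}\sum_t A(x_t, y')$ for every $y'$, and minimizing over $y'$ preserves the direction (if $f \geq g$ pointwise, then $\min f \geq \min g$), so $\min_{y'} A(\bar{x}, y') \geq \frac{1}{T}\min_{y'}\sum_t A(x_t, y') = \bar{A} - \cR_y$ directly from the definition of $\cR_y$.

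Combining the two bounds gives $\max_{x'} A(x', \bar{y}) - \min_{y'} A(\bar{x}, y') \leq \cR_y$, which together with the trivial sandwich yields both required inequalities; the additional comparison to the game value $V$ in the definition of an approximate minimax equilibrium follows from $\min_{y'} A(\bar{x}, y') \leq V \leq \max_{x'} A(x', \bar{y})$. I do not foresee any substantive obstacle: the proof is a clean application of Jensen's inequality twice plus an unpacking of the regret definition. The only thing requiring mild care is keeping track of the correct direction of each Jensen step---convexity in $y$ pushes $A(x', \bar{y})$ below the per-round average, while concavity in $x$ pushes $A(\bar{x}, y')$ above---so that the two bounds sandwich $\bar{A}$ on the correct sides for the final chaining to work.
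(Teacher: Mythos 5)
Your proof is correct and is exactly the standard Freund--Schapire argument: the paper itself states this result as a citation to \citet{FS96} without reproducing a proof, and your two Jensen steps (convexity in $y$ for the max player's bound, concavity in $x$ for the min player's bound) combined with the best-response property and the unpacking of $\cR_y$ are precisely how that reference establishes it. The directions of all inequalities, the trivial sandwich $\min_{y'} A(\bar{x}, y') \leq A(\bar{x}, \bar{y}) \leq \max_{x'} A(x', \bar{y})$, and the comparison to the value $V$ are all handled correctly.
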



In order to construct the approximate equilibrium privately, we also
develop a noisy and private version of the online gradient descent
algorithm. We defer details to~\Cref{sec:polo}.
\fi

\section{Private Dual Decomposition} \label{sec:general}

\ifshort\else
Let's consider the electricity scheduling problem, which will be our running
example as we present our algorithm. Suppose we have $n$ agents, who need power
for $T$ intervals. Each interval is subdivided into $Q$ slots, and agents have
different valuations $v \in [0, 1]$ for different slots. There is a maximum
amount of electricity $c$ available for each (interval, slot) pair. Finally,
agents demand some total amount of electricity $d_t \in [0, Q]$ during each
interval,
and at most $d_{max}$ in total over all intervals. The demand may be zero for
some intervals, say, if the agent is not home. Translating the description in
the introduction, we have the following linear program:
\[
\begin{aligned}
  & \textit{\rm{maximize}} & & \sum_{i = 1}^n \sum_{t = 1}^T \sum_{q = 1}^Q
  v^{(i)}_{tq} x^{(i)}_{tq} & & \\
  & \textit{\rm{subject to}} & & \sum_{i = 1}^n x^{(i)}_{tq} \leq c_{tq} & &
  (\textrm{for } t \in [T], q \in [Q]) \\
  & & & \sum_{q = 1}^Q x^{(i)}_{tq} \geq d^{(i)}_{t},
  \quad
  \sum_{q = 1}^Q \sum_{t = 1}^T x^{(i)}_{tq} \leq d_{max},
  \quad
  \text{and} \quad x^{(i)}_{tq} \in [0, 1]
  & & (\textrm{for } i \in [n], t \in [T], q \in [Q])\,.
\end{aligned}
\]%


We consider each agent's valuations $v^{(i)}$ and demands $d^{(i)}$ to be
private. Agent $i$ will receive variables $x^{(i)}$, which indicate when she is
getting electricity. With this in mind, notice that this LP has a particularly
nice structure: the objective and first constraints are sums of terms that (term
by term) depend only a single agent's data and variables, while the second
constraints only constrain a single agent's data.  This LP is an instance of
what we call \emph{linearly separable convex programs}.

\begin{definition} \label{def:sep-convex}
  Let the data universe be $\cU$, and suppose there are $n$ individuals. We
  map each database $D \in \cU^n$ to a \emph{linearly separable convex optimization
    problem} $\cO$, which consists of the following data: for each agent $i$,
  \begin{itemize}
    \item a compact convex set $S^{(i)} \subseteq \RR^l$,
    \item a concave objective function $v^{(i)} : S^{(i)} \rightarrow \RR$,
    \item and $k$ convex constraint functions $c^{(i)}_j : S^{(i)}
    \rightarrow \RR$ (indexed by $j = 1 \dots k$);
  \end{itemize}
  all defined by $D_i$. $\cO$ also includes the following data, independent of
  D:
  \begin{itemize}
    \item a compact convex set $S^{(0)} \subseteq \RR^l$,
    \item a concave objective function $v^{(0)} : S^{(0)} \rightarrow \RR$,
    \item $k$ convex constraint functions $c^{(0)}_j : S^{(0)}
      \rightarrow \RR$ (indexed by $j = 1 \dots k$),
    \item and a vector $b \in \RR^k$.
  \end{itemize}
  The convex optimization problem is:
  \[
  \begin{aligned}
    & \textit{\rm{maximize}} & & \sum_{i = 0}^n v^{(i)} (x^{(i)}) & & \\
    & \textit{\rm{subject to}} & & \sum_{i = 0}^n c^{(i)}_j(x^{(i)}) \leq b_j &
    & (\textrm{for } j = 1 \dots k) \\
    & & & x^{(i)} \in S^{(i)} & & (\textrm{for } i = 0 \dots n) .
  \end{aligned}
  \]
  We call the set of all such optimization problems the \emph{class of problems
    associated to $\cU$} or simply a \emph{class of problems}, if $\cU$ is
  unimportant.
\end{definition}
Intuitively, the variables, objective, and constraints indexed by $i$ belong to
agent $i$ for $i \in \{1,\ldots,n\}$.  The constraints and variables for $i = 0$
are shared---it is sometimes useful to have auxiliary variables in a convex
program that do not correspond to any player's part of the solution. We call
the first kind of constraints the \emph{coupling constraints}, since they
involve multiple agents' variables and data.  We call the second kind of
constraints the \emph{personal constraints}, since they only involve a single
agent's variables and data.  We write $S = S^{(0)} \times \dots \times S^{(n)}$
for the portion of the feasible region defined only by the personal constraints.

\begin{example*}
  The coupling constraints are the power supply constraints on each time slot,
  and the personal constraints are the agent's demand constraints for different
  intervals.
\end{example*}
\fi


\subsection{Algorithm}

\ifshort Our algorithm will solve linearly separable convex programs of the
sort we saw above. \fi To solve the convex program, we will work extensively
with the \emph{Lagrangian}:
\[
  \cL(x, \lambda) = \sum_{i = 0}^n v^{(i)} (x^{(i)}) -
  \sum_{j = 1}^k \lambda_j \left( \sum_{i = 0}^n c^{(i)}_j(x^{(i)}) - b_j \right) .
\]

\ifshort\else
For clarity of presentation, we will assume in the following that our
optimization problems are feasible and that the \emph{strong duality} condition
holds, i.e.,
\[
  \max_{x \in S} \min_{\lambda \in \RR^k_+} \cL(x, \lambda)
  =
  \min_{\lambda \in \RR^k_+} \max_{x \in S} \cL(x, \lambda).
\]
If our problems are not even approximately feasible, this will be easy to
detect, and the assumption that strong duality holds is without loss of
generality in our setting, since our goal is only to approximately satisfy the
coupling constraints.\footnote{%
  Strong duality is guaranteed in particular by \emph{Slater's condition}
  \citep{Slater50}: there exists some point $x \in S$ such that for all
  $j\in[k]$, $\sum_{i=0}^n c_j^{(i)}(x) < b_j$. If this is not already the case,
  it can be guaranteed by simply relaxing our constraints by a tiny amount.
  Since our solutions will already only \emph{approximately} satisfy the
  coupling constraints, this is without loss.}
\fi

%


We will interpret the Lagrangian objective as the payoff function of a zero-sum
game between a primal player (controlling the $x$ variables), and a dual player
(controlling the $\lambda$ variables). We will privately construct an
approximate equilibrium of this game by simulating repeated play of the game. At
each step, the primal player will best-respond to the dual player by finding $x$
maximizing the Lagrangian subject to the dual player's $\lambda$. The dual
player in turn will run a no-regret algorithm to update the $\lambda$ variables,
using losses defined by the primal player's choice of $x$.  By a standard result
about repeated play \preflong{thm:equil-restr}, the average of each player's
actions converges to an approximate equilibrium.

We will first detail how to privately construct this approximate equilibrium.
Then, we will show that an approximate equilibrium is an approximately optimal
primal-dual pair for the original problem; in particular, the primal player's
point will be approximately feasible and optimal. Throughout, let the problem
be $\cO = (S, v, c, b)$.
\ifshort
For lack of space, we will only be able to sketch the argument; details are
deferred to the extended version.
\fi

\ifshort\else
\begin{remark*}
  Before we begin, we want to clarify one point. We will sometimes say ``Agent
  $i$ plays \dots'' or ``Agent $i$ solves \dots''. These descriptions sound
  natural, but are slightly misleading: Our algorithms will not be online or
  interactive in any sense, and all computation is done by our algorithm, not by
  the agents. Agents will submit their private data, and will receive a single
  output. Our algorithm will \emph{simulate} the agent's behavior, be it
  selecting actions to play, or solving smaller optimization problems, or
  rounding.
\end{remark*}
\fi

Let's start with the primal player's best-response.  Rewriting the Lagrangian
and fixing $\lambda$, at each round, the primal player plays
\begin{equation} \label{eq:br}
  \argmax_{x \in S} \cL(x) = \sum_{i = 0}^n \left( v^{(i)} (x^{(i)}) -
  \sum_{j = 1}^k \lambda_j c^{(i)}_j(x^{(i)}) \right)
  + \sum_{j = 1}^k \lambda_j b_j ,
\end{equation}
which can be computed by solving
\[
   x^{(i)} \in BR^{(i)}\left(\cO, \{\lambda_j\}\right) :=  \argmax_{x \in S^{(i)}}
  \left( v^{(i)} (x) - \sum_{j = 1}^k \lambda_j c^{(i)}_j(x) \right)
\]
independently for each agent $i$, and combining the solutions to let $x :=
(x^{(1)}, \dots, x^{(n)})$.

For the dual player, we will use the online gradient descent algorithm due to
\citet{zinkevich} for a suitably chosen target set $\Lambda$, together with
noise addition to guarantee differential privacy. At each time step we feed in a
perturbed version of the loss vector $l \in \RR^k$ defined to be the gradient of
the Lagrangian with respect to $\lambda$:
\begin{mathdisplayfull}
  l_j := \frac{\partial \cL}{\partial \lambda_j} = \sum_{i = 0}^n
  c^{(i)}_j(x^{(i)}) - b_j .
\end{mathdisplayfull}
To obtain privacy properties, we further add Gaussian noise the to above gradient and update the dual variables according to the noisy gradients.

Putting everything together, our algorithm Private Dual Decomposition
($\PDD$) presented in~\Cref{alg:general} solves linearly separable convex
programs under joint differential privacy.

\begin{algorithm}
  \caption{Joint Differentially Private Convex Solver: $\PDD(\cO,
    \sigma, \tau, w, \eps, \delta, \beta)$}
  \label{alg:general}
  \begin{algorithmic}
    \STATE{{\bf Input}: Convex problem $\cO = (S, v, c, b)$ with $n$ agents and
      $k$ coupling constraints, gradient sensitivity bounded by
      $\sigma$, a dual bound $\tau$, width bounded by $w$, and privacy
      parameters $\eps > 0,
      \delta \in (0, 1),$ confidence parameter $\beta \in (0, 1)$.}
    \STATE{{\bf Initialize}:
      \begin{mathpar}
        \lambda^{(1)}_j := 0 \text{ for } j \in [k] ,
        \and T := w^2 ,
        \and \eps' := \frac{\eps \sigma}{\sqrt{8T\ln(2/\delta)}} ,
        \and \delta' := \frac{\delta}{2T} ,
        \and \eta := \frac{2\tau}{\sqrt{T}\left( w + \frac{1}{\eps'} \log
            \left( \frac{Tk}{\beta} \right) \right)} ,
        \and \Lambda := \{ \lambda \in \RR^k_+ \mid \| \lambda \|_\infty \leq 2
        \tau \} .
      \end{mathpar}
    }
    \STATE{{\bf for iteration} $t = 1 \dots T$}
    \INDSTATE[1]{{\bf for each } agent $i = 0 \dots n$}
    \INDSTATE[2]{Compute personal best response:
      \begin{mathdisplayfull}
        x^{(i)}_t := \argmax_{x \in S^{(i)}} v^{(i)}(x) - \sum_{j = 1}^k
        \lambda^{(t)}_j c^{(i)} (x) .
      \end{mathdisplayfull}
    }
    \INDSTATE[1]{{\bf for each } constraint $j = 1 \dots k$}
    \INDSTATE[2]{Compute noisy gradient:
      \begin{mathdisplayfull}
        \hat \ell^{(t)}_j := \left( \sum_{i = 0}^n c^{(i)}(x^{(i)}_t) \right) -
        b_j + \cN\left(0, \frac{2 \sigma^2 \log \left( 1.25/\delta'
            \right)}{\eps'^2} \right),
      \end{mathdisplayfull}
    }
    \INDSTATE[1]{Do gradient descent update:
    \begin{mathdisplayfull}
      \lambda^{(t+1)} := \Pi_\Lambda \left( \lambda^{(t)} + \eta \hat \ell^{(t)}
      \right) .
    \end{mathdisplayfull}
    }
    \STATE{{\bf Output}:
        $\bar{x}^{(0)} := \frac{1}{T}\sum_{t=1}^T
        x^{(0)}_t$ and $\bar{\lambda} := \frac{1}{T}
        \sum_{t=1}^T \lambda^{(t)}$ to everyone,
        $\bar{x}^{(i)} := \frac{1}{T}\sum_{t=1}^T x^{(i)}_t$ to agents $i \in [n]$ .
      }
  \end{algorithmic}
\end{algorithm}

 \subsection{Privacy}

\ifshort\else
Next, we establish the privacy properties of \PDD.
We will first argue that the dual variables $\lambda$ satisfy (standard) differential privacy, because the algorithm adds Gaussian noise to the gradients.
Then, we will argue that the primal solution satisfies joint differential
privacy, because each agent's best-response depends only on her own private data
and the dual variables.

First, we define neighboring convex problems.
\begin{definition} \label{def:neighboring}
  Let $D, D' \in \cU^n$ be two neighboring databases. We say the associated
  convex programs $\cO, \cO'$ are \emph{neighboring} problems.
\end{definition}
\fi

By looking at how much the gradient $l_j$ may change in neighboring instances,
we can determine how much noise to add to ensure differential privacy.

\begin{definition}
  A problem $\cO$ has \emph{gradient sensitivity} bounded by $\sigma$ if
  \begin{smalldisplay}
    \max \sum_{j = 1}^k \left| c^{(i)}_j(x^{(i)}) - c'^{(i)}_j(x'^{(i)})
    \right|^2 \leq \sigma^2 ,
  \end{smalldisplay}
  where the maximum is taken over agents $i$, dual variables $\{\lambda_j\}
  \subseteq \RR^k_+$, neighboring problems $\cO$ and $\cO'$,
  and
  \begin{mathdisplayfull}
    x^{(i)} \in BR^{(i)}\left(\cO, \{\lambda_j\} \right)
    \longquad \text{and} \longquad
    x'^{(i)} \in BR^{(i)}\left(\cO', \{\lambda_j\} \right) .
  \end{mathdisplayfull}
\end{definition}


\ifshort\else
\begin{example*}
  We can bound the gradient sensitivity of the electricity scheduling LP. By
  changing her private data, an agent may change her demand vector by at most
  $d_{max}$ in each of $T$ time slots. Since the coupling constraints are simply the
  total demand over all agents for each time slot, the gradient sensitivity is
  at most $\sigma = 2\sqrt{d_{max}}$.
\end{example*}

The gradient sensitivity $\sigma$ is the key parameter for guaranteeing privacy.
By definition, it is a bound on the $\ell_2$ sensitivity of the gradient vector
$l$. By~\Cref{lem:l2-noise}, releasing the noisy vector $\hat l$ by adding
independent Gaussian noise drawn from the distribution $\cN(0, 2
\sigma^2\log(1.25/\delta)/\eps^2)$ to each coordinate satisfies $(\eps,
\delta)$-(standard) differential privacy.
\fi

Thus, the following  theorem shows privacy of the dual variables in
$\PDD$.

\begin{theorem} \label{thm:dual-priv}
  Let $\epsilon > 0, \delta \in (0, 1/2)$ be given. The sequence of dual
  variables $\lambda^{(1)}, \dots, \lambda^{(T)}$ and the public variables
  $x^{(0)}_1, \dots, x^{(0)}_T$ produced by $\PDD$ satisfy $(\epsilon,
  \delta)$-differential privacy.
\end{theorem}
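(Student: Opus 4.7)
The plan is to reduce the claim to showing that the sequence of noisy gradients $(\hat\ell^{(1)},\dots,\hat\ell^{(T)})$ released by $\PDD$ is itself $(\epsilon,\delta)$-differentially private, and then appeal to post-processing. The key observation is that every quantity in the claim can be reconstructed from the noisy gradients together with publicly available data: $\lambda^{(t+1)}$ is a deterministic Euclidean-projected gradient-step update of $\lambda^{(t)}$ using $\hat\ell^{(t)}$, and $x_t^{(0)}$ is the best-response of the ``public agent'' $i=0$, whose objective and constraint functions do not depend on any $D_i$. So by induction on $t$ the pair $(\lambda^{(t)},x_t^{(0)})$ is a function of $(\hat\ell^{(1)},\dots,\hat\ell^{(t-1)})$ and the public data only, and post-processing finishes the reduction.

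Next I would analyze a single round using the Gaussian mechanism. Fix neighboring problems $\cO,\cO'$ that differ in the data of some agent $i^\star$, and couple the two executions so that the Gaussian noise draws at all rounds are identical. I would argue inductively that if $\lambda^{(t)}$ is the same in both executions, then the only difference between the noiseless gradients $\ell^{(t)}$ and $\ell'^{(t)}$ comes from agent $i^\star$'s contribution: for every other $i$, $x^{(i)}_t\in BR^{(i)}(\cO,\{\lambda_j^{(t)}\})$ equals its analogue under $\cO'$ because the local optimization problem is identical, and agent $0$'s term is also unchanged. The gradient-sensitivity hypothesis then bounds $\|\ell^{(t)}-\ell'^{(t)}\|_2\le \sigma$. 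Plugging the per-coordinate variance $2\sigma^2\log(1.25/\delta')/\epsilon'^2$ into the Gaussian mechanism lemma yields that, conditional on the history, releasing $\hat\ell^{(t)}$ is $(\epsilon',\delta')$-differentially private.

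Finally I would stitch the rounds together with advanced composition. The trajectory $(\hat\ell^{(1)},\dots,\hat\ell^{(T)})$ is a $T$-fold adaptive composition of $(\epsilon',\delta')$-DP mechanisms, so the composition lemma yields $(\epsilon^\star, T\delta'+\delta^\star)$-DP with $\epsilon^\star = \epsilon'\sqrt{2T\ln(1/\delta^\star)} + T\epsilon'(e^{\epsilon'}-1)$. Choosing $\delta^\star=\delta/2$ and substituting $\epsilon'$ and $\delta'=\delta/(2T)$ as defined in the algorithm, one checks that $\epsilon^\star\le\epsilon$ and $T\delta'+\delta^\star=\delta$ (using $\epsilon'\le 1$, which holds for the parameter regime in which the algorithm is invoked). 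Post-processing then delivers $(\epsilon,\delta)$-DP of $(\lambda^{(1)},\dots,\lambda^{(T)})$ together with $(x_0^{(1)},\dots,x_0^{(T)})$.

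The main obstacle is the per-round sensitivity step, i.e.\ making the coupling argument rigorous. One has to be careful that each $x_t^{(i)}$ used inside the gradient depends only on agent $i$'s private data and on the \emph{previous} dual variable, so that conditioned on the past noise the two executions genuinely differ only through the $i^\star$-coordinate of the gradient; this is what lets the gradient-sensitivity parameter $\sigma$ play the role of $\Delta_2(f)$ in the Gaussian-mechanism lemma. Once this is in place the rest is bookkeeping with the composition parameters.
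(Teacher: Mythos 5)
Your proposal is correct and follows the same route as the paper, which proves this theorem by exactly the combination you describe: per-round Gaussian-mechanism privacy of the noisy gradient (with the gradient-sensitivity parameter $\sigma$ serving as the $\ell_2$-sensitivity bound), advanced composition over the $T$ rounds, and post-processing to recover $\lambda^{(t)}$ and $x^{(0)}_t$ from the released gradients and public data. The paper states this in one line ("By \Cref{lem:l2-noise} and \Cref{lem:composition}"); your write-up simply fills in the coupling and bookkeeping details that the paper leaves implicit.
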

\ifshort\else
\begin{proof}
 By \Cref{lem:l2-noise} and \Cref{lem:composition}.
\end{proof}
\fi

To show joint differential privacy of the primal variables, note that agent
$i$'s best response function $BR^{(i)} (\cO, \{ \lambda_j \} )$ (defined in
\Cref{eq:br}) is a function of $i$'s personal data and the current dual
variables $\lambda_j$, which satisfy standard differential privacy by
\Cref{thm:dual-priv}. So, we can use the \emph{billboard lemma}
(\Cref{billboard}) to show the sequence of best-responses satisfies
\emph{joint}-differential privacy.

\begin{theorem} \label{thm:primal-jdp}
  Let $\epsilon > 0, \delta \in (0, 1/2)$ be given. Releasing the sequence of
  private variables $x^{(i)}_1 \dots x^{(i)}_T$ to agent $i$ satisfies
  $(\epsilon, \delta)$-joint differential privacy.
\end{theorem}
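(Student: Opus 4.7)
The plan is to apply the billboard lemma (\Cref{billboard}), using the differentially private dual trajectory as the public signal. The key structural observation, already baked into \PDD, is that information flows between agents \emph{only} through the dual variables $\lambda^{(t)}$: once these are fixed, each agent's primal update is an independent computation that touches only her own private data.

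Concretely, I would proceed as follows. First, invoke \Cref{thm:dual-priv} to establish that the full sequence $\cM(D) := \bigl((\lambda^{(t)})_{t=1}^T,\,(x^{(0)}_t)_{t=1}^T\bigr)$ is $(\epsilon,\delta)$-differentially private as a function of the database $D$. This sequence plays the role of the ``billboard''. Next, for each agent $i\in[n]$, define the post-processing map
\begin{mathdisplayfull}
  f_i(D_i, \cM(D)) := \bigl(x^{(i)}_t\bigr)_{t=1}^T, \quad \text{where } x^{(i)}_t = \argmax_{x \in S^{(i)}} v^{(i)}(x) - \sum_{j=1}^k \lambda^{(t)}_j\, c^{(i)}_j(x),
\end{mathdisplayfull}
with any fixed deterministic tie-breaking rule. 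By \Cref{def:sep-convex}, the set $S^{(i)}$ and the functions $v^{(i)}, c^{(i)}_1, \dots, c^{(i)}_k$ are determined entirely by $D_i$, while $\lambda^{(t)}$ is read off the billboard $\cM(D)$. Hence $f_i$ is a well-defined function of $D_i$ and $\cM(D)$ alone, exactly matching the form required by \Cref{billboard}.

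Applying the billboard lemma to $\cM$ and the functions $\{f_i\}_{i\in[n]}$ immediately yields that the mechanism that releases $(x^{(i)}_t)_{t=1}^T$ to each agent $i$ is $(\epsilon,\delta)$-jointly differentially private, which is the statement we want. Since the algorithm's actual output to agent $i$, the average $\bar{x}^{(i)} = \frac{1}{T}\sum_t x^{(i)}_t$, is a deterministic post-processing of the sequence, joint differential privacy transfers to the true output as well. I do not foresee any genuine obstacle: the heavy lifting has already been done in \Cref{thm:dual-priv} (privacy of the dual trajectory via Gaussian noise plus advanced composition) and in the architectural choice to make each primal best response depend only on the agent's own data and on $\lambda^{(t)}$; the present theorem is essentially the packaging of those facts through \Cref{billboard}.
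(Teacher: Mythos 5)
Your proposal is correct and follows exactly the paper's own argument: the paper proves \Cref{thm:primal-jdp} by citing \Cref{thm:dual-priv} together with the billboard lemma (\Cref{billboard}), treating the differentially private dual trajectory (and public variables) as the billboard and each agent's best-response map as the post-processing function $f_i$ of her own data and that signal. You have simply spelled out the same one-line proof in more detail, including the observation that the averaged output is a further post-processing, so there is nothing to add.
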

\ifshort\else
\begin{proof}
  By \Cref{thm:dual-priv} and \Cref{billboard}.
\end{proof}
\fi

\subsection{Accuracy}

Now, let us turn to accuracy.
\ifshort\else
We first argue that the exact equilibrium of the
game in which we restrict the dual player's strategy space corresponds to an
optimal primal-dual pair for the original game. While the original game allows
the dual variables to lie anywhere in $\RR^k_+$, we need to restrict the dual
action space to a bounded subset $\Lambda$, in order to use gradient descent. We
first show that if $\Lambda$ is a large enough set, restricting the dual player
to play in $\Lambda$ will not change the equilibrium strategy and value of the
game.  We next show that \PDD computes an approximate equilibrium
of the Lagrangian game. Finally, we show that the approximate equilibrium
strategy of the primal player must be an approximately feasible and optimal
point of the original convex program.

For the first step, we observe that the optimal primal and dual solutions $(x^*,
\lambda^*)$ achieve the value of the unrestricted game, which is the optimal
objective value $\OPT$ of the original convex program.

\begin{lemma} \label{lem:opt-value}
  Let $(x^*, \lambda^*)$ achieve
  \[
    \argmax_{\lambda \in \RR^k_+} \min_{x \in S} \cL(x, \lambda) .
  \]
  Then,
  \begin{itemize}
    \item $x^*$ is a feasible solution to the original convex program, and
    \item  $\cL(x^*, \lambda^*) = \OPT$, the optimal objective value of the original convex program.
  \end{itemize}
\end{lemma}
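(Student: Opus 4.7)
The plan is to deduce the lemma directly from the standard Lagrangian duality machinery, interpreting $(x^*, \lambda^*)$ as an optimal primal--dual pair (equivalently, a saddle point of $\cL$) as described in the paragraph preceding the lemma statement.

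First I would show
\[
\max_{x \in S}\,\min_{\lambda \in \RR_+^k}\,\cL(x,\lambda) \;=\; \OPT .
\]
For fixed $x \in S$, the function $\cL(x,\cdot)$ is linear in $\lambda$ with coefficient $-(\sum_{i=0}^n c_j^{(i)}(x^{(i)}) - b_j)$ on $\lambda_j$. Hence the inner minimum equals $\sum_{i=0}^n v^{(i)}(x^{(i)})$ and is attained at $\lambda = 0$ whenever every coupling constraint holds at $x$; if some constraint is violated, the minimum is $-\infty$, attained by driving the offending $\lambda_j$ to $+\infty$. The outer maximum therefore ranges effectively over primal-feasible $x$, giving exactly $\OPT$.

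Next, I would invoke the strong duality assumption recorded earlier (justified by Slater's condition) to obtain $\max_{x \in S}\min_{\lambda \in \RR_+^k} \cL = \min_{\lambda \in \RR_+^k}\max_{x \in S} \cL$. A saddle point $(x^*, \lambda^*)$ attains both sides simultaneously, so $\cL(x^*,\lambda^*) = \OPT$; this gives the second conclusion. For the first conclusion, I would use the saddle-point inequality $\cL(x^*, \lambda) \geq \cL(x^*, \lambda^*)$ for all $\lambda \in \RR_+^k$. Since $\cL(x^*,\cdot)$ is linear in $\lambda$, being bounded below on the non-negative orthant forces each coefficient $-(\sum_{i=0}^n c_j^{(i)}(x^{*(i)}) - b_j)$ to be non-negative; equivalently, $\sum_{i=0}^n c_j^{(i)}(x^{*(i)}) \leq b_j$ for every $j$. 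Combined with the personal constraints $x^{*(i)} \in S^{(i)}$ (implied by $x^* \in S$), this shows that $x^*$ is feasible for the original convex program.

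No substantive obstacle is expected: the lemma is a textbook consequence of strong duality together with the saddle-point characterization. The only delicate point is extracting primal feasibility from boundedness-below of $\cL(x^*,\cdot)$ on $\RR_+^k$, where the non-negativity of $\lambda$ plays the key role (a two-sided linear function would instead force the coefficients to vanish, yielding the stronger complementary slackness condition, which we do not need here).
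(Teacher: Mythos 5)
Your proof is correct and takes the same route as the paper, whose entire proof is the one line ``Follows directly from strong duality of the convex problem''; you have simply filled in the standard details (the inner minimization collapsing to $\OPT$ over feasible points, and primal feasibility of $x^*$ extracted from boundedness-below of the affine function $\cL(x^*,\cdot)$ on $\RR_+^k$). Your reading of the slightly garbled $\argmax_{\lambda}\min_{x}$ in the statement as ``$(x^*,\lambda^*)$ is the optimal primal--dual/saddle pair'' matches the paper's intent as signalled in the surrounding text.
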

\begin{proof}
  Follows directly from strong duality of the convex problem.
\end{proof}

We now reason about the \emph{restricted} game, in which the dual player plays
in a subset $\Lambda \subseteq \RR^k_+$.
\fi
We first define a key parameter for the
accuracy analysis, which measures how much the objective can be improved beyond
$\OPT$ by \emph{infeasible} solutions, as a function of how much the infeasible
solution violates the constraints.

\begin{definition}
  Consider all instances $\cO = (S, v, c, b)$ in a class of problems. We call
  $\tau > 0$ a \emph{dual bound} for the class if for all $\delta \geq 0$, $i$,
  and $x \in S$ such that
  \[
    \sum_{i = 0}^n \sum_{j = 1}^k \left( c^{(i)}_j (x^{(i)}) - b_j \right)_+
    \leq \delta ,
    \qquad \text{we have} \qquad
    \sum_{i = 0}^n v^{(i)} (x^{(i)}) \leq \OPT(\cO) + \tau \delta ,
  \]
  where $\OPT(\cO)$ is the optimum objective for $\cO$. (We will frequently
  elide $\cO$, and just say $\OPT$ when the convex program is clear.)
\end{definition}

\ifshort\else
\begin{example*}
  The coupling constraints are power supply constraints. Violating these
  constraints by $\delta$ in total will increase the objective by at most
  $\delta v_{max} \leq \delta$, so $\tau = 1$ is a dual bound for this problem.
\end{example*}

Intuitively, the dual bound indicates how much the objective can increase
beyond the optimal value by slightly violating the feasibility constraints. It
will control how large the dual action space must be, in order to discourage the
primal player from playing an infeasible point at equilibrium. More precisely,
we can show that the game with dual actions restricted to $\Lambda$ still has
the optimal and feasible point as the equilibrium strategy for the primal
player, if $\Lambda$ is large enough.

\begin{lemma} \label{lem:restrict}
  Suppose $\tau$ is a dual bound for a class of convex optimization problems.
  Then if we restrict the dual action space to be $\Lambda = \{ \lambda \in
  \RR^k_+ \mid \| \lambda \|_2 \leq 2 \tau \sqrt{k} \}$, there is a dual strategy
  $\lambda^\bullet \in \Lambda$ such that $(x^*, \lambda^\bullet)$ is an
  equilibrium of the restricted game.
\end{lemma}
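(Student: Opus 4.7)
The plan is to exhibit $\lambda^\bullet$ as a dual optimum of the restricted minimax game and then verify directly that $(x^*, \lambda^\bullet)$ is a saddle point. The dual bound $\tau$ is the crucial ingredient: it certifies that restricting $\lambda$ to $\Lambda$ does not raise the game's value above $\OPT$.

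First I would show that the restricted game has value $\OPT$, i.e.\ that $\max_{x \in S} \min_{\lambda \in \Lambda} \cL(x, \lambda) = \OPT$. For the ``$\geq$'' direction, plugging in $x = x^*$ and $\lambda = 0 \in \Lambda$ gives $\cL(x^*, 0) = \sum_i v^{(i)}(x^{(i)*}) = \OPT$ by feasibility of $x^*$. For ``$\leq$'', fix any $x \in S$ and let $u_j := \sum_i c^{(i)}_j(x^{(i)}) - b_j$. Since $\cL(x, \cdot)$ is linear in $\lambda$, the inner minimum is a dual-norm calculation:
\begin{equation*}
\min_{\lambda \in \Lambda} \cL(x, \lambda)
= \sum_i v^{(i)}(x^{(i)}) - \max_{\lambda \in \Lambda} \langle \lambda, u \rangle
= \sum_i v^{(i)}(x^{(i)}) - 2\tau\sqrt{k}\, \| u_+ \|_2 ,
\end{equation*}
where $u_+$ denotes the componentwise positive part. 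Combining $\| u_+ \|_2 \geq \| u_+ \|_1 / \sqrt{k}$ with the dual-bound inequality $\sum_i v^{(i)}(x^{(i)}) \leq \OPT + \tau \| u_+ \|_1$ (applied with total violation $\delta = \| u_+ \|_1$), the right-hand side is at most $\OPT + \tau\|u_+\|_1 - 2\tau\|u_+\|_1 \leq \OPT$.

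Next I would invoke Sion's minimax theorem — applicable since $\cL$ is concave in $x$, convex (indeed linear) in $\lambda$, and both $S, \Lambda$ are compact convex — to conclude $\min_{\lambda \in \Lambda} \max_{x \in S} \cL(x, \lambda) = \OPT$ as well, attained by some $\lambda^\bullet \in \Lambda$. Finally, I would verify the saddle point property by sandwiching
\begin{equation*}
\OPT = \min_{\lambda \in \Lambda} \cL(x^*, \lambda) \leq \cL(x^*, \lambda^\bullet) \leq \max_{x \in S} \cL(x, \lambda^\bullet) = \OPT ,
\end{equation*}
using feasibility of $x^*$ on the left and the choice of $\lambda^\bullet$ on the right. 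All three quantities equal $\OPT$, so $x^*$ attains the max of $\cL(\cdot, \lambda^\bullet)$ over $S$ while $\lambda^\bullet$ attains the min of $\cL(x^*, \cdot)$ over $\Lambda$, which are exactly the two saddle-point conditions.

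The main technical obstacle is the upper bound in the first step: one has to line up the $\ell_2$ radius of $\Lambda$ with the aggregate ($\ell_1$) dual bound via the standard $\sqrt{k}$ norm inequality — this is precisely why the radius of $\Lambda$ must scale as $2\tau\sqrt{k}$ rather than $2\tau$.
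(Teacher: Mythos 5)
Your proposal is correct and follows essentially the same route as the paper: show the restricted game still has value $\OPT$ by using the dual bound to penalize any constraint violation, invoke a minimax theorem to obtain $\lambda^\bullet$, and verify the saddle-point conditions. The only (cosmetic) difference is that the paper exhibits an explicit dual deviation with entries in $\{0, 2\tau\}$ on the violated constraints, whereas you compute the inner minimum exactly as $2\tau\sqrt{k}\,\|u_+\|_2$ and pass to $\|u_+\|_1$ via the standard norm inequality — both arguments hinge on the same fact that the $\ell_2$ radius $2\tau\sqrt{k}$ accommodates a penalty of $2\tau$ per unit of aggregate violation.
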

\begin{proof}
  By strong duality, there exists $(x^*, \lambda^*)$ an equilibrium of the
  Lagrangian game with value $\OPT$. Played against any strategy in $\Lambda$,
  $x^*$ gets value at least $\OPT$ since it is an optimal, feasible solution of
  the convex program. We first show that restricting the dual player's action
  set to $\Lambda$ leaves the value of the game at $\OPT$.

  Consider any other primal action $x$. If it doesn't violate any coupling
  constraints, then the dual player can set all dual variables to $0$. Thus, $x$
  has payoff at most $\OPT$ in the worst case over all dual player strategies in
  $\Lambda$.

  On the other hand, suppose $x$ violates some constraints:
  \[
    \delta := \sum_{j = 1}^k \left( \sum_{i = 0}^n c^{(i)}_j (x^{(i)}) - b_j
    \right)_+ > 0 .
  \]
  We can construct a dual player action $\lambda' \in \Lambda$ to give the
  primal player payoff strictly less than $\OPT$:
  \begin{enumerate}
    \item For constraints $j$ where $x$ violates the constraints
      \[
        \sum_{i = 0}^n c^{(i)}_j (x^{(i)}) > b_j ,
      \]
      set $\lambda_j' = 2\tau$.
    \item For constraints $j$ where $x$ satisfies the constraint:
      \[
        \sum_{i = 0}^n c^{(i)}_j (x^{(i)}) \leq b_j ,
      \]
      set $\lambda_j' = 0$.
  \end{enumerate}
  Note that $\lambda'$ is a valid dual action in the restricted game, since
  $\lambda' \in \Lambda$. Now, let's bound the payoff $\cL(x, \lambda')$ by
  comparing it to $\cL(x^*, \lambda^*)$. By assumption, the objective term
  increases by at most $\tau \delta$. While $\cL(x^*, \lambda^*)$ has no penalty
  since all constraints are satisfied, $\cL(x, \lambda')$ has penalty $2 \tau
  \delta$ since there is $\delta$ total constraint violation.  Thus,
  \[
    \cL(x, \lambda') \leq \cL(x^*, \lambda^*) - 2 \tau\delta + \tau \delta <
    \OPT.
  \]
  Thus, any infeasible $x$ gets payoff at most $\OPT$ in the worst case over
  dual strategies $\Lambda$.  Since $x^*$ is a primal play achieving payoff
  $\OPT$, the value of the game must be exactly $\OPT$. In particular, $x^*$ is
  a maxmin strategy.

  Now, since the primal player and the dual player play in compact sets $S,
  \Lambda$, the minmax theorem \citep{Kneser52} states that the restricted game
  has an equilibrium $(x^\bullet, \lambda^\bullet)$. Thus, $\lambda^\bullet \in
  \Lambda$ is a minmax strategy, and $(x^*, \lambda^\bullet)$ is the claimed
  equilibrium.
\end{proof}

In the remainder, we will always work with the restricted game: the dual player
will have action set $\Lambda = \{ \lambda \in \RR^k_+ \mid \| \lambda \|_2 \leq
2 \tau \sqrt{k} \}$.
\fi
To show that \PDD computes an approximate equilibrium, we
want to use the no-regret guarantee \preflong{thm:noisy-GD}. We define the
second key parameter for accuracy.

\begin{definition}
  Consider all problem instances $\cO = (S, v, c, b)$ for a class of convex
  optimization problems. The class has \emph{width bounded by $w$} if
  \begin{smalldisplay}
    \max \left| \sum_{i = 0}^n c^{(i)}_j (x^{(i)}) - b_j \right| \leq w ,
  \end{smalldisplay}
  where the max is taken over all instances $\cO$, and coupling constraint $j$,
  and $x \in S$.
\end{definition}
\ifshort\else
\begin{example*}
  The coupling constraints are of the form
  \[
    \sum_{i = 1}^n x^{(i)}_{tq} \leq c_{tq} ,
  \]
  and the $x$ variables lie in $[0, d_{max}]$. Let $c_{max} = \max_{t, q}
  c_{tq}$ be the maximum capacity over all slots. If we assume there are a large
  number of agents so $n d_{max} \gg c_{max}$, then the width is bounded by  $w
  = n d_{max}$.
\end{example*}
\fi

The width controls how fast online gradient descent converges. However, although
the convergence time depends polynomially on $w$, our accuracy bound depends
only on $\log(w)$.
\ifshort\else
Applying \Cref{thm:noisy-GD} gives the following regret guarantee for the dual
player.

\begin{lemma} \label{lem:regret}
  Suppose $w$ is the width for a class of convex optimization problems.  Then
  with probability at least $1 - \beta$, running \PDD for $T =
  w^2$ iterations yields a sequence of dual plays $\lambda^{(1)}, \dots,
  \lambda^{(T)} \in \Lambda$ with regret $\cR_p$ to any point in $\Lambda$,
  against the sequence of best responses $x_1, \dots, x_T$, where
  \[
    \cR_p =
    O \left( \frac{ k \tau \sigma \log^{1/2}(w/\delta) }{\epsilon} \log
      \frac{wk}{\beta} \right) .
  \]
  As above, $\sigma$ is the gradient sensitivity and $\tau$ is the dual bound.
\end{lemma}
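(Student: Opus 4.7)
The plan is to apply a noisy online gradient descent regret bound---the ``noisy-GD'' theorem deferred to the full version---to the sequence of Gaussian-perturbed loss vectors $\{\hat\ell^{(t)}\}$ that the dual player feeds into OGD. The three quantities I need are the diameter of the action set, the norm of the true gradients, and a high-probability bound on the noise. Since $\|\lambda\|_\infty \leq 2\tau$ for $\lambda \in \Lambda$, the $\ell_2$ diameter is at most $D = O(\tau\sqrt{k})$; by the width assumption, each coordinate of the true loss $\ell^{(t)}_j = \sum_i c^{(i)}_j(x^{(i)}_t) - b_j$ is bounded by $w$ in absolute value, so $\|\ell^{(t)}\|_2 \leq w\sqrt{k}$; and each of the $Tk$ independent noise coordinates has variance $2\sigma^2\log(1.25/\delta')/\epsilon'^2$, so a standard Gaussian tail bound together with a union bound over all $Tk$ coordinates shows that, with probability at least $1-\beta$, every noise coordinate has magnitude at most $O(\sigma\sqrt{\log(1/\delta')\log(Tk/\beta)}/\epsilon')$.

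Next I would plug these bounds into the standard OGD regret inequality $\sum_t \langle \hat\ell^{(t)}, \lambda^{(t)} - \lambda\rangle \leq D^2/(2\eta) + (\eta/2)\sum_t \|\hat\ell^{(t)}\|_2^2$, using the step size $\eta$ already set in \PDD, which is tuned precisely to balance these two terms. Because the dual player's regret is measured against the \emph{true} loss sequence $\ell^{(t)}$ rather than the noisy one, I must separately bound the bias $\sum_t \langle z^{(t)}, \lambda^{(t)} - \lambda\rangle$. Since $\{z^{(t)}\}$ is an independent Gaussian sequence while $\lambda^{(t)}$ is adapted to the filtration generated by the past noise, this bias is a zero-mean martingale with bounded increments (under the high-probability event above), so an Azuma/Freedman-type concentration inequality gives a deviation of order $D\sqrt{T}$ times the per-coordinate noise tail, which after averaging over $T$ rounds becomes a lower-order contribution.

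Finally, I would substitute $T = w^2$ so that $\sqrt{T} = w$, and unwind $\epsilon' = \epsilon\sigma/\sqrt{8T\log(2/\delta)}$ to see that the per-coordinate noise scale $\sigma\sqrt{\log(1.25/\delta')}/\epsilon'$ equals $O(\sqrt{T}\cdot\sqrt{\log(1/\delta)}/\epsilon) = O(w\sqrt{\log(1/\delta)}/\epsilon)$. The polynomial factors of $w$ appearing in the step-size-optimized OGD bound then cancel exactly against the $1/\sqrt{T} = 1/w$ from averaging, leaving only a logarithmic dependence on $w$. Collecting the remaining factors---$\tau$ from $D$, one $\sqrt{k}$ from $D$ and another $\sqrt{k}$ from the $\ell_2$ norm of the noise, $\sigma/\epsilon$ from the noise scale, and $\log(wk/\beta)$ from the union bound---yields exactly $\cR_p = O(k\tau\sigma\log^{1/2}(w/\delta)\log(wk/\beta)/\epsilon)$.

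The main obstacle is handling the cross term between the noisy and true losses: a crude per-step bound $\sum_t D\|z^{(t)}\|_2$ would incur an extra factor of $\sqrt{T} = w$, which would turn the claimed logarithmic dependence on $w$ into a polynomial one. The key observation is that this cross term is a martingale whose increments are controlled by the per-coordinate Gaussian tail under the high-probability noise event, so its deviation scales as $\sqrt{T}$ rather than $T$ and is absorbed into the main regret term without affecting the final rate.
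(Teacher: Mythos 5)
Your proposal is correct and follows essentially the same route as the paper: the paper's proof of this lemma is a direct application of its noisy online gradient descent theorem (\Cref{thm:noisy-GD}), whose proof is exactly your decomposition---Zinkevich's regret bound on the Gaussian-perturbed loss sequence with the tuned step size, a high-probability bound on the noise norms via a union bound over all $Tk$ coordinates, a correction term for measuring regret against the true rather than noisy losses, and the substitution $T = w^2$ so that the polynomial width dependence cancels. The only divergence is in the noise--iterate cross term $\sum_t \langle \nu_t, \lambda^{(t)}\rangle$: you control it with an Azuma-type martingale bound exploiting that $\lambda^{(t)}$ is adapted to past noise, whereas the paper replaces the adapted sequence by a fixed comparator and applies Gaussian-sum concentration plus H\"older's inequality; both yield the essential $\sqrt{T}$ (rather than $T$) scaling, and your martingale treatment is, if anything, the more careful way to justify that step.
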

\begin{proof}
  We add Gaussian noise with variance
  \[
    \frac{2 \sigma^2 \log \left( 1.25 T/\delta \right)}{\eps'^2}
  \]
  to each gradient, where
  \[
    \epsilon' = \frac{\epsilon}{\sqrt{8 k T \log(1/\delta)}} .
  \]
  Furthermore, the target space $\Lambda$ has $\ell_2$ diameter $2\tau \sqrt{k}$.
  So by \Cref{thm:noisy-GD}, the regret to the sequence of unnoised best
  responses $x_1, \dots, x_T$ is at most
  \begin{align*}
    \cR_p &\leq \frac{\tau \sqrt{k}}{\sqrt{T}}\left( w +
      4 \sqrt{\frac{\pi \log(1.25 T/\delta)}{\log 2}}
        \frac{\sigma}{\epsilon'} \log\left(\frac{2Tk}{\beta} \right)
    \right) \\
    &\leq \tau \sqrt{k} \left( 1 +
        \frac{20 \sigma \sqrt{8 k}}{\epsilon}
        \log\left(\frac{2T k}{\beta} \right) \log^{1/2} \left( \frac{T}{\delta}
        \right) \right) \\
    &\leq \frac{40 \sqrt{8} k \tau \sigma }{\epsilon} \log\left(\frac{2 w^2
        k}{\beta} \right) \log^{1/2} \left( \frac{w^2}{\delta} \right) \\
    &= O \left( \frac{ k \tau \sigma  }{\epsilon} \log
      \frac{wk}{\beta} \log^{1/2} \frac{w}{\delta} \right)
  \end{align*}
  as desired.
\end{proof}

By applying \Cref{thm:equil-restr}, we immediately know that \PDD computes an
approximate equilibrium.

\begin{corollary} \label{cor:equil}
  With probability at least $1 - \beta$, the output $(\bar{x}, \bar{\lambda})$
  of \PDD is an $\cR_p$-approximate equilibrium for
  \[
    \cR_p =
    O \left( \frac{ k \tau \sigma \log^{1/2}(w/\delta) }{\epsilon} \log
      \frac{wk}{\beta} \right) .
  \]
\end{corollary}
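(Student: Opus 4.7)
The plan is to combine \Cref{lem:regret} with the no-regret-to-equilibrium reduction of \citet{FS96} packaged in \Cref{thm:equil-restr}. First I would verify that \PDD exactly implements the dynamics to which \Cref{thm:equil-restr} applies on the zero-sum game whose payoff is the Lagrangian $\cL(x,\lambda)$, with primal action set $S$ and restricted dual action set $\Lambda = \{\lambda \in \RR_+^k : \|\lambda\|_\infty \leq 2\tau\}$: at each round $t$ the primal (maximization) player plays a best response $x_t \in \arg\max_{x \in S}\cL(x,\lambda^{(t)})$, which decomposes agent-by-agent via $BR^{(i)}$ as in \Cref{eq:br}; and the dual (minimization) player updates $\lambda^{(t+1)}$ by a projected gradient step onto $\Lambda$ using the perturbed gradient $\hat\ell^{(t)}$.

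Next, I would check that the hypotheses of \Cref{thm:equil-restr} are met: both $S$ and $\Lambda$ are closed, compact, and convex; $\cL(\cdot,\lambda)$ is concave in $x$ since it is a sum of the concave $v^{(i)}$ minus a nonnegative combination of the convex $c^{(i)}_j$; and $\cL(x,\cdot)$ is linear, hence convex, in $\lambda$. I would then invoke \Cref{lem:regret}, which with probability at least $1-\beta$ gives a regret bound of $\cR_p = O(k\tau\sigma\epsilon^{-1}\log^{1/2}(w/\delta)\log(wk/\beta))$ for the dual player after $T = w^2$ rounds, against any fixed $\lambda \in \Lambda$ and in particular against the best fixed dual strategy. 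Plugging this regret bound into \Cref{thm:equil-restr} immediately yields that $(\bar x,\bar\lambda)$ is a $\cR_p$-approximate minimax equilibrium of the restricted Lagrangian game, completing the proof.

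I do not anticipate a genuinely hard step here; this corollary is essentially a packaging move that composes the noisy-gradient-descent regret bound with the standard Freund–Schapire reduction. The only care needed is lining up roles correctly: in \Cref{thm:equil-restr} the no-regret learner is the minimization player, which corresponds exactly to the dual player in \PDD, and the best-response player is the maximization player, which corresponds to the primal player. Compactness of $S$ together with continuity of $\cL(\cdot,\lambda)$ guarantees that the primal best responses are always attained, so the reduction goes through without further modification.
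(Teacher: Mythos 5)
Your proposal is correct and matches the paper's argument exactly: the paper proves this corollary by directly applying \Cref{thm:equil-restr} to the regret bound of \Cref{lem:regret}, which is precisely your packaging move. Your additional verification of the convexity/compactness hypotheses and the alignment of player roles is sound (and more careful than the paper's one-line proof), so nothing further is needed.
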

\fi

Finally, we show that \PDD generates an approximately feasible and
optimal point. For feasibility, the intuition is simple: since the dual player
can decrease the Lagrangian value by putting weight $2 \tau$ on every violated constraint,
there can't be too many violated constraints: $\bar{x}$ is an approximate
equilibrium strategy for the primal player, and hence achieves nearly $\OPT$ even against a best response from the dual player.

Proving approximate optimality is similar. We think of the primal player as
deriving payoff in two ways: from the original objective, and from any
over-satisfied constraints with positive dual variable. The dual player can
always set the dual variables for over-satisfied constraints to zero and decrease
the Lagrangian value. If the primal player derives large payoff from these
over-satisfied constraints, then the best response by the dual player will
substantially reduce the payoff of the primal player and lead to a
contradiction with the approximate maxmin condition. More formally, we have the
following theorem.

\begin{theorem} \label{thm:bigthm}
  Suppose a convex program $\cO = (S, v, c, b)$ has width bounded by $w$ and
  gradient sensitivity bounded by  $\sigma$, and dual bound $\tau$.  With
  probability at least $1 - \beta$, \PDD produces a point $\bar{x} \in S$ such
  that:
  \begin{itemize}
    \item the total violation of coupling constraints is bounded by
      \[
        \sum_{j = 1}^k \sum_{i = 0}^n \left(
          c^{(i)}_j (\bar{x}^{(i)}) - b_j \right)_+ =
        O \left( \frac{ k \sigma \log^{1/2}(w/\delta) }{\epsilon} \log
          \frac{wk}{\beta} \right) , \qquad \text{and}
      \]
    \item
      the objective satisfies

      \[
        \sum_{i = 0}^n v^{(i)} (\bar{x}^{(i)}) \geq \OPT - \alpha ,
        \qquad \text{for} \qquad
        \alpha = 2 \cR_p
        = O \left( \frac{ k \tau \sigma \log^{1/2}(w/\delta) }{\epsilon} \log
          \frac{wk}{\beta} \right) .
      \]
  \end{itemize}
\end{theorem}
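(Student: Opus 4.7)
The plan is to leverage the $\cR_p$-approximate equilibrium guarantee for $(\bar{x}, \bar{\lambda})$ produced by \PDD{} and extract approximate optimality and approximate feasibility separately by pitting $\bar{x}$ against two carefully chosen dual strategies. By \Cref{cor:equil} (which packages the noisy-gradient regret bound through \Cref{thm:equil-restr}), with probability at least $1 - \beta$ the pair $(\bar{x}, \bar{\lambda})$ is an $\cR_p$-approximate minimax equilibrium of the Lagrangian game with the dual action set restricted to $\Lambda$. By \Cref{lem:restrict}, this restricted game has value exactly $\OPT$, so the approximate equilibrium inequalities imply $\cL(\bar{x}, \bar{\lambda}) \geq \OPT - \cR_p$ and, for every $\lambda' \in \Lambda$, $\cL(\bar{x}, \lambda') \geq \cL(\bar{x}, \bar{\lambda}) - \cR_p \geq \OPT - 2\cR_p$. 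From here the entire proof is deterministic.

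For the optimality bound, I would test $\bar{x}$ against $\lambda' = 0 \in \Lambda$. Since all penalty terms vanish, $\cL(\bar{x}, 0) = \sum_{i=0}^n v^{(i)}(\bar{x}^{(i)})$, and the displayed lower bound on $\cL(\bar{x}, \lambda')$ immediately gives $\sum_{i=0}^n v^{(i)}(\bar{x}^{(i)}) \geq \OPT - 2\cR_p$, matching the stated $\alpha = 2\cR_p$.

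For the feasibility bound, let $V^{\ast} := \sum_{j=1}^k \sum_{i=0}^n \bigl( c^{(i)}_j(\bar{x}^{(i)}) - b_j \bigr)_+$ denote the total coupling violation and consider the dual strategy $\lambda^{\dagger}$ that is $2\tau$ on every violated constraint and $0$ elsewhere; this lies in $\Lambda$ since $\| \lambda^{\dagger} \|_\infty \leq 2\tau$. A direct computation (satisfied constraints contribute nothing since their $\lambda^{\dagger}$ coordinate is zero) yields $\cL(\bar{x}, \lambda^{\dagger}) = \sum_i v^{(i)}(\bar{x}^{(i)}) - 2\tau V^{\ast}$. Applying the dual-bound property to $\bar{x}$ with $\delta = V^{\ast}$ gives $\sum_i v^{(i)}(\bar{x}^{(i)}) \leq \OPT + \tau V^{\ast}$, so $\cL(\bar{x}, \lambda^{\dagger}) \leq \OPT - \tau V^{\ast}$. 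Combining this with $\cL(\bar{x}, \lambda^{\dagger}) \geq \OPT - 2\cR_p$ from the previous paragraph gives $\tau V^{\ast} \leq 2\cR_p$, i.e., $V^{\ast} \leq 2\cR_p/\tau$. Substituting the expression for $\cR_p$ from \Cref{cor:equil} produces exactly the stated bound $O\bigl(k\sigma \log^{1/2}(w/\delta) \log(wk/\beta) / \eps\bigr)$.

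The hard part is not this extraction step, which is really just the two natural tests $\lambda' = 0$ and $\lambda' = \lambda^{\dagger}$; it is that the diameter of $\Lambda$ must be chosen with enough slack for $\lambda^{\dagger}$ to lie inside it while still being small enough that the noisy gradient descent of the dual player gives the claimed regret. The factor of $2$ in $\| \lambda^{\dagger} \|_\infty \leq 2\tau$ is exactly what makes the $-2\tau V^{\ast}$ term dominate the $+\tau V^{\ast}$ coming from the dual-bound inequality and leaves a clean linear inequality in $V^{\ast}$; any smaller constant would fail. All probabilistic content of the theorem is absorbed into the failure event for \Cref{cor:equil}, so no additional union bound is needed.
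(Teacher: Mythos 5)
Your proposal is correct and follows essentially the same route as the paper: invoke the $\cR_p$-approximate equilibrium guarantee, bound feasibility by testing $\bar{x}$ against the dual action that places $2\tau$ on each violated constraint and combining the dual-bound inequality with the penalty term, and bound the objective via a zeroed-out dual deviation. The only cosmetic difference is that you test optimality against $\lambda'=0$ while the paper zeroes out only the loose constraints (keeping $\bar{\lambda}_j$ elsewhere); both yield $\alpha \leq 2\cR_p$ from the same inequality $\cL(\bar{x},\lambda') \geq \OPT - 2\cR_p$.
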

\ifshort\else
\begin{proof}
   By \Cref{lem:regret}, \PDD computes an $\cR_p$ approximate equilibrium
   $(\bar{x}, \bar{\lambda})$ with
  \[
    \cR_p = O \left( \frac{ k \tau \sigma \log^{1/2}(w/\delta) }{\epsilon} \log
      \frac{wk}{\beta} \right) ,
  \]
  with probability at least $1 - \beta$.

  Let us consider first consider feasibility. Since $S$ is convex and each best
  response lies in $S$, $\bar{x} \in S$. Suppose $\bar{x}$ violates the coupling
  constraints by
  \[
    \Delta_1 := \sum_{j = 1}^k \left( \sum_{i = 0}^n c^{(i)} (\bar{x}^{(i)}) -
      b_j \right)_+ .
  \]
  Define $\lambda' \in \Lambda$ as
  \[
    \lambda' =
    \begin{cases}
      2 \tau &\text{if } \sum_{i = 0}^n c^{(i)} (\bar{x}^{(i)}) > b_j \\
      0 &\text{otherwise.}
    \end{cases}
  \]
  Let's compare the payoff $\cL(\bar{x}, \bar{\lambda})$ to $\cL(\bar{x},
  \lambda')$. By the equilibrium property, we have
  \[
    \OPT - \cR_p \leq \cL(\bar{x}, \bar{\lambda}) \leq \OPT + \cR_p
  \]
  and
  \begin{equation} \label{eq:dual-dev}
    \cL(\bar{x}, \lambda') \geq \OPT - 2\cR_p .
  \end{equation}
  For $\cL(\bar{x}, \lambda')$, since the dual bound is $\tau$, we know
  \[
    \sum_{i = 0}^n v^{(i)} (\bar{x}^{(i)}) \leq \OPT + \tau \Delta_1 .
  \]
  At the same time, the penalty is at least
  \[
    \sum_{j = 1}^k \lambda'_j \left( \sum_{i = 0}^n c^{(i)}_j (\bar{x}^{(i)}) -
      b_j \right) \geq 2 \tau \Delta_1 .
  \]
  So,
  \[
    \cL(\bar{x}, \lambda') \leq \OPT - \tau \Delta_1
  \]
  and by \Cref{eq:dual-dev},
  \[
    \Delta_1 \leq \frac{2\cR_p}{\tau} =
    O \left( \frac{ k \sigma \log^{1/2}(w/\delta) }{\epsilon} \log
      \frac{wk}{\beta} \right)
  \]
  as desired.

  To show optimality, let the convex program have optimal objective $\OPT$.
  Suppose $\bar{x}$ has objective value:
  \[
    \sum_{i = 0}^n v^{(i)}(\bar{x}^{(i)}) = \OPT - \alpha .
  \]
  and say the penalty against $\bar{\lambda}$ is
  \[
    \rho = \sum_{j = 1}^k \bar{\lambda}_j \left( \sum_{i = 0}^n c^{(i)}_j
      (\bar{x}^{(i)}) - b_j \right)
  \]
  for total Lagrangian value $\cL(\bar{x}, \bar{\lambda}) = \OPT - \alpha - \rho$.
  Consider the deviation of the dual variables $\lambda' \in \Lambda$:
  \[
    \lambda'_j =
    \begin{cases}
      0 &\text{if coupling constraint } j \text{ loose} \\
      \bar{\lambda}_j &\text{otherwise.}
    \end{cases}
  \]
  Then $\cL(\bar{x}, \lambda') = \OPT - \alpha$. But since $(\bar{x},
  \bar{\lambda})$ is an $\cR_p$-approximate equilibrium,
  \[
    \cL(\bar{x}, \lambda') \geq \OPT - 2 \cR_p
  \]
  so $\alpha \leq 2 \cR_p$ as desired.
\end{proof}

\begin{remark*}
  We stress that \Cref{thm:bigthm} bounds the sum of the violations over all
  coupling constraints. In \Cref{sec:extensions}, we discuss how to modify the
  solution to instead give a stronger bound on the \emph{maximum} violation over
  any coupling constraint at a small cost to the objective value.
\end{remark*}

Applying \Cref{thm:bigthm} to the electricity scheduling LP,
we immediately have the following result.

\begin{corollary}
  With probability at least $1 - \beta$, \PDD run on the
  electricity scheduling LP produces an electricity schedule $\bar{x}$ that
  \begin{itemize}
    \item satisfies all demand constraints exactly,
    \item exceeds the power supply constraints by
      \[
        O \left( \frac{ QT \sqrt{d_{max} \log(nd_{max}/\delta) }}{\epsilon} \log
          \frac{nd_{max} TQ}{\beta} \right)
      \]
      in total, over all time slots, and
    \item achieves welfare at least $\OPT - \alpha$ for
      \[
        \alpha = O \left( \frac{ QT \sqrt{d_{max} \log(nd_{max}/\delta)
            }}{\epsilon} \log \frac{nd_{max} TQ}{\beta} \right) ,
      \]
      where $\OPT$ the optimal objective value.
  \end{itemize}
\end{corollary}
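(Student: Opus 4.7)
The plan is to instantiate Theorem \ref{thm:bigthm} with the specific parameters of the electricity scheduling LP. The only real work is identifying the four quantities $k, \tau, \sigma, w$ for this problem, verifying the claim about exact satisfaction of the personal (demand) constraints, and then plugging into the generic bound.

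First, I would read off the coupling parameters. The coupling constraints are the power supply constraints $\sum_i x^{(i)}_{tq} \leq c_{tq}$, one per (interval, slot) pair, so $k = QT$. The personal constraints are the per-agent demand constraints and the box constraints $x^{(i)}_{tq} \in [0,1]$; these together define the convex sets $S^{(i)}$. Next, for the gradient sensitivity $\sigma$, I would use the example computation already carried out in the excerpt: changing one agent's private data changes her demand vector by at most $d_{\max}$ across time, and each coupling constraint sees only one agent's contribution in that coordinate, giving $\sigma = O(\sqrt{d_{\max}})$ as the $\ell_2$-sensitivity of the coupling-constraint gradient. For the width, since each $x^{(i)}_{tq} \in [0,1]$ and at most $d_{\max}$ units of demand per agent matter, and assuming $n d_{\max} \gg c_{\max}$ as the earlier example notes, the quantity $|\sum_i x^{(i)}_{tq} - c_{tq}|$ is bounded by $w = n d_{\max}$. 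Finally, for the dual bound $\tau$: any feasible point has objective at most $\OPT$, and since $v^{(i)}_{tq} \in [0,1]$, each unit of violation of $\sum_i x^{(i)}_{tq} \leq c_{tq}$ buys at most one unit of extra objective, so $\tau = 1$ is a valid dual bound for this class.

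Second, I would handle the exact feasibility of the demand constraints separately from the generic theorem. Each per-iteration best response $x^{(i)}_t$ lies in $S^{(i)}$ by construction, and $S^{(i)}$ (defined by linear demand constraints and box constraints) is convex, so the averaged output $\bar{x}^{(i)} = \frac{1}{T}\sum_t x^{(i)}_t$ also lies in $S^{(i)}$. This means every personal demand constraint $\sum_q \bar{x}^{(i)}_{tq} \geq d^{(i)}_t$ and the total-demand cap are satisfied exactly, as claimed in the first bullet.

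Third, I would apply Theorem \ref{thm:bigthm} directly with the parameters above. Substituting $k = QT$, $\sigma = O(\sqrt{d_{\max}})$, $\tau = 1$, and $w = n d_{\max}$ into the violation bound
\[
  O\!\left(\frac{k\sigma \log^{1/2}(w/\delta)}{\epsilon}\log\frac{wk}{\beta}\right)
\]
gives exactly the claimed bound $O\bigl(QT\sqrt{d_{\max}\log(n d_{\max}/\delta)}/\epsilon \cdot \log(n d_{\max} TQ/\beta)\bigr)$ for the total violation of coupling constraints, and the objective gap has the same form because $\tau = 1$. I do not expect any real obstacle here; the only mildly subtle point is making sure that the gradient-sensitivity computation uses $d_{\max}$ (the agent's total electricity cap) rather than some larger quantity, and that the width assumption $n d_{\max} \gg c_{\max}$ is in force so the $\log w$ factor is $\log(n d_{\max})$ as written.
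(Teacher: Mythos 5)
Your proposal is correct and matches the paper's approach exactly: the paper proves this corollary simply by applying Theorem~\ref{thm:bigthm} with the parameters computed in its running example ($k = QT$ coupling constraints, gradient sensitivity $\sigma = O(\sqrt{d_{max}})$, dual bound $\tau = 1$, width $w = nd_{max}$), and exact satisfaction of the demand constraints follows because each best response lies in the convex set $S^{(i)}$, so the average $\bar{x}^{(i)}$ does too. Your parameter identifications and the convexity argument for the first bullet are precisely the paper's reasoning, just written out more explicitly.
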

\fi

\ifshort
\section{Further Examples}
\else
\section{Examples}
\fi
 \label{sec:examples}

In this section, we illustrate the general bounds for \PDD by instantiating
them on several example problems. \ifshort For lack of space, we only present
one example, deferring the remaining examples to the extended version. We
summarize the bounds we get applying our main result to different examples in
the following table. The details of these and other examples are in the full
version. \else For each example, we present the problem description and the
relevant parameters (gradient sensitivity, dual bounds, and width), and then
state the guarantee we get on the quality of the solution produced by \PDD.
\fi

\begin{table}[h]
\begin{center}
\begin{tabular}{|l|l|l|l|l|}
\hline
Problems  & \begin{tabular}[c]{@{}l@{}}Relevant \\
Parameters\end{tabular} &
\begin{tabular}[c]{@{}l@{}} Welfare / Cost\\ ($\OPT
  \pm$)\end{tabular} &
\begin{tabular}[c]{@{}l@{}} Constraint\\Violation\end{tabular}\\
\hline
\begin{tabular}[c]{@{}l@{}}$d$-demand\\ Allocation\end{tabular}
& \begin{tabular}[c]{@{}l@{}}$d$: max  bundle size;\\$k$: \#
  goods\end{tabular}    & $\tilde O\left(kd/\eps\right)$ &$\tilde O\left(kd/\eps\right)$ \\ \hline
\begin{tabular}[c]{@{}l@{}}Multi-commodity\\ Flow\end{tabular} &  \begin{tabular}[c]{@{}l@{}}$L$: longest
  path;\\$m$: \# edges\end{tabular}       &  $\tilde O\left(m\sqrt{L}/\eps\right)$&  $\tilde O\left(m\sqrt{L}/\eps\right)$\\ \hline
\begin{tabular}[c]{@{}l@{}}Multi-dimensional\\ Knapsack\end{tabular}
&   \begin{tabular}[c]{@{}l@{}}$v_i$: value; $\{w_{ij}\}$: weights \\
  $k$: \# resources\end{tabular}    &  $\tilde O\left(k^{3/2}\max_{i,j}\frac{v_i}{w_{ij}}/\eps\right)$&  $\tilde O\left(k^{3/2}/\eps\right)$\\ \hline
\ifshort\else
\begin{tabular}[c]{@{}l@{}}Allocation with \\ shared
resources\end{tabular} & \begin{tabular}[c]{@{}l@{}}$m$: \# projects; $k$: \# resources \\
  $d$: \# resources a project needs\end{tabular}&   $\tilde O\left(md^{3/2}/\eps\right)$&   $\tilde O\left(md^{3/2}/\eps\right)$\\ \hline
\begin{tabular}[c]{@{}l@{}}Aggregative Games \\
   Equilibrium LP\end{tabular} &  \begin{tabular}[c]{@{}l@{}}$k$:
   dimension of aggregator;\\$\gamma$: sensitivity of
   aggregator \end{tabular} & N/A & $\tilde O\left(k^{3/2}\gamma/\eps \right)$\\ \hline
 \fi
\end{tabular}
\end{center}
\end{table}

\ifshort\else
While some examples are combinatorial optimization problems, our
instantiations of $\PDD$ only produce fractional solutions. To
simplify the presentation, we will not discuss rounding techniques
here. In~\Cref{sec:rounddude}, we present a extensions of $\PDD$ to
privately round the fractional solution with a small additional loss.
\fi
\ifshort \vspace{-1cm} \fi
\SUBSECTION{The $d$-Demand Allocation Problem}
Consider a market with $n$ agents, a collection of goods $G$ of $k$
different types, and let $s_j$ be the supply of good $j$. We assume
that the agents have \emph{$d$-demand valuations} over bundles of
goods, i.e., they demand bundles of size no more than
$d$. Let $\cB = \{S\subseteq G\mid |S|\leq d\}$ denote the set of all
bundles with size no more than $d$. For each $S\in \cB$ and $i\in[n]$,
we write $v^{(i)}(S)$ to denote agent $i$'s valuation on $S$; we
assume that $v^{(i)}(S) \in [0,1]$.
We are interested in computing a welfare-maximizing allocation:
\[
\begin{aligned}
  & \textrm{maximize} & & \sum_{i = 1}^n \sum_{S\in \cB} v^{(i)}(S) \cdot
  x^{(i)}(S) & & \\
  & \textrm{subject to} & & \sum_{i = 1}^n \sum_{S\ni j} x^{(i)}(S) \leq s_j & &
  (\textrm{for } j\in [k]) \\
  & & & \sum_{S\subseteq \cB} x^{(i)}(S) \leq 1, \quad x^{(i)}(S) \geq 0 &&
  (\textrm{for } i \in[n], S\in \cB) .
\end{aligned}
\]
The private data lies in agents' valuations over the bundles, and two
problem instances are neighboring if they differ by any agent $i$'s
valuations. Since each agent demands at most $d$ items, the gradient
sensitivity $\sigma$ is at most $\sqrt{2}d$,
and the width is bounded by $nd$. Also, the problem has a dual bound $\tau = 1$,
as each agent's valuation is bounded by 1.
\begin{corollary}\label{cor:ddemand}
  With probability at least $1-\beta$, $\PDD(\cdot,
  \sqrt{2}d, 1, nd, \eps, \delta, \beta)$ computes a fractional
  allocation $\bar x$ such that the total supply violation is bounded
  by
\[
\alpha = O\left( \frac{kd\,{\log^{1/2}(nd/\delta)}}{\eps}
  \log{\frac{ndk}{\beta}}\right)
\quad \text{with welfare} \quad
\sum_{i = 1}^n \sum_{S\subseteq G} v^{(i)}(S) \cdot \bar x^{(i)}(S)
\geq \OPT - \tau \alpha = \OPT - \alpha.
\]
\end{corollary}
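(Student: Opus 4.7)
The plan is to derive this corollary as a direct instantiation of \Cref{thm:bigthm} applied to the $d$-demand allocation LP. The substantive work is verifying the three quoted structural parameters: gradient sensitivity $\sigma \le \sqrt{2}d$, width $w \le nd$, and dual bound $\tau \le 1$. Once these are checked, plugging them into the two guarantees of \Cref{thm:bigthm} and using $\tau = 1$ immediately gives both the supply-violation bound $\alpha$ and the welfare lower bound $\OPT - \tau\alpha = \OPT - \alpha$.

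To verify the gradient sensitivity, I would fix an agent $i$, two neighboring instances $\cO, \cO'$ that differ only in agent $i$'s valuation, a common dual vector $\lambda$, and the corresponding best responses $x^{(i)}, x'^{(i)} \in S^{(i)}$. Writing $u_j := c^{(i)}_j(x^{(i)}) = \sum_{S \ni j} x^{(i)}(S)$ and analogously $u'_j$, I observe $\|u\|_1 = \sum_S |S| \, x^{(i)}(S) \le d$ since $|S| \le d$ and $\sum_S x^{(i)}(S) \le 1$; the same holds for $u'$. Since both vectors are componentwise nonnegative, $\|u-u'\|_2^2 \le \|u\|_2^2 + \|u'\|_2^2 \le \|u\|_1^2 + \|u'\|_1^2 \le 2d^2$, giving $\sigma \le \sqrt{2}d$. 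The width bound follows directly from $\sum_i c^{(i)}_j(x^{(i)}) \le n$ (since each $x^{(i)}$ is a subprobability distribution over bundles) together with the observation that supplies $s_j$ beyond $nd$ make their constraints trivially slack, so we may assume $s_j \le nd$; hence $\big|\sum_i c^{(i)}_j(x^{(i)}) - s_j\big| \le nd$.

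For the dual bound $\tau$, I would argue that relaxing the supply constraints by a total of $\delta$ can improve the LP optimum by at most $\delta$: every additional unit of supply can be exploited by at most a unit increase in $\sum_S x^{(i)}(S)$ for some agent (by the personal constraint), which contributes at most $v^{(i)}(S) \le 1$ to the objective; so $\tau = 1$ is a valid dual bound. I also need to note that the personal constraints hold exactly in the output: each $x^{(i)}_t \in S^{(i)}$ by construction of the best response, and $S^{(i)}$ is convex, so the time average $\bar x^{(i)} = \frac{1}{T}\sum_t x^{(i)}_t$ lies in $S^{(i)}$ as well; thus $\bar x$ is a genuine fractional allocation. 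Combining these parameter bounds with \Cref{thm:bigthm} then yields the two displayed inequalities of the corollary.

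The most delicate step is the dual-bound argument; an alternative and perhaps cleaner justification for $\tau = 1$ is via LP duality, observing that the optimal dual variables on the supply constraints of this packing LP are bounded above by the maximum per-bundle valuation, which is $1$. All other parameter verifications are elementary, and the final substitution into \Cref{thm:bigthm} is mechanical.
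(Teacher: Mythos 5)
Your proposal is correct and follows exactly the route the paper takes: the paper gives no separate proof of this corollary beyond asserting the three parameter bounds ($\sigma \le \sqrt{2}d$, $w \le nd$, $\tau = 1$) in the surrounding text and invoking \Cref{thm:bigthm}, which is precisely your plan. Your verifications of those parameters (the $\ell_1$-to-$\ell_2$ bound for the sensitivity, the WLOG truncation of supplies for the width, the unit-marginal-value argument for the dual bound, and the convexity of $S^{(i)}$ for feasibility of the averaged allocation) are all sound and merely supply details the paper leaves implicit.
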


Note that the \emph{average} violation per good is  $\tilde{O}(d/\epsilon)$.
See \thelongref{sec:extensions} for a
method that solves this problem with \emph{no} constraint violations,
at a small cost to the objective.

\begin{remark*}
  Our result gives an affirmative answer to the open problem posed
  by~\citet{HHRRW14}: can the allocation problem be solved privately for more
  general valuation functions beyond the class of gross substitutes (GS)?  Our
  welfare and supply violation bounds are incomparable with the ones in their
  work, which assume GS valuations.  For a detailed discussion,
  see~\thelongref{sec:extensions}.

  Note that there are exponentially many primal variables (in $d$),
  but our error bound is independent of the number of variables.
  Moreover, we can
  implement $\PDD$ efficiently given a \emph{demand oracle} for each
  player, which, for any given item prices $\{\lambda_j\}$,
  returns a bundle $B \in \arg\max_{S\in \cB} (v^{(i)}(S) -
  \sum_{j\in S} \lambda_j )$ for each agent $i$.
\end{remark*}
\ifshort\else

\SUBSECTION{Multi-Commodity Flow}
While a broad class of combinatorial problems are
instantiations of the $d$-demand allocation problem, some problems have
special structure and more compact representations. For example,
consider the following multi-commodity flow problem over a network
$G(V, E)$. There are $m$ edges and $n$ agents, and each agent needs to
route $1$ unit of flow from its source $s_i$ to its destination $t_i$.
We assume that for any agent $i$, any path from $s_i$ to $t_i$ has
length bounded by $L$. For each edge $e\in E$, there is an associated
cost $c^{(i)}_e$ if an agent $i$ uses that edge, and we assume that
$c_e^{(i)}\in [0, 1]$ for all $e$ and $i$. Also, each edge $e$ has a
\emph{capacity constraint}: the amount of flow on edge $e$ should be no
more than $q_e$. The problem can be written as the following LP:
\[
\begin{aligned}
  & \textrm{minimize} & &
    \sum_{i=1}^n\sum_{e\in E} c_e^{(i)} \cdot x^{(i)}_e & & \\
  & \textrm{subject to} & & \sum_{i = 1}^n x^{(i)}_e \leq q_e & & (\textrm{for
    each } e\in E) \\
  & & & x^{(i)} \mbox{ forms a }(s_i, t_i)\mbox{-flow} & & (\textrm{for each }
  i\in [n]) .
\end{aligned}
\]
The private data lies in each agent's costs on the edges and its
source and destination. Since each agent only routes $1$ unit of flow,
the gradient sensitivity of the problem is bounded by $\sigma =
\sqrt{2L}$.
The problem has a dual bound $\tau = 1$, and its width is bounded by
$n$.
\begin{corollary}
  With probability at least $1-\beta$, $\PDD(\cdot,
  \sqrt{2L}, 1, n, \eps, \delta, \beta)$ computes a fractional flow
  $\bar x$ such that the total capacity violation is bounded by
\[
  \alpha = O\left(\frac{m \sqrt{L}
      \log^{1/2}(n/\delta)}{\eps}\log{\frac{nm}{\beta}} \right),
\]
and the resulting fractional flow has total cost at
most $\OPT + \alpha\tau = \OPT + \alpha$.
\end{corollary}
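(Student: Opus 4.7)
The plan is to invoke Theorem \ref{thm:bigthm} directly, with the identifications $k = m$ (one coupling constraint per edge), $\sigma = \sqrt{2L}$, $\tau = 1$, and $w = n$. The only work is to verify that the multi-commodity flow LP genuinely has these parameters, and to adapt the statement from the maximization convention of Theorem \ref{thm:bigthm} to the minimization form written here.

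For gradient sensitivity, I would fix any agent $i$, dual prices $\{\lambda_e\}$, and neighboring instances $\cO, \cO'$ that differ in agent $i$'s private data (source/destination pair and edge costs). The best-response $BR^{(i)}$ routes one unit of flow from $s_i$ to $t_i$ along a shortest path with respect to the reduced costs $c_e^{(i)} + \lambda_e$; this is an indicator vector supported on at most $L$ edges. Hence both $x^{(i)}$ and $x'^{(i)}$ have $\ell_2$ norm at most $\sqrt{L}$, and the triangle inequality gives $\sum_e \bigl(c_e^{(i)}(x^{(i)}) - c'^{(i)}_e(x'^{(i)})\bigr)^2 \leq 2L$, so $\sigma \leq \sqrt{2L}$. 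For the width, the LHS of any capacity constraint is at most $n$ (all agents on the same edge) and the RHS is nonnegative, so $w \leq n$ suffices. For the dual bound: since each edge cost $c_e^{(i)}$ lies in $[0,1]$ and each agent ships at most one unit, relaxing the capacity bounds by a total of $\delta$ can reduce the minimum cost by at most $\delta$, giving $\tau = 1$.

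The last bullet point needs a sign flip to match the minimization convention. I would observe that minimizing $\sum c_e^{(i)} x_e^{(i)}$ is equivalent to maximizing $-\sum c_e^{(i)} x_e^{(i)}$, and the dual bound in \emph{this} direction reads: violating the capacity constraints by total $\delta$ increases the maximization-objective by at most $\tau\delta = \delta$, i.e.\ decreases the true cost by at most $\delta$. Thus the approximate optimality guarantee $\sum v^{(i)}(\bar x^{(i)}) \geq \OPT_{\max} - \alpha$ of Theorem \ref{thm:bigthm} translates to $\text{cost}(\bar x) \leq \OPT + \alpha$ in the minimization form, exactly as claimed.

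With these parameter values substituted into the bound of Theorem \ref{thm:bigthm}, both the violation and the cost gap become
\begin{mathdisplayfull}
O\!\left(\frac{m \sqrt{L}\,\log^{1/2}(n/\delta)}{\eps}\,\log\frac{nm}{\beta}\right),
\end{mathdisplayfull}
which is the desired $\alpha$. The only step that requires any thought is confirming $\sigma$: the argument relies on the structural fact that best responses against linear penalties are single $s_i$–$t_i$ paths of length at most $L$, not arbitrary fractional flows, so the $\ell_2$ sensitivity is controlled by path length rather than by the number of edges. Everything else is a direct plug-in.
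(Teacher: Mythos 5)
Your proposal is correct and matches the paper's approach exactly: the paper simply asserts $\sigma=\sqrt{2L}$, $\tau=1$, $w=n$, $k=m$ and invokes \Cref{thm:bigthm}, which is what you do (with more justification). One tiny nit: the triangle inequality on $\ell_2$ norms gives $\|x^{(i)}-x'^{(i)}\|_2^2\le 4L$, not $2L$; the bound $2L$ instead follows from the $0/1$ path-indicator structure you already note (the supports together cover at most $2L$ edges and each coordinate difference has magnitude at most $1$), and in any case the constant is absorbed by the big-$O$.
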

Note again that this is the \emph{total} violation summed over all edges. The
\emph{average} violation per edge is smaller by a factor of $m$:
$\tilde{O}(\sqrt{L}/\epsilon)$. See \Cref{sec:extensions} for a method
that solves this problem with \emph{no} constraint violations, at a small cost
to the objective.

\SUBSECTION{Multi-Dimensional Knapsack}
In a multi-dimensional knapsack problem, there are a set of $n$ items
with values $v_i \in [0,1]$ and $k$ different resources with
capacities $c_i >0$. Each item $i$ requires an amount
$w_{ij}\in[0,1]$ of each resource $j$. The goal is to select a
subset of items to maximize the sum value while satisfying the resource
constraints:
\[
\begin{aligned}
  & \textrm{maximize} & & \sum_{i=1}^n\sum_{e\in E} v_i\cdot x^{(i)} & & \\
  & \textrm{subject to} & & \sum_{i = 1}^n w_{ij} \, x^{(i)}  \leq c_j & &
  (\textrm{for each } j\in [k]) \\
  & & & x^{(i)}\in [0, 1]  & & (\textrm{for each }i\in [n]) .
\end{aligned}
\]

Each agent's private data consists of both the value of her job $v_i$
and its resource demands $\{w_{ij}\}$. The gradient sensitivity is
bounded by $\sigma = \sqrt{k}$, because each item can consume at most
1 unit of each resource. The problem has a dual bound $\tau = \max_{i,
  j}\frac{v_i}{w_{ij}}$,
and the width is bounded by $n$.
\begin{corollary}
  With probability at least $1-\beta$,
  $\PDD(\cdot, \sqrt{k}, 1, n, \eps, \delta, \beta)$ computes a
  fractional assignment such that the total violation in the resource
  constraints is bounded by
  \[
    \alpha =
    O\left(\frac{k\sqrt{k}\log^{1/2}(n/\delta)}{\eps}\log{\frac{nk}{\beta}}\right),
  \]
  and has total  profit at least
  \[
  \OPT - \tau\alpha = \OPT -\alpha \cdot
  \max_{i, j} \frac{v_i}{w_{ij}}.
\]
\end{corollary}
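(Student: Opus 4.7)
This corollary follows by instantiating \Cref{thm:bigthm} with the three parameters $\sigma = \sqrt{k}$, $\tau = \max_{i,j} v_i/w_{ij}$, and $w = n$ that are passed into $\PDD$. The plan is to verify each of these parameters for the multi-dimensional knapsack LP and then substitute into the bounds of \Cref{thm:bigthm}.

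\textbf{Gradient sensitivity.} The coupling constraint functions are $c^{(i)}_j(x^{(i)}) = w_{ij} x^{(i)}$. Since $w_{ij} \in [0,1]$ and $x^{(i)} \in [0,1]$ (the personal constraint), each term $c^{(i)}_j(x^{(i)}) \in [0,1]$. On two neighboring instances (differing in one agent's values $v_i$ and weights $\{w_{ij}\}$), the best responses $x^{(i)}, x'^{(i)}$ are also in $[0,1]$, so $|c^{(i)}_j(x^{(i)}) - c'^{(i)}_j(x'^{(i)})| \leq 1$ for each $j$, giving $\sum_{j=1}^k |\cdot|^2 \leq k$, hence $\sigma \leq \sqrt{k}$.

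\textbf{Dual bound.} I will show that any (possibly infeasible) $x \in S$ with total violation $\delta$ has objective at most $\OPT + \tau \delta$, where $\tau = \max_{i,j} v_i / w_{ij}$. The idea is a standard ``scaling-down'' argument: starting from $x$, repair the violated constraints one at a time by decreasing some $x^{(i)}$ values. To reduce $\sum_i w_{ij} x^{(i)}$ by some $\delta_j \geq 0$ (the violation in constraint $j$), decrease the $x^{(i)}$'s with the smallest ratio $v_i/w_{ij}$; this loses at most $\tau \delta_j$ in objective value since the ``price per unit'' decrease in $w_{ij} x^{(i)}$ is at most $\tau$. Summing over all violated constraints, we obtain a feasible point $\tilde x$ whose objective value is at least (objective of $x$) $- \tau \sum_j \delta_j \geq$ (objective of $x$) $- \tau \delta$. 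Feasibility of $\tilde x$ then gives (objective of $x$) $\leq \OPT + \tau \delta$.

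\textbf{Width and conclusion.} The width is bounded by noting that $|\sum_i w_{ij} x^{(i)} - c_j| \leq \sum_i w_{ij} x^{(i)} + c_j \leq 2n$ (and since capacities $c_j$ are typically $O(n)$, this is $O(n)$); we simply use $w = n$ as in the previous examples. Plugging $\sigma = \sqrt{k}$, $\tau = \max_{i,j} v_i/w_{ij}$, $w = n$, and the number of coupling constraints equal to $k$ into \Cref{thm:bigthm} yields the total constraint violation bound $\alpha = O\bigl(\tfrac{k\sqrt{k}\log^{1/2}(n/\delta)}{\eps}\log\tfrac{nk}{\beta}\bigr)$ and objective at least $\OPT - \tau \alpha$, which is exactly the claimed bound. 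The main (minor) obstacle is the dual-bound verification; the other two parameters are immediate from the structure of the LP.
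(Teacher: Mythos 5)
Your proposal is correct and follows essentially the same route as the paper: the paper simply asserts the parameter values $\sigma = \sqrt{k}$, $\tau = \max_{i,j} v_i/w_{ij}$, and $w = n$ for the knapsack LP and invokes \Cref{thm:bigthm}, exactly as you do. Your explicit scaling-down verification of the dual bound is a welcome addition the paper omits (and is sound, modulo restricting the max to pairs with $w_{ij} > 0$).
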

\SUBSECTION{Allocations with Shared Resources}
We now give an example with auxiliary decision variables that
are not associated with private data. Suppose we have $n$ agents,
$m$ projects, and $k$ different resources. Each agent $i$ has private
valuations $\{v_{ij}\}$ over the projects. Each project requires a set
of resources in $R_j$, but the resources can be shared between
different projects. A unit of resource $r$ has cost $c_r$, and for any
project $j$ with $e_j$ enrolled agents, we require at least $e_j$
units of resources $r$ for every $r\in R_j$. We also assume that each
project requires at most $d$ distinct resources, and so the number of
coupling constraints is bounded by $md$. We further assume that
$v_{ij}, c_r\in [0, 1]$ for all $i, j$ and $r$. Our goal is to match
people to projects so that the welfare of the agents minus the total
cost of the resources is maximized, as illustrated by the following linear
program:
\[
\begin{aligned}
  & \textrm{maximize} & & \sum_{i = 1}^n \sum_{j=1}^m v_{ij}\, x_{ij} -
  \sum_{r=1}^k c_r \, y_r & & \\
  & \textrm{subject to} & & \sum_{i = 1}^n x_{ij} \leq y_r
  & & (\text{for } j\in [k]\text{ and } r\in R_j) \\
  & & & \sum_{j=1}^k x_{ij} \leq 1, \quad x_{ij}\geq 0 & & (\text{for } i
  \in[n], j\in[k]) .
\end{aligned}
\]
The private data lies in the valuations of the agents over the
projects, and two problem instances are neighboring if they only
differ in some agent $i$'s preferences over the projects. To fit this
problem into our general framework, we interpret the variables
$\{y_r\}$ as the ``public'' variables, controlled by ``agent $0$''. The
gradient sensitivity is bounded by $\sigma = \sqrt{2d}$. The problem
has a dual bound $\tau = 1$, and the width of the problem is bounded
by $w = n$.
\begin{corollary}
  With probability at least $1-\beta$, $\PDD(\cdot,
  \sqrt{2d}, 1, n, \eps,\delta, \beta)$ computes a fractional
  allocation $\bar x$ such that the total resource violation (resource
  shortage across all projects) is bounded by
\[
\alpha =
O\left(\frac{md\sqrt{d}\log^{1/2}(n/\delta)}{\eps}\log{\frac{nmd}{\beta}} \right),
\]
and the project matching along with the resource allocation gives
welfare at least $\OPT - \alpha\tau = \OPT - \alpha$.
\end{corollary}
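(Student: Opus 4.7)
The plan is to derive this corollary as a direct application of \Cref{thm:bigthm}, so the work is entirely in verifying that the three structural parameters of the allocation-with-shared-resources LP are $\sigma \leq \sqrt{2d}$, $\tau \leq 1$, and $w \leq n$, and counting that the number of coupling constraints is $k = md$. Once these bounds are in hand, substituting them into the error expression in \Cref{thm:bigthm} yields $\alpha = O\bigl(md\sqrt{d}\log^{1/2}(n/\delta)\,\log(nmd/\beta)/\eps\bigr)$ for both the total constraint violation and the welfare loss (the latter because $\tau = 1$).

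First I would verify the gradient sensitivity. When two instances differ in agent $i$'s valuations, only agent $i$'s best response $\{x_{ij}\}_j$ can change (the auxiliary variables $\{y_r\}$ are the same function of $\lambda$). The gradient coordinate for constraint $(j,r)$ with $r \in R_j$ shifts by exactly $x_{ij} - x'_{ij}$, so the squared change in the gradient vector is $\sum_j |R_j|(x_{ij}-x'_{ij})^2 \leq d\sum_j (x_{ij}-x'_{ij})^2$. Because $|x_{ij} - x'_{ij}| \leq 1$ and $\sum_j |x_{ij}-x'_{ij}| \leq \sum_j x_{ij} + \sum_j x'_{ij} \leq 2$ from the personal simplex constraint, we get $\sum_j (x_{ij}-x'_{ij})^2 \leq 2$, hence $\sigma^2 \leq 2d$. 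The width bound is easier: $\sum_i x_{ij} \leq n$ and $y_r$ can be restricted WLOG to $[0,n]$ (since raising $y_r$ only ever hurts the objective), so $|\sum_i x_{ij} - y_r| \leq n$.

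The most delicate piece will be the dual bound $\tau = 1$. The idea is to take any $(x, y) \in S$ whose total violation is $\Delta = \sum_{j,\, r \in R_j}(\sum_i x_{ij} - y_r)_+$ and exhibit a feasible solution $(x, y')$ whose objective is at most $\Delta$ smaller. Setting $y'_r = y_r + z_r$ with $z_r := \max_{j:\,r \in R_j}(\sum_i x_{ij} - y_r)_+$ makes $(x, y')$ feasible without changing the welfare term, and the resource cost grows by $\sum_r c_r z_r \leq \sum_r z_r \leq \sum_r \sum_{j:\,r\in R_j}(\sum_i x_{ij} - y_r)_+ = \Delta$ since $c_r \leq 1$. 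Thus $\mathrm{obj}(x,y) \leq \mathrm{obj}(x,y') + \Delta \leq \OPT + \Delta$, giving $\tau \leq 1$.

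With these bounds established and $k = md$, plugging into \Cref{thm:bigthm} with $(\sigma,\tau,w) = (\sqrt{2d},\,1,\,n)$ yields both the violation bound and the welfare bound claimed. Note that only the resource-cost variables $y_r$ pick up the extra ``agent $0$'' role; privacy of the resulting fractional allocation follows immediately from \Cref{thm:primal-jdp} applied to this instantiation of \PDD.
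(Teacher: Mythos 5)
Your proposal is correct and follows exactly the route the paper intends: it verifies $\sigma \le \sqrt{2d}$, $\tau = 1$, $w \le n$, and $k \le md$ for the shared-resources LP and then instantiates \Cref{thm:bigthm}; the paper simply asserts these parameter bounds without proof before stating the corollary. Your explicit derivations of the gradient sensitivity (via the simplex constraint giving $\sum_j(x_{ij}-x'_{ij})^2 \le 2$) and of the dual bound (repairing a violating solution by raising the $y_r$ at unit cost) supply exactly the justifications the paper leaves implicit.
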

\SUBSECTION{Equilibrium Computation in Aggregative Games}
A recent paper by~\citet{CKRW14} showed
that mixed strategy equilibria in \emph{aggregative games}\footnote{%
  They consider multi-dimensional aggregative games, a broad class of games that
  generalizes both anonymous games and weighted congestion games. For more
  details, see~\cite{CKRW14}.}
can be computed using an algorithm that repeatedly solves a
feasibility linear programs of the following form:
\[
\begin{aligned}
  \sum_{i=1}^n \sum_{\ell=1}^m c_{i\ell}^j\cdot x_{i\ell} &\leq \hat s
  & & (\text{for all } j\in [k]) \\
  -\sum_{i=1}^n \sum_{\ell=1}^m c_{i\ell}^j\cdot x_{i\ell} &\leq -\hat s
  & & (\text{for all } j\in [k]) \\
  x^{(i)} &\in B_i & & (\text{for all } i\in[n]) .
\end{aligned}
\]

Each agent $i$ controls the variables $x^{(i)} = (x_{i1}, \ldots ,
x_{im})$, which forms a probability distribution over $m$ actions. We
assume that each coefficient in the coupling constraint is bounded by
$|c_{i\ell}^j|\leq \gamma$. Note that we can add the objective $\min_x
0$ the LP without changing the problem, so it can be framed as a
linearly separable convex program. In particular, the gradient sensitivity is
bounded by $\sqrt{k} \gamma$. Since the objective function is a
constant, any positive number is a dual bound for this problem,
so we could use $\tau = 1$ as a dual bound. Also, the width is bounded
by $\gamma n$.


\begin{corollary}
  With probability at least $1-\beta$, $\PDD(\cdot,
  \sqrt{k}\gamma, 1, \gamma n, \eps, \delta, \beta)$ outputs a mixed
  strategy profile that has total violation across all of the
  constraints bounded by
\[
\alpha =
O\left(\frac{k\sqrt{k}\gamma\log^{1/2}(1/\delta)}{\eps}\log{\frac{nk\gamma}{\beta}}
\right).
\]
\end{corollary}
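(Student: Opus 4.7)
The plan is to apply \Cref{thm:bigthm} to the aggregative games feasibility LP after verifying that it sits inside the linearly separable convex programming framework and computing its three key parameters: gradient sensitivity $\sigma$, dual bound $\tau$, and width $w$. The LP is already written in the right shape: agent $i$ owns the block of variables $x^{(i)} = (x_{i1},\dots,x_{im})$ with personal constraint $x^{(i)} \in B_i$ (the simplex intersected with any further constraints), the objective is the trivial concave function $0$, and each of the $2k$ coupling constraints decomposes additively over the agents with terms $\pm \sum_\ell c_{i\ell}^j x_{i\ell}$ that depend only on $(x^{(i)}, c_i^{\,\cdot})$.

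For the gradient sensitivity, I would observe that replacing agent $i$'s private data changes her contribution to the $j$-th coupling gradient by at most $|\sum_\ell c_{i\ell}^j x_{i\ell} - \sum_\ell c_{i\ell}'^{\,j} x'_{i\ell}| \le \gamma$, since $x^{(i)}$ and $x'^{(i)}$ are probability distributions and every coefficient is bounded by $\gamma$ in absolute value. Summing the squared changes over the $2k$ constraints (each original constraint and its negation) yields $\sigma \le \sqrt{2k}\,\gamma = O(\sqrt{k}\gamma)$, matching the parameter fed into $\PDD$. For the width, each coupling constraint value is bounded by $|\sum_i \sum_\ell c_{i\ell}^j x_{i\ell}| \le \gamma n$, giving $w = \gamma n$.

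The mildly tricky step is the dual bound. The definition requires that whenever $\sum_{i,j}(c_j^{(i)}(x^{(i)}) - b_j)_+ \le \delta$, we have $\sum_i v^{(i)}(x^{(i)}) \le \OPT + \tau\delta$. Here the objective is identically $0$, so $\OPT = 0$ and the inequality becomes $0 \le \tau\delta$, which is satisfied for any $\tau \ge 0$. Taking $\tau = 1$ is therefore valid, which is the value we pass to $\PDD$. I would emphasize that this is the reason $\tau$ does not appear in the corollary's error bound: we only invoke the \emph{feasibility} portion of \Cref{thm:bigthm}, not the objective portion, because for a pure feasibility problem the approximate optimality guarantee is vacuous.

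Plugging $\sigma = O(\sqrt{k}\gamma)$, $w = \gamma n$, and the $2k$ coupling constraints into the total-violation bound of \Cref{thm:bigthm} gives
\begin{mathdisplayfull}
\sum_{j=1}^{2k}\sum_{i=0}^n\left(c_j^{(i)}(\bar x^{(i)}) - b_j\right)_+ = O\!\left(\frac{k \cdot \sqrt{k}\gamma \cdot \log^{1/2}(\gamma n/\delta)}{\eps}\log\frac{\gamma n k}{\beta}\right),
\end{mathdisplayfull}
which is exactly the claimed bound once the $\log(\gamma n)$ inside the square root is absorbed into the outer $\log(nk\gamma/\beta)$ factor. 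The main obstacle I anticipate is simply stating the dual-bound argument cleanly in the degenerate zero-objective case; the remaining steps are direct parameter substitutions.
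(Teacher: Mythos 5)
Your proposal is correct and follows the same route as the paper: cast the feasibility LP as a linearly separable convex program with zero objective, bound the gradient sensitivity by $O(\sqrt{k}\gamma)$ (using that each $x^{(i)}$ is a probability distribution and $|c_{i\ell}^j|\le\gamma$), take width $\gamma n$ and any $\tau>0$ as a dual bound since the objective is constant, and plug into \Cref{thm:bigthm}. Your explicit justification that $\tau=1$ is valid in the degenerate zero-objective case, and that only the feasibility half of the theorem is needed, is exactly the (briefly stated) reasoning in the paper.
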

\begin{remark*}
  This leads to an improvement over the private LP
  solver in~\citet{CKRW14}, which gives a violation bound of
  $\tilde{O}\left(\frac{\sqrt{n}\gamma}{\sqrt{\eps}} \right)$. Since
  $n$ is large and $k$ is a constant in their setting, our violation
  bound is considerably better with no the polynomial dependence
  on $n$.
\end{remark*}
\fi

\section{Variations on a Theme} \label{sec:extensions}

\ifshort
\else
When the convex program has additional structure, we can extend \PDD to achieve
additional guarantees.  In this section, we'll discuss two extensions to \PDD:
deriving prices to achieve approximate truthfulness, and guaranteeing exact
feasibility.
\fi

\SUBSECTION{Achieving Approximate Truthfulness}

In the course of computing an approximate (primal) solution to the convex
program, $\PDD$ also computes an approximate dual solution, which has a
standard interpretation as \emph{prices} (e.g, see \citet{cvxbook}).
Informally, if we think of each constraint as modeling a finite resource that
is divided between the variables, the dual variable for the constraint
corresponds to how much each variable should ``pay'' for using that resource.
If each agent has a real-valued \emph{valuation function} for their portion
of the solution, and the objective of the convex program is the sum of the
valuation functions (a \emph{social welfare} objective), then we can make the
prices interpretation precise: Each agent's solution will approximately
maximize her valuation less the prices charged for using each constraint,
where the prices are the approximate dual solution produced by \PDD.

Since the dual solution (and hence the final price vector) is computed under
standard differential privacy, we can also guarantee \emph{approximate
truthfulness}: an agent can't substantially increase her expected utility by
misreporting her private data. Informally, this is because agents can only
influence the prices to a small degree, so since the algorithm is maximizing
their utility function subject to the final prices (which they have little
influence on), agents are incentivized to report their true utility. One
technical difficulty is that since we are computing only an approximate
primal and dual solution, there may be a small number of agents who are not
getting their approximately utility maximizing allocation. For approximate
truthfulness, we will modify their allocations to assign them to their
favorite solution at the dual prices. This may further violate some primal
constraints, but only by a small amount.
\ifshort
For lack of space, we defer further details of our algorithm, \TPDD, to the
extended version.
\else

Let us first define the class of optimization problems we will solve truthfully.

\begin{definition} \label{def:welfare-cvx}
  Let the data universe be $\cU$, and suppose there are $n$ individuals.  A
  class of \emph{welfare maximization problems} is a class of convex programs
  $\cO = (S, v, c, b)$ associated to $\cU$, with the following additional
  properties:
  \begin{itemize}
    \item Bounded welfare: $v^{(i)} (x^{(i)}) \leq V$ for all agents $i$ and
      points $x^{(i)} \in S$, for all feasible sets $S$ for agent $i$ in the
      class.
    \item Bounded constraints: $\sum_{j = 1}^k c^{(i)}_j (x^{(i)}) \leq C_1$ for
      all agents $i$ points $x^{(i)} \in S$, for all feasible sets $S$ for agent
      $i$ in the class.
    \item Null action: for each agent $i$, there exists $x^{(i)} \in S^{(i)}$
      such that $v^{(i)} (x^{(i)}) = c^{(i)}_j (x^{(i)}) = 0$ for all
      constraints $j$. We call such a point a \emph{null action for $i$}.
  \end{itemize}
\end{definition}

Now, we can define the personal data and utility function for each player. As is
typical in the literature, agents' utilities will be quasilinear in money.

\begin{definition} \label{def:utility}
  Recall that agent $i$ has a compact \emph{feasible set} $S^{(i)} \subseteq
  \RR^d$, \emph{valuation function} $v^{(i)} : \RR^d \rightarrow \RR$, and
\emph{constraint functions} $c^{(i)}_j : \RR^d \rightarrow \RR$.  We assume that
$v^{(i)}(x^{(i)}) = 0$ for any $x^{(i)} \notin S^{(i)}$. An agent's
\emph{utility} for solution $x^{(i)}$ and price $p^{(i)}(x^{(i)}) \in \RR$ is
$v^{(i)}
(x^{(i)}) - p^{(i)}(x^{(i)})$.
\end{definition}

Our truthful modification to \PDD will assign each constraint a
price $\lambda_j \in \RR$, and charge each agent $i$ price
\[
  p^{(i)}(x^{(i)}) = \sum_{j = 1}^k \lambda_j c^{(i)}_j (x^{(i)}) ,
\]
where agent $i$'s solution is $x^{(i)}$. To guarantee truthfulness, we want
every agent to have a solution that is approximately maximizing her utility
given the fixed prices on constraints. This motivates the following definition:

\begin{definition} \label{def:satisfied}
  Let $\alpha \geq 0$ and prices $\lambda \in \RR^k$ be given. An agent $i$ with
  solution $x^{(i)}$ is \emph{$\alpha$-satisfied} with respect to these prices
  if
  \[
    v^{(i)} (x^{(i)}) - \sum_{j = 1}^k \lambda_j c^{(i)}_j (x^{(i)})
    \geq
    \max_{x \in S^{(i)}} v^{(i)} (x) - \sum_{j = 1}^k \lambda_j c^{(i)}_j (x) -
    \alpha .
  \]
\end{definition}

We are now ready to present our approximately truthful mechanism for
welfare maximization. The idea is to run \PDD, obtaining solution
$\bar{x}$ and approximate dual solution $\bar{\lambda}$, which we take
to be the prices on constraints. For $\alpha$ to be specified later,
we change the allocation for each agent $i$ who is not
$\alpha$-satisfied to a primal allocation $\tilde{x}^{(i)}$ that
maximizes her utility at prices $\bar{\lambda}$. Combining $\tilde{x}$
with $\bar{x}$ for $\alpha$-satisfied agents gives the final solution.
We call this algorithm $\TPDD$, formally described in \Cref{alg:truthful}.

\begin{algorithm}[h]
  \caption{$\TPDD(\cO, \sigma, \tau, w, \eps, \delta, \beta)$}
  \label{alg:truthful}
  \begin{algorithmic}
    \STATE{{\bf Input}: Welfare maximization problem $\cO = (S, v, c,
      b)$ with $n$ agents and $k$ coupling constraints, gradient
      sensitivity bounded by $\sigma$, a dual bound $\tau$, width
      bounded by $w$, and privacy parameters $\eps > 0, \delta \in (0,
      1/2)$, confidence parameter $\beta \in (0, 1)$, and truthfulness
      parameter $\alpha$.}

    \STATE{{\bf Run} $\PDD$:
      \[
        (\bar{x}, \bar{\lambda}) := \PDD(\cO, \sigma, \tau, w, \epsilon, \delta,
        \beta) .
      \]
    }
    \STATE{{\bf for each } agent $i = 0 \dots n$:}
    \INDSTATE[1]{Let the price of the solution be:
      \[
        p^{(i)} (x) := \sum_{j = 1}^k \bar{\lambda}_j c^{(i)}_j (x) .
      \]
    }
    \INDSTATE[1]{{\bf if } $\bar{x}^{(i)}$ does not satisfy
      \[
        v^{(i)} (\bar{x}^{(i)}) - p^{(i)}(x^{(i)})
        \geq \max_{x \in S^{(i)}} v^{(i)} (x) - p^{(i)}(x) - \alpha ,
      \]
    }
    \INDSTATE[2]{Set
      \[
        \bar{x}^{(i)} := \argmax_{x \in S^{(i)}} v^{(i)} (x) - p^{(i)}(x) .
      \]
    }
    \STATE{{\bf Output}: $\bar{x}^{(i)} \text{ and price } p^{(i)}(x^{(i)})
      \text{ to agents } i \in [n]$.
    }
  \end{algorithmic}
\end{algorithm}

Let's first show that the final allocation is approximately feasible, and
approximately maximizing the objective (the \emph{social welfare}). Both proofs
follow by bounding the number of $\alpha$-unsatisfied agents, and arguing that
since the intermediate solution $(\bar{x}, \bar{\lambda})$ is an approximate
equilibrium (by \Cref{cor:equil}), changing the allocation of the unsatisfied agents will only degrade the feasibility and optimality a bit more (beyond what is
guaranteed by \Cref{thm:bigthm}).

\begin{lemma} \label{lem:unsat}
  Let $\alpha > 0$ be given, and let $(\bar{x}, \bar{\lambda})$ be an
  $\cR_p$-approximate equilibrium of the Lagrangian game. Then, the number of
  bidders who are not $\alpha$-satisfied is at most $\cR_p/\alpha$.
\end{lemma}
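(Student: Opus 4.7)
The plan is to exploit the linear separability of the Lagrangian: since the primal best-response problem decomposes across agents, each agent contributes an independent non-negative term to the ``gap'' between the current primal payoff $\cL(\bar{x},\bar{\lambda})$ and the best-response payoff $\max_{x \in S} \cL(x,\bar{\lambda})$, and the approximate equilibrium condition caps the total gap at $\cR_p$.

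Concretely, I would first rewrite
\[
  \cL(x,\bar{\lambda}) = \sum_{i=0}^n \Big(v^{(i)}(x^{(i)}) - \sum_{j=1}^k \bar{\lambda}_j c^{(i)}_j(x^{(i)})\Big) + \sum_{j=1}^k \bar{\lambda}_j b_j,
\]
and note that maximizing over $x \in S$ splits as independent maximizations over $x^{(i)} \in S^{(i)}$, since the constant term $\sum_j \bar{\lambda}_j b_j$ is irrelevant. Hence
\[
  \max_{x \in S} \cL(x,\bar{\lambda}) - \cL(\bar{x},\bar{\lambda}) = \sum_{i=0}^n \Delta_i,
\]
where $\Delta_i := \max_{x \in S^{(i)}}\bigl(v^{(i)}(x) - \sum_j \bar{\lambda}_j c^{(i)}_j(x)\bigr) - \bigl(v^{(i)}(\bar{x}^{(i)}) - \sum_j \bar{\lambda}_j c^{(i)}_j(\bar{x}^{(i)})\bigr) \geq 0$.

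Next, observe that by \Cref{def:satisfied}, an agent $i$ who is not $\alpha$-satisfied at prices $\bar{\lambda}$ satisfies $\Delta_i > \alpha$ (or $\geq \alpha$ depending on the strictness convention). On the other hand, the approximate equilibrium condition for the primal player gives
\[
  \max_{x \in S} \cL(x,\bar{\lambda}) \leq \cL(\bar{x},\bar{\lambda}) + \cR_p,
\]
so $\sum_i \Delta_i \leq \cR_p$. If there are $N$ unsatisfied agents, then $N \alpha \leq \sum_i \Delta_i \leq \cR_p$, yielding $N \leq \cR_p/\alpha$.

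There is no real obstacle here; the only thing to be careful about is making sure the decomposition of the best-response is stated using the correct form of the Lagrangian (i.e., that the terms involving $b_j$ and the variables of other agents cancel cleanly) and that the approximate-equilibrium inequality is applied in the correct direction — namely, the one bounding how far $\bar{x}$ falls short of the primal best response against $\bar{\lambda}$, rather than the dual-side inequality.
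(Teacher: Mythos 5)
Your proposal is correct and is essentially the paper's own argument: both decompose the Lagrangian at fixed $\bar{\lambda}$ into independent per-agent utility terms plus the constant $\sum_j \bar{\lambda}_j b_j$, observe that each $\alpha$-unsatisfied agent contributes a gap of at least $\alpha$ to the primal best-response deficit, and cap the total deficit by $\cR_p$ via the primal-side approximate-equilibrium inequality. The only difference is presentational---you make the nonnegative per-agent gaps $\Delta_i$ explicit, which the paper leaves implicit.
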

\begin{proof}
  Note that the Lagrangian can be written as the sum of agent utilities plus a
  constant:
  \[
    \cL(x, \lambda) = \sum_{i = 0}^n \left( v^{(i)} (x^{(i)}) -
    \sum_{j = 1}^k \lambda_j c^{(i)}_j(x^{(i)}) \right) + \sum_{j = 1}^k
    \lambda_j b_j
    = \sum_{i = 0}^n u^{(i)} (x^{(i)}, p^{(i)}) + \sum_{j = 1}^k \lambda_j b_j ,
  \]
  so every $\alpha$-unsatisfied agent that deviates to her favorite solution at
  prices $\lambda$ increases the Lagrangian by at least $\alpha$; suppose that
  there are $m$ such bidders.  Since $(\bar{x}, \bar{\lambda})$ is an
  $\cR_p$-approximate equilibrium, we can bound the change in the Lagrangian by
  \[
    \alpha m \leq \cL(x, \bar{\lambda}) -  \cL(\bar{x}, \bar{\lambda}) \leq
    \cR_p .
  \]
  So, at most $m \leq \cR_p/\alpha$ agents can be $\alpha$-unsatisfied.
\end{proof}

This immediately gives us approximate feasibility and optimality for the output
of \TPDD:

\begin{corollary} \label{cor:truthful-acc} Let $\beta > 0$. Running
  \TPDD on a welfare optimization problem $\cO = (S, v, c, b)$ with
  gradient sensitivity bounded by $\sigma$, a dual bound $\tau$, and
  width bounded by $w$, produces a point $\bar{x}$ such that all
  agents are $\alpha$-satisfied, and with probability at least $1 -
  \beta$:
  \begin{itemize}
    \item the maximum violation of any constraint is at most $\cR_p(2/\tau +
      C_1/\alpha)$; and
    \item the welfare $\sum_{i = 0}^n v^{(i)} (\bar{x}^{(i)})$ is at least $\OPT
      - \cR_p(2 + V/\alpha)$, where $\OPT$ is the optimal welfare.
  \end{itemize}
  As above,
  \[
    \cR_p = O \left( \frac{ k \tau \sigma \log^{1/2}(w/\delta) }{\eps} \log
      \frac{wk}{\beta} \right) .
  \]
\end{corollary}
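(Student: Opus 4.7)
My plan is to derive the corollary by layering two easy worst-case perturbation arguments on top of what we already know about \PDD from \Cref{thm:bigthm} and \Cref{lem:unsat}. The key observation is that \TPDD modifies the output of \PDD by changing the allocation of only the $\alpha$-unsatisfied agents, and \Cref{lem:unsat} gives a quantitative bound on how many such agents there are.

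First, the satisfaction property is essentially by construction: an agent who is already $\alpha$-satisfied keeps her \PDD allocation, while an agent who is not is reassigned to $\argmax_{x \in S^{(i)}} v^{(i)}(x) - p^{(i)}(x)$, which is $0$-satisfied (and hence $\alpha$-satisfied) by definition. So every output allocation satisfies the utility-maximization condition at prices $\bar{\lambda}$. Next, by \Cref{cor:equil}, the pair $(\bar{x},\bar{\lambda})$ produced by the inner call to \PDD is an $\cR_p$-approximate equilibrium with probability at least $1-\beta$; I will condition on this event throughout.

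For the max-constraint violation bound, I would start from the total-violation bound $\sum_j \sum_i (c^{(i)}_j(\bar{x}^{(i)}) - b_j)_+ \leq 2\cR_p/\tau$ that \Cref{thm:bigthm} gives for the unmodified \PDD output (which trivially upper-bounds the violation of any single constraint). \Cref{lem:unsat} tells us that at most $\cR_p/\alpha$ agents get reassigned. By the bounded-constraints assumption in \Cref{def:welfare-cvx}, each such reassignment can change the value of any single constraint $j$ by at most $C_1$ (since $c^{(i)}_j \leq \sum_{j'} c^{(i)}_{j'} \leq C_1$). So for any fixed constraint $j$ the violation after reassignment is at most the original $2\cR_p/\tau$ plus an additive $C_1 \cdot \cR_p/\alpha$, giving the claimed $\cR_p(2/\tau + C_1/\alpha)$.

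For the welfare bound, I would again start from \Cref{thm:bigthm}, which gives $\sum_i v^{(i)}(\bar{x}^{(i)}) \geq \OPT - 2\cR_p$ for the unmodified output. Using the bounded-welfare condition $v^{(i)} \leq V$ together with the null-action hypothesis (so $v^{(i)} \geq 0$ without loss), each of the at most $\cR_p/\alpha$ reassignments can decrease the welfare by at most $V$, yielding a final welfare of at least $\OPT - 2\cR_p - V\cdot \cR_p/\alpha = \OPT - \cR_p(2 + V/\alpha)$, as required. The only subtlety (and, I think, the main thing to be careful about) is the direction of the welfare change: we're choosing $\tilde{x}^{(i)}$ to maximize utility, not welfare, so $v^{(i)}(\tilde{x}^{(i)})$ could in principle be smaller than $v^{(i)}(\bar{x}^{(i)})$, and the worst-case drop of $V$ per reassigned agent is what gives the additive $V\cR_p/\alpha$ loss. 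Plugging in the stated value of $\cR_p$ from \Cref{lem:regret} finishes the proof.
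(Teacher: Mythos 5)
Your proposal is correct and follows essentially the same route as the paper: the paper's proof is exactly the one-line combination of \Cref{thm:bigthm} (for the base feasibility and welfare guarantees of \PDD), \Cref{lem:unsat} (to bound the number of reassigned agents by $\cR_p/\alpha$), and the bounded-welfare and bounded-constraints conditions of \Cref{def:welfare-cvx} (to charge each reassignment at most $V$ in welfare and $C_1$ per constraint). You have merely spelled out the same accounting in more detail, including the correct observation that the total violation from \Cref{thm:bigthm} upper-bounds any single constraint's violation.
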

\begin{proof}
  Follows directly from \Cref{thm:bigthm}, since any unsatisfied agent changes
  the welfare by at most $V$ and has total contribution to all constraints at
  most $C_1$, and by \Cref{lem:unsat} there are at most $\cR_p/\alpha$ unsatisfied
  agents.
\end{proof}

Additionally, \TPDD is approximately \emph{individually rational}:
no agent will have large negative utility. More precisely:

\begin{lemma} \label{lem:IR}
  Suppose \TPDD compute solution $x^{(i)}$ and payment
  $p^{(i)}(x^{(i)})$ for agent $i$. Then,
  \[
    v^{(i)}(x^{(i)}) - p^{(i)}(x^{(i)}) \geq - \alpha.
  \]
\end{lemma}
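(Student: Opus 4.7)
The plan is to do a simple case analysis on whether agent $i$'s allocation was modified by \TPDD, and in each case compare the agent's utility to the utility of a \emph{null action}, which is available to every agent by \Cref{def:welfare-cvx} and has value $0$ and contributes $0$ to every constraint, hence zero price.

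\textbf{Case 1: the agent was $\alpha$-satisfied by $\bar{x}^{(i)}$ from \PDD.} In this case $x^{(i)} = \bar{x}^{(i)}$. Let $x^{(i)}_0 \in S^{(i)}$ be a null action for agent $i$. Then $v^{(i)}(x^{(i)}_0) = 0$ and $c^{(i)}_j(x^{(i)}_0) = 0$ for all $j$, so $p^{(i)}(x^{(i)}_0) = \sum_{j=1}^k \bar\lambda_j c^{(i)}_j(x^{(i)}_0) = 0$. Applying \Cref{def:satisfied} with this null action as a candidate,
\[
    v^{(i)}(x^{(i)}) - p^{(i)}(x^{(i)}) \;\geq\; \max_{x \in S^{(i)}} \bigl( v^{(i)}(x) - p^{(i)}(x) \bigr) - \alpha \;\geq\; v^{(i)}(x^{(i)}_0) - p^{(i)}(x^{(i)}_0) - \alpha \;=\; -\alpha.
\]

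\textbf{Case 2: the agent was modified.} Then $x^{(i)} = \arg\max_{x\in S^{(i)}} v^{(i)}(x) - p^{(i)}(x)$, so by the same comparison to the null action $x^{(i)}_0$,
\[
    v^{(i)}(x^{(i)}) - p^{(i)}(x^{(i)}) \;\geq\; v^{(i)}(x^{(i)}_0) - p^{(i)}(x^{(i)}_0) \;=\; 0 \;\geq\; -\alpha.
\]

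There is really no main obstacle here; the lemma is essentially a direct consequence of the definition of $\alpha$-satisfied together with the existence of a null action guaranteed by \Cref{def:welfare-cvx}. The only thing to verify is that $\bar\lambda_j \geq 0$ so that a zero-valued, zero-cost null action indeed has price zero (it does, since dual variables live in $\Lambda \subseteq \RR^k_+$), and that the null action is in the agent's feasible set (it is, by assumption). The case analysis above then yields the bound in both branches of \TPDD.
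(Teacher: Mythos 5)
Your proof is correct and follows essentially the same route as the paper: compare the agent's utility to that of a null action (which has valuation $0$, zero contribution to every constraint, and hence zero price), and use the fact that after \TPDD every agent is $\alpha$-satisfied at the prices $\bar{\lambda}$. The paper collapses your two cases into the single observation that the inequality $v^{(i)}(x^{(i)}) - p^{(i)}(x^{(i)}) \geq \max_{x \in S^{(i)}} v^{(i)}(x) - p^{(i)}(x) - \alpha$ holds for all agents after the reassignment step, but the content is identical.
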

\begin{proof}
  The null action in agent $i$'s feasible region has utility equal to $0$: the
  valuation is $0$, and the constraint functions are all $0$, so the payment is
  $0$. The claim follows, since
  \[
    v^{(i)}(x^{(i)}) - p^{(i)}(x^{(i)}) \geq \max_{x \in S^{(i)}} v^{(i)}(x) -
    p^{(i)}(x) - \alpha .
  \]
\end{proof}

Now, let us turn to showing approximate truthfulness. We now suppose that agents
may misreport their feasible set $S^{(i)}$ and valuation function $v^{(i)}$, but
not their constraint functions $c^{(i)}_j$; we want to show that agents can't
gain much in expected utility by misreporting their inputs.\footnote{%
  More precisely, we can allow agents to misreport their constraint functions as
  long as the functions are \emph{verifiable}: they must feel their true
  contribution to the constraint (e.g., in terms of payments) when computing
  their utility.}

More formally, we want to show \emph{approximate truthfulness}. We will give a
combined multiplicative and additive guarantee:

\begin{definition} \label{def:truthful}
  Fix a class of welfare maximization problems. Consider a randomized function
  $f$ that takes in a convex optimization problem $\cO = (S, v, c, b)$, and
  outputs a point $x^{(i)} \in S^{(i)}$ and a price $p^{(i)}(x^{(i)}) \in \RR$
  to each
  agent $i$.  We say $f$ is \emph{$(\rho, \gamma)$-approximately truthful} if every
  agent $i$ with true feasible set $S^{(i)}$ and valuation function $v^{(i)}$
  has expected utility satisfying
  \[
    \Ex{ f(\cO') }{ v^{(i)}(x^{(i)}) -  p^{(i)}(x^{(i)}) } \leq
    \rho \Ex{ f(\cO) } { v^{(i)}(x^{(i)}) - p^{(i)}(x^{(i)}) } + \gamma ,
  \]
  where $\cO = (S^{(i)}, S^{(-i)}, v^{(i)}, v^{(-i)}, c, b)$ is the problem when
  agent $i$ truthfully reports, and $\cO' = (S'^{(i)}, S^{(-i)}, v'^{(i)},
  v^{(-i)}, c, b)$ is the problem when agent $i$ deviates to any report
  $S'^{(i)}, v'^{(i)}$ in the class.

\end{definition}
\begin{remark*}
  We note that the inequality must hold for any data from other agents
  $S^{(-i)}, v^{(-i)}$, not necessarily their true data. Put another way,
  truthful reporting should be an approximate dominant strategy (in
  expectation), not just an approximate Nash equilibrium.
\end{remark*}

We use a standard argument for achieving truthfulness via differential privacy.
First, we show that conditioning on the prices being fixed, an agent can't gain
much utility by changing her inputs; this holds deterministically, since agents
are getting their approximately utility maximizing solution given the prices and
their reported utility.

\begin{lemma} \label{lem:truthful-br}
  Condition on \TPDD computing a fixed sequence of dual variables
  prices $\lambda^{(1)}, \dots, \lambda^{(T)}$, and let $\bar{\lambda} = (1/T)
  \sum_{t = 1}^T \lambda^{(t)}$ be the final prices.
  Suppose an agent $i$ has true feasible set $S^{(i)}$ and valuation $v^{(i)}$.
  Let $u^{(i)}(S, v, \{ \lambda^{(t)} \})$ be $i$'s utility (computed according
  to her true feasible set, valuation, and final prices $\bar{\lambda}$) for the
  output $\bar{x}^{(i)}$ \TPDD computes when the sequence of dual
  variables is $\{ \lambda^{(t)} \}$ and  $i$ reports feasible set $S$ and
  valuation $v$. Then,
  \[
    u^{(i)} (S', v', \{ \lambda^{(t)} \})
    \leq
    u^{(i)} (S^{(i)}, v^{(i)}, \{ \lambda^{(t)} \}) + \alpha
  \]
  for every set $S'$ and valuation $v'$.
\end{lemma}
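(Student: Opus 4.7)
The plan is to reduce the lemma to a direct consequence of the $\alpha$-satisfaction guarantee that \TPDD enforces in its modification step. Observe first that, since we are conditioning on the sequence $\{\lambda^{(t)}\}$ being fixed, the average $\bar{\lambda} = (1/T)\sum_t \lambda^{(t)}$ is determined, and because the constraint functions $c^{(i)}_j$ are verifiable, the price function $p^{(i)}(x) = \sum_j \bar{\lambda}_j c^{(i)}_j(x)$ is entirely determined by $\{\lambda^{(t)}\}$ alone; in particular it does not depend on agent $i$'s report $(S, v)$. So the problem reduces to asking which report maximizes agent $i$'s true utility at a fixed price function.

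Next I would show that, regardless of the report $(S, v)$, the final allocation that \TPDD returns to agent $i$, call it $\tilde{x}^{(i)}(S, v)$, is $\alpha$-satisfying with respect to $(S, v)$ at prices $\bar{\lambda}$. This is immediate from the modification step: either the pre-modification average best response from \PDD already satisfies the $\alpha$-satisfaction condition, or it is replaced by an exact maximizer of $v(x) - p^{(i)}(x)$ over $S$. Applied to a truthful report, this yields the lower bound
\[
  u^{(i)}(S^{(i)}, v^{(i)}, \{\lambda^{(t)}\})
  \;\geq\;
  \max_{x \in S^{(i)}}\bigl[\, v^{(i)}(x) - p^{(i)}(x) \,\bigr] - \alpha.
\]

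Then I would upper bound the true utility from any misreport $(S', v')$ by the same maximum. The lying allocation $\tilde{x}^{(i)}(S', v')$ lies in $S'$, but possibly not in $S^{(i)}$. If it lies in $S^{(i)}$, the upper bound is immediate. Otherwise, by the convention in \Cref{def:utility} that $v^{(i)} = 0$ outside $S^{(i)}$, her true utility equals $-p^{(i)}(\tilde{x}^{(i)}(S', v'))$, which is at most $0$ using $\bar{\lambda} \in \Lambda \subseteq \RR^k_+$ together with the non-negativity of the $c^{(i)}_j$ in the welfare problems of interest; and $0$ is itself bounded by the max on the right via the null action supplied by \Cref{def:welfare-cvx}. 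Chaining this upper bound with the lower bound above yields $u^{(i)}(S',v',\{\lambda^{(t)}\}) \leq u^{(i)}(S^{(i)},v^{(i)},\{\lambda^{(t)}\}) + \alpha$, as claimed.

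I do not expect a substantive obstacle here: the structure of \TPDD was engineered precisely so that, conditional on the prices, each agent receives an approximately utility-maximizing bundle at those prices, which is exactly the Walrasian-equilibrium flavor of truthfulness this lemma is meant to capture. The only slightly delicate sub-case is the one in which a misreporting agent is assigned a bundle outside her true feasible set, and this is handled cleanly by the null-action observation together with non-negativity of the dual prices.
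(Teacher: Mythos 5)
Your proposal is correct and follows essentially the same route as the paper's proof: lower-bound the truthful utility via the $\alpha$-satisfaction guarantee, then upper-bound the utility of any misreported allocation by the same maximum, handling the case $x' \notin S^{(i)}$ via the convention $v^{(i)}(x') = 0$, non-negativity of the prices, and the null action. The only difference is cosmetic: you make explicit the non-negativity of $\bar{\lambda}_j c^{(i)}_j(x')$ that the paper leaves implicit.
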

\begin{proof}
  By \Cref{cor:truthful-acc}, we know all agents are $\alpha$-satisfied, so
  \[
    u^{(i)}(S^{(i)}, v^{(i)}, \{ \lambda^{(t)} \}) \geq \max_{x \in S^{(i)}}
    v^{(i)} (x) - \sum_{j = 1}^k \bar{\lambda}_j c^{(i)}_j(x) - \alpha .
  \]
  Now, we claim that
  \[
    \max_{x \in S^{(i)}} v^{(i)} (x) - \sum_{j = 1}^k \bar{\lambda}_j
    c^{(i)}_j(x)
    \geq
    v^{(i)} (x') - \sum_{j = 1}^k \bar{\lambda}_j c^{(i)}_j(x')
  \]
  for any $x'$. This is clear for $x' \in S^{(i)}$. For $x' \notin S^{(i)}$,
  we know $v^{(i)}(x') = 0$ so the right hand side is at most $0$. Since there
  is a null action in $S^{(i)}$, the left hand side is at least $0$, so the
  inequality is true for $x' \notin S^{(i)}$.

  In particular, letting $x'$ be $i$'s solution when reporting $(S', v')$
  against dual variables $\{ \lambda^{(t)} \}$, the right hand side is precisely
  $u^{(i)} (S', v', \{ \lambda^{(t)} \})$, and we have
  \[
    u^{(i)}(S^{(i)}, v^{(i)}, \{ \lambda^{(t)} \}) \geq u^{(i)}(S', v', \{
    \lambda^{(t)} \}) - \alpha
  \]
  as desired.
\end{proof}

Now, we use differential privacy. Since the sequence of dual variables (and
final dual prices) are computed under standard differential privacy, any agent
misreporting her input only has a limited effect on the prices. More formally,
we will use the following standard lemma about the expected value of a
differentially private mechanism.

\begin{lemma} [\citet{MT07}] \label{lem:ex-priv}
  Suppose we have a non-negative real valued mechanism $\cM : \cD \rightarrow
  \RR^+$ that is $(\epsilon, \delta)$-differentially private, and suppose that
  the output is bounded by $\cM(D) \leq B$ for all inputs $D \in \cD$. Then, if
  $D, D' \in \cD$ are neighboring inputs,
  \[
    \EE \left[ \cM(D) \right] \leq e^\epsilon \EE \left[ \cM(D') \right] +
    \delta B .
  \]
\end{lemma}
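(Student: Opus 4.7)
The plan is to prove this by writing the expectation as an integral of the tail probability, applying the differential privacy guarantee pointwise to the tail, and then using the boundedness assumption to integrate the $\delta$ term to a finite quantity.

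First I would use the standard identity for non-negative random variables: $\EE[\cM(D)] = \int_0^\infty \Pr[\cM(D) \geq t]\,dt$. Because $\cM(D) \leq B$ always, $\Pr[\cM(D) \geq t] = 0$ for $t > B$, so the integral truncates to $\int_0^B \Pr[\cM(D) \geq t]\,dt$. This truncation is the key use of the boundedness hypothesis.

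Next, for each fixed threshold $t$, I would apply $(\eps,\delta)$-differential privacy to the set $S_t = [t, \infty) \subseteq \RR^+$, giving $\Pr[\cM(D) \geq t] \leq e^{\eps}\Pr[\cM(D') \geq t] + \delta$. Substituting this pointwise bound into the truncated integral and splitting gives
\[
\EE[\cM(D)] \leq e^{\eps}\int_0^B \Pr[\cM(D') \geq t]\,dt + \int_0^B \delta\,dt \leq e^{\eps}\EE[\cM(D')] + \delta B,
\]
where the first term uses $\int_0^B \Pr[\cM(D') \geq t]\,dt \leq \int_0^\infty \Pr[\cM(D') \geq t]\,dt = \EE[\cM(D')]$ since $\cM(D')$ is non-negative.

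There is no real obstacle here; the only subtle point is that the $\delta$ slack in the DP definition, if unbounded, would not integrate to something finite, which is exactly why the boundedness hypothesis $\cM(D) \leq B$ is needed in the statement. The argument generalizes verbatim from the scalar case to any non-negative random output, so it applies to any such mechanism.
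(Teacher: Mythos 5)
Your proof is correct: the tail-integral identity $\EE[\cM(D)] = \int_0^\infty \Pr[\cM(D) \geq t]\,dt$, the pointwise application of $(\eps,\delta)$-privacy to the measurable sets $[t,\infty)$, and the truncation at $B$ to control the accumulated $\delta$ slack together give exactly the claimed bound. The paper itself does not prove this lemma (it is cited from prior work), but your argument is the standard one and there is nothing to add.
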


This is enough to argue that \TPDD is approximately truthful in
expectation.

\begin{theorem} \label{thm:truthful}
  Suppose we run \TPDD on a class of welfare optimization problems
  with a dual bound $\tau$. Then, \TPDD is $(\rho,
  \gamma)$-approximately truthful for
  \[
    \rho = e^\eps ,
    \qquad
    \gamma = \alpha(2e^\eps - 1) + \delta \max \{ V, C_1 \tau \sqrt{k} \} .
  \]
\end{theorem}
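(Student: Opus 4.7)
The plan is to combine the two carefully prepared ingredients: the ``pointwise truthfulness'' of \Cref{lem:truthful-br}, which handles the algorithmic slack from approximate satisfaction, and the differential privacy of the dual variables (\Cref{thm:dual-priv}), which handles the influence the agent can exert through the public signal. Bridging them is the standard expected-value lemma \Cref{lem:ex-priv}.

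Fix any misreport $(S'^{(i)}, v'^{(i)})$ and let $\Lambda(D) = \{\lambda^{(t)}\}$ and $\Lambda(D')$ denote the random sequences of dual variables produced by $\PDD$ when agent $i$ is truthful vs.\ when she misreports (with all other inputs fixed). First I would invoke \Cref{lem:truthful-br} realization-by-realization: for every sequence $\Lambda$,
\[
u^{(i)}(S'^{(i)}, v'^{(i)}, \Lambda) \;\leq\; u^{(i)}(S^{(i)}, v^{(i)}, \Lambda) + \alpha .
\]
Taking expectations with respect to $\Lambda(D')$ turns this into
\[
\EE\!\left[u^{(i)}(S'^{(i)}, v'^{(i)}, \Lambda(D'))\right] \;\leq\; \EE\!\left[u^{(i)}(S^{(i)}, v^{(i)}, \Lambda(D'))\right] + \alpha .
\]
This reduces the problem to comparing the expected utility of a fixed deterministic function $h(\Lambda) := u^{(i)}(S^{(i)}, v^{(i)}, \Lambda)$ under the two distributions $\Lambda(D)$ and $\Lambda(D')$.

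Next I would view $h \circ \Lambda(\cdot)$ as a single $(\epsilon, \delta)$-differentially private mechanism on the database (inherited from \Cref{thm:dual-priv} by post-processing), and apply \Cref{lem:ex-priv}. The subtle point is that \Cref{lem:ex-priv} needs a \emph{non-negative} mechanism, while $h$ can be negative. Here is where I would use approximate individual rationality (\Cref{lem:IR}): since the agent is reporting truthfully in the definition of $h$, every output satisfies $h(\Lambda) \geq -\alpha$, so $\hat h := h + \alpha \geq 0$ pointwise. For the upper bound on $\hat h$, I would bound the valuation term by $V$ and the payment term by $C_1\tau\sqrt{k}$ (using Cauchy--Schwarz together with $\|\bar\lambda\|_2 \leq 2\tau\sqrt{k}$ from $\Lambda \subseteq \{\lambda : \|\lambda\|_\infty \leq 2\tau\}$ and the problem's constraint bound $C_1$), giving a range bound of $\max\{V, C_1\tau\sqrt{k}\}$ (up to the additive shift). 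Applying \Cref{lem:ex-priv} to $\hat h$ and unwinding the shift gives
\[
\EE[h(\Lambda(D'))] \;\leq\; e^\epsilon\,\EE[h(\Lambda(D))] + (e^\epsilon - 1)\alpha + \delta\max\{V, C_1\tau\sqrt{k}\}.
\]

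Chaining this with the first bound yields the claimed $(\rho, \gamma)$-truthfulness: the $\alpha$ from the pointwise step and the $(e^\epsilon - 1)\alpha$ from the shift combine with the additional additive $\alpha$ loss to form the $\alpha(2e^\epsilon - 1)$ term, while the $\delta$-term survives from \Cref{lem:ex-priv}. The main obstacle is precisely the non-negativity bookkeeping: one must be sure that the lower bound $-\alpha$ on $h$ holds for \emph{all} $\Lambda$ and not just under the true database, and that the upper bound is taken over both coordinates of the utility ($v$ and $p$), which is why the max over the valuation bound and the payment bound appears in $\gamma$. The rest of the argument is mechanical once the DP-to-expected-utility lemma is set up.
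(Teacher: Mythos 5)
Your proposal is correct and follows essentially the same route as the paper's proof: apply \Cref{lem:truthful-br} pointwise over the realized dual sequence, use \Cref{lem:IR} to shift the truthful utility into a non-negative post-processing of the differentially private duals, bound its range by $\max\{V, C_1\tau\sqrt{k}\}$ via Cauchy--Schwarz, and transfer expectations with \Cref{lem:ex-priv}. Your bookkeeping of the $\alpha$ terms actually yields a slightly sharper additive constant than $\alpha(2e^\eps-1)$, which still implies the stated bound.
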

\begin{proof}
  Fix valuations and feasible sets of $n-1$ agents, the constraint functions of
  all agents, and consider a possibly deviating agent $i$. We first note that
  since the prices $\bar{\lambda}$ have norm $\| \bar{\lambda} \|_2 \leq \tau
  \sqrt{k}$, the maximum price charged is
    \[
      \< c^{(i)}(x), \bar{\lambda} \> \leq \|c^{(i)}(x)\|_2 \|\bar{\lambda}\|_2
      \leq C_1 \tau \sqrt{k}
  \]
  by Cauchy-Schwarz. Since the valuation is at most $V$, the utility of agent
  $i$ is bounded by $\max \{V, C_1 \tau \sqrt{k} \}$. As above, let $u^{(i)}(S, v,
  \{ \lambda^{(t)} \})$ be the true utility of agent $i$ for the outcome
  produced by \TPDD when reporting feasible set $S$ and valuation
  $v$, against dual variables $\lambda^{(t)}$. By approximate individual
  rationality (\Cref{lem:IR}), $u^{(i)}(S, v, \{ \lambda^{(t)} \}) + \alpha \geq
  0$. By \Cref{lem:truthful-br},
  \[
    u^{(i)}(S, v, \{ \lambda^{(t)} \} ) \leq u^{(i)}(S^{(i)}, v^{(i)}, \{
    \lambda^{(t)} \} ) + \alpha
  \]
  for every sequence of dual variables $\{ \lambda^{(t)} \}$. Let $g(\cO)$ be
  the sequence of dual variables produced by \TPDD on problem
  $\cO$. Noting that $u^{(i)}(S, v, -)$ is a deterministic function of the
  sequence of dual variables, which are differentially private,
  \Cref{lem:ex-priv} shows that
  \begin{align*}
    \Ex{\{ \lambda^{(t)} \} \sim g(\cO')}
    {u^{(i)} (S'^{(i)}, v'^{(i)}, \{ \lambda^{(t)} \})}
    &\leq \Ex{\{ \lambda^{(t)} \} \sim g(\cO)}
    {u^{(i)} (S^{(i)}, v^{(i)}, \{ \lambda^{(t)} \}) + \alpha}\\
    &\leq e^\eps
    \Ex{\{ \lambda^{(t)} \} \sim g(\cO)}
    {u^{(i)} (S^{(i)}, v^{(i)}, \{ \lambda^{(t)} \}) + 2\alpha}+ \delta \max
    \{V, C_1 \tau \sqrt{k} \} \\
    &\leq e^\eps
    \Ex{\{ \lambda^{(t)} \} \sim g(\cO)}
    {u^{(i)} (S^{(i)}, v^{(i)}, \{ \lambda^{(t)} \})} + \alpha (2e^\eps - 1) +
    \delta \max \{V, C_1 \tau \sqrt{k} \} ,
  \end{align*}
  where $\cO = (S^{(i)}, S^{(-i)}, v^{(i)}, v^{(-i)}, c, b)$ is the problem when
  agent $i$ truthfully reports, and $\cO' = (S'^{(i)}, S^{(-i)}, v'^{(i)},
  v^{(-i)}, c, b)$ is the problem when agent $i$ deviates to any report
  $S'^{(i)}, v'^{(i)}$ in the class.
\end{proof}

Finally, it is straightforward to show that \TPDD is private.

\begin{theorem} \label{true-priv}
  \TPDD satisfies $(\epsilon, \delta)$-joint differential privacy.
\end{theorem}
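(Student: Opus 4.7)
The plan is to reduce to the billboard lemma (\Cref{billboard}), treating the sequence of dual variables computed inside $\PDD$ as the public billboard signal. By \Cref{thm:dual-priv}, this signal (together with the public variables $\bar{x}^{(0)}$) already satisfies standard $(\eps,\delta)$-differential privacy with respect to the input database, so the only remaining task will be to verify that every output of $\TPDD$ to agent $i$ is a deterministic function of agent $i$'s own private data $(S^{(i)}, v^{(i)}, c^{(i)})$ and this public signal.

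Concretely, agent $i$ receives $\bar{x}^{(i)}$ and the price $p^{(i)}(\bar{x}^{(i)}) = \sum_{j=1}^{k} \bar{\lambda}_j c^{(i)}_j(\bar{x}^{(i)})$. The price is a transparent function of $\bar{\lambda}$ and $i$'s own constraint functions. The allocation $\bar{x}^{(i)}$ comes from one of two branches of $\TPDD$. If agent $i$ is already $\alpha$-satisfied after $\PDD$, then $\bar{x}^{(i)}$ is the time average of $i$'s per-round best responses $\argmax_{x \in S^{(i)}} \left( v^{(i)}(x) - \sum_{j=1}^{k} \lambda^{(t)}_j c^{(i)}_j(x) \right)$, each of which depends only on $i$'s private data and the billboard. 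Otherwise, $\TPDD$ replaces $\bar{x}^{(i)}$ by $\argmax_{x \in S^{(i)}}\left(v^{(i)}(x) - p^{(i)}(x)\right)$, again a function only of $i$'s private data and $\bar{\lambda}$. Crucially, the $\alpha$-satisfaction test itself is local to agent $i$, consulting only $v^{(i)}, S^{(i)}, c^{(i)}$ and the public prices.

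Applying \Cref{billboard} with billboard $(\{\lambda^{(t)}\}_{t=1}^T, \bar{x}^{(0)})$ will then yield the claim. I expect the only real ``obstacle'' to be a careful bookkeeping check that the modification step in $\TPDD$ never reaches into the private data of any agent $j \neq i$; this should be immediate from the algorithm description, since the per-agent modification for $i$ consumes only $v^{(i)}, S^{(i)}, c^{(i)}$ and the public $\bar{\lambda}$, so the billboard structure inherited from $\PDD$ is preserved.
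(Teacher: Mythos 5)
Your proposal is correct and follows exactly the paper's argument: the paper proves this theorem by citing the privacy of \PDD (\Cref{thm:dual-priv,thm:primal-jdp}) together with the billboard lemma (\Cref{billboard}), which is precisely your reduction. Your additional bookkeeping---checking that the $\alpha$-satisfaction test and the reassignment step consume only agent $i$'s own data and the public dual sequence---is the (routine but necessary) verification the paper leaves implicit.
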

\begin{proof}
  By the privacy of \PDD (\Cref{thm:dual-priv,thm:primal-jdp}) and the billboard
  lemma (\Cref{billboard}).
\end{proof}
\fi

\SUBSECTION{Adding exact feasibility}
\label{sec:tightdude}

For the more restricted class of \emph{packing linear programs}, not only can
we achieve approximate truthfulness, but we can also achieve exact
feasibility. In fact, we can also round each agent's solution to a vertex in
their feasible region $S_i$ (which is integral for many problems); this can
lead to more natural solutions for many problems (e.g., matchings and flows).
\ifshort For lack of space, we defer further details to the extended version.
\else

Let's first define the linear programs we will consider.

\begin{definition} \label{def:packing}
  A class of \emph{packing linear programs} is a class of convex programs $\cO =
  (S, v, c, b)$ with the following additional properties:
  \begin{itemize}
    \item Objective functions are linear and bounded: $0 \leq v^{(i)}(x^{(i)})
      \leq V$ for all $x^{(i)} \in S^{(i)}$.
    \item Constraint functions are linear and bounded:
      $0 \leq c^{(i)}(x^{(i)}) \leq C_\infty$ for all $x^{(i)} \in S^{(i)}$.
    \item For each $i$, $0 \in S^{(i)}$.
  \end{itemize}
  The parameters $V, C_\infty$ should hold for the whole class (i.e., they are
  independent of the private data).  To emphasize that the objective and
  constraints are linear, we will often write $\< v^{(i)}, x^{(i)} \>$ for
  $v^{(i)}(x^{(i)})$, and likewise for $c^{(i)}_j$.
\end{definition}

\fi
Our algorithm \TRPDD will first \emph{tighten} the constraints, by reducing the
scalars $b_j$ by an amount $\xi$; we think of this step as ``reserving some
constraint''.  Next, we will run \PDD on the reduced problem. Like \TPDD, we
will let the final dual variables be the prices for the constraints, and we will
reassign agents who are very unsatisfied to their favorite good at the final
prices; this will guarantee approximate truthfulness. Finally, satisfied agents
will \emph{round}: they will each select one of their best responses $x^{(i)}_1,
\dots, x^{(i)}_T$ uniformly at random. Note that for packing linear problems,
each best response is a vertex of an agent's private feasible
region (if agents break ties by selecting vertices).
Since each agent's favorite good at the prices is also a vertex of the private
feasible region, all agents will end up playing at a vertex. We will choose the
amount $\xi$ to cover the potential increase in each constraint from the
unsatisfied agents and from the rounding, thereby giving exact feasibility.
\ifshort
For lack of space, we defer further details of our algorithm, \TRPDD, to the
extended version.
\else
We present the algorithm $\TRPDD$ in \Cref{alg:tight}.

\begin{algorithm}[h]
  \caption{$\TRPDD(\cO, \sigma, \tau, w, \eps, \delta, \beta)$}
  \label{alg:tight}
  \begin{algorithmic}
    \STATE{{\bf Input}: Packing linear program $\cO = (S, v, c, b)$ with
      $n$ agents and $k$ coupling constraints, gradient sensitivity
      bounded by $\sigma$, a dual bound $\tau$, width bounded by $w$, and
      privacy parameters $\eps > 0, \delta \in
      (0, 1/2)$, and confidence parameter $\beta \in (0, 1)$.}

    \STATE{{\bf Initialize}:
      \begin{mathpar}
        \xi :=
          \sqrt{3 b_j}
          +
          \frac{160 \sqrt{8} k \tau \sigma C_\infty}{\epsilon}
          \log^2 \left(\frac{4 w^2 k^2}{\beta} \right)
          \log^{1/2} \left( \frac{2w}{\delta} \right)
          \left( \frac{2}{\tau} + \frac{C_\infty k}{\alpha} \right)
        ,
        \and \cO_{red} := \cO \text{ with scalars } b_j - \xi .
      \end{mathpar}
    }

    \STATE{{\bf Run} $\PDD$:
      \[
        (\bar{x}, \bar{\lambda}) := \PDD(\cO_{red}, \sigma, \tau, w, \epsilon/2,
        \delta/2, \beta/2) .
      \]
    }
    \STATE{{\bf for each } agent $i = 0 \dots n$:}
    \INDSTATE[1]{Let the price of the solution be:
      \[
        p^{(i)} (x) := \sum_{j = 1}^k \bar{\lambda}_j c^{(i)}_j (x) .
      \]
    }
    \INDSTATE[1]{{\bf if } $\bar{x}^{(i)}$ does not satisfy
      \[
        v^{(i)} (\bar{x}^{(i)}) - p^{(i)}(x^{(i)})
        \geq \max_{x \in S^{(i)}} v^{(i)} (x) - p^{(i)}(x) - \alpha ,
      \]
    }
    \INDSTATE[2]{Set
      \[
        \tilde{x}^{(i)} := \argmax_{x \in S^{(i)}} v^{(i)} (x) - p^{(i)}(x) .
      \]
    }
    \INDSTATE[1]{{\bf else }}
    \INDSTATE[2]{Select $\tilde{x}^{(i)}$ uniformly at random from best responses
      $x^{(i)}_1, \dots, x^{(i)}_T$.
    }
    \STATE{{\bf Output}: $\tilde{x}^{(i)} \text{ and price } p^{(i)}(x^{(i)})
      \text{ to agents } i \in [n]$.}
  \end{algorithmic}
\end{algorithm}

To analyze the welfare and the constraint violation, let $\OPT_{red}$ be the
optimal objective of the reduced problem, and let $\kappa = \max_j \xi/b_j$ be
the largest fraction of constraint we are reducing. We assume that the problem
is feasible, so $\OPT_{red}$ is defined and $\kappa < 1$. We can immediately
lower bound $\OPT_{red}$.

\begin{lemma} \label{lem:opt-red}
  $\OPT_{red} \geq (1 - \kappa) \OPT$ .
\end{lemma}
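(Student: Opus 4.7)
The plan is to exhibit a feasible point of the reduced LP whose objective is exactly $(1-\kappa) \OPT$, namely the scaled-down optimal solution of the original LP. Let $x^*$ be an optimal feasible point for $\cO$, with $\sum_{i=0}^n \< v^{(i)}, x^{*(i)} \> = \OPT$. I will verify that $\tilde{x}^{(i)} := (1-\kappa) x^{*(i)}$ is feasible for $\cO_{red}$ and that its objective is $(1-\kappa) \OPT$.

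First I would check that $\tilde{x}^{(i)} \in S^{(i)}$ for every $i$. This uses the packing assumption that $0 \in S^{(i)}$ together with convexity of $S^{(i)}$: since $\tilde{x}^{(i)} = (1-\kappa) x^{*(i)} + \kappa \cdot 0$ is a convex combination of two points in $S^{(i)}$ (using $\kappa < 1$, which holds by the feasibility assumption), it lies in $S^{(i)}$. Next, I would verify the coupling constraints of $\cO_{red}$. By linearity of $c^{(i)}_j$ and feasibility of $x^*$ in $\cO$,
\[
\sum_{i=0}^n \< c^{(i)}_j, \tilde{x}^{(i)} \> = (1-\kappa) \sum_{i=0}^n \< c^{(i)}_j, x^{*(i)} \> \leq (1-\kappa) b_j.
\]
Since $\kappa = \max_{j'} \xi/b_{j'} \geq \xi/b_j$, we have $\kappa b_j \geq \xi$, hence $(1-\kappa) b_j \leq b_j - \xi$, which is exactly the right-hand side for the reduced problem.

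Finally, by linearity of $v^{(i)}$ the scaled point achieves objective value
\[
\sum_{i=0}^n \< v^{(i)}, \tilde{x}^{(i)} \> = (1-\kappa) \sum_{i=0}^n \< v^{(i)}, x^{*(i)} \> = (1-\kappa) \OPT,
\]
so $\OPT_{red} \geq (1-\kappa) \OPT$. I don't anticipate a genuine obstacle here; the entire argument rests on (i) convexity of $S^{(i)}$ together with $0 \in S^{(i)}$ to scale the primal solution, and (ii) the linearity of the objective and constraints in a packing LP. The only slightly delicate point is making sure that $\kappa < 1$ so that $1-\kappa > 0$ and the scaled point is a meaningful convex combination, which is guaranteed by the problem's feasibility (i.e.\ $\xi < b_j$ for every $j$, which we may assume since otherwise the reduced problem is infeasible and $\OPT_{red}$ is not defined).
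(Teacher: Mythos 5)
Your proof is correct and follows the same route as the paper: scale the optimal solution $x^*$ by $(1-\kappa)$, use $0 \in S^{(i)}$ (with convexity) for the personal constraints, and use linearity plus $\kappa b_j \geq \xi$ for the tightened coupling constraints. The paper states this in two sentences; you have simply filled in the same verification in more detail.
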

\begin{proof}
  For any optimal solution $x^*$ of the original problem, $(1 - \kappa) x^*$ is
  a feasible solution of the reduced problem with objective $(1 - \kappa) \OPT$.
  The personal feasibility constraints aren't violated since $0 \in S^{(i)}$.
\end{proof}

We'll first briefly argue approximate truthfulness; the argument for \TPDD
carries over unchanged. Note that the final output $\tilde{x}^{(i)}$ for any
unsatisfied bidder is $\bar{x}^{(i)}$ in expectation, while any unsatisfied
bidder gets her favorite good at the final prices. Thus, truthfulness is clear
by \Cref{thm:truthful}.

\begin{corollary} \label{thm:tight-truth}
  Suppose we run \TRPDD on a class of packing linear programs with a dual bound
  $\tau$. Then, \TRPDD is $(\rho, \gamma)$-approximately truthful for
  \[
    \rho = e^\eps ,
    \qquad
    \gamma = \alpha(2e^\eps - 1) + \delta \max \{ V, C \tau \sqrt{k} \} .
  \]
\end{corollary}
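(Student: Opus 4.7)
The plan is to reduce this corollary to \Cref{thm:truthful} by observing that \TRPDD differs from \TPDD only in how satisfied agents' final allocations are computed. In \TPDD, a satisfied agent $i$ receives $\bar{x}^{(i)} = (1/T) \sum_t x^{(i)}_t$; in \TRPDD, she receives a single $x^{(i)}_t$ drawn uniformly at random from the same sequence. Critically, in a packing linear program both the valuation $v^{(i)}$ and the price $p^{(i)}(x) = \sum_j \bar{\lambda}_j c^{(i)}_j(x)$ are linear in $x$, so linearity of expectation gives
\[
  \Ex{t \sim [T]}{v^{(i)}(x^{(i)}_t) - p^{(i)}(x^{(i)}_t)}
  = v^{(i)}(\bar{x}^{(i)}) - p^{(i)}(\bar{x}^{(i)}),
\]
which is exactly the utility agent $i$ would have received under \TPDD. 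Unsatisfied agents receive the same deterministic reassignment as in \TPDD, so their utility contribution is unchanged.

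Concretely, I would fix all other agents' reports, then condition on the sequence $\{\lambda^{(t)}\}$ of dual variables produced by the internal call to \PDD. Conditional on this sequence, the reasoning above shows that agent $i$'s \emph{expected} utility under \TRPDD equals her deterministic utility under \TPDD against the same dual sequence. Consequently \Cref{lem:truthful-br} applies verbatim to this conditional expectation: reporting truthfully against any fixed $\{\lambda^{(t)}\}$ yields utility within $\alpha$ of any deviation's utility, because every agent is $\alpha$-satisfied (unsatisfied agents were moved to their favorite good at the prices, and satisfied agents average to $\bar{x}^{(i)}$, which was $\alpha$-satisfied by construction).

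The final step is to take expectations over $\{\lambda^{(t)}\}$ and invoke differential privacy. Since the dual-variable sequence is produced by \PDD and is therefore differentially private (\Cref{thm:dual-priv}), and since utilities are bounded by $\max\{V, C_\infty \tau \sqrt{k}\}$ by Cauchy--Schwarz (the prices satisfy $\|\bar{\lambda}\|_2 \leq \tau \sqrt{k}$ and $\|c^{(i)}\|_\infty$ is bounded), I would apply \Cref{lem:ex-priv} exactly as in the proof of \Cref{thm:truthful} to convert the pointwise $\alpha$-bound into the claimed expected-utility bound with $\rho = e^\eps$ and $\gamma = \alpha(2e^\eps - 1) + \delta \max\{V, C_\infty \tau \sqrt{k}\}$. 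Individual rationality for satisfied agents also transfers via linearity, matching \Cref{lem:IR}.

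The main obstacle is justifying why the per-agent randomized rounding does not introduce any new avenue for manipulation. The key observation is that the rounding uses only agent $i$'s \emph{own} best-response sequence $\{x^{(i)}_t\}$ (themselves functions of $i$'s report and the dual variables), so the randomness either factors through the linear valuation/price (in which case linearity collapses it to the \TPDD value) or is independent of $i$'s report altogether. Once this is noted, no new privacy accounting is needed beyond what \PDD already provides, and the rest of the argument is a direct restatement of \Cref{thm:truthful}.
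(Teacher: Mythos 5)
Your proposal is correct and follows essentially the same route as the paper, which simply notes that the rounded output of a satisfied agent equals $\bar{x}^{(i)}$ in expectation (by linearity of the valuation and prices in a packing LP) and that unsatisfied agents are reassigned exactly as in \TPDD, so \Cref{thm:truthful} carries over unchanged. Your version just spells out the conditioning on the dual sequence and the application of \Cref{lem:truthful-br} and \Cref{lem:ex-priv} that the paper leaves implicit.
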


Next, let's look at the welfare guarantee. We can bound the possible welfare
loss from reassigning unsatisfied bidders by the same argument form \TPDD, and
we can use a standard Chernoff-Hoeffding bound to bound the possible welfare
loss from rounding.

\begin{theorem}[see e.g., \citet{concentration-book}] \label{thm:chernoff}
  Suppose $X = \sum_i X_i$ is a finite sum of of independent, bounded random
  variables $0 \leq X_i \leq M$, and $\mu_L \leq \EE[X] \leq \mu_H$. Then, for
  any $\beta > 0$,
  \[
    \Pr\left[ X > \mu_H + \sqrt{3 M \mu_M \log(1/\beta)} \right] \leq \beta
    \quad \text{and} \quad
    \Pr\left[ X < \mu_L - \sqrt{2 M \mu_L \log(1/\beta)} \right] \leq \beta
  \]
\end{theorem}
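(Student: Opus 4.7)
The plan is to prove this by the standard moment generating function (MGF) approach of Bernstein/Chernoff, combined with a rescaling to reduce to the familiar $[0,1]$-valued case. Since this is a cited textbook result, my proof would just reproduce the standard argument; I would not introduce anything novel.

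First, rescale: set $Y_i := X_i / M$, so $Y_i \in [0,1]$ are independent, and $Y := \sum_i Y_i = X/M$ with $\EE[Y] = \EE[X]/M$. The workhorse is the classical multiplicative Chernoff bound for sums of independent $[0,1]$ random variables: for $\mu := \EE[Y]$ and any $\delta \in [0,1]$,
\[
  \Pr[Y \geq (1+\delta)\mu] \leq \exp(-\delta^2 \mu / 3)
  \qquad \text{and} \qquad
  \Pr[Y \leq (1-\delta)\mu] \leq \exp(-\delta^2 \mu / 2).
\]
These follow from Markov applied to $e^{tY}$, independence to factor $\EE[e^{tY}] = \prod_i \EE[e^{tY_i}]$, and the convexity bound $\EE[e^{tY_i}] \leq \exp((e^t-1)\EE[Y_i])$ valid since $Y_i \in [0,1]$; optimizing $t$ gives the two displayed bounds.

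Second, translate back and solve for the threshold. Rewriting in terms of $X$ and $\EE[X]$, the upper tail becomes
\[
  \Pr\bigl[X \geq \EE[X] + t\bigr] \leq \exp\bigl(-t^2 / (3 M \, \EE[X])\bigr),
\]
and setting $t = \sqrt{3 M\, \EE[X] \log(1/\beta)}$ makes the right side equal to $\beta$. The analogous lower-tail step gives $t = \sqrt{2 M\, \EE[X] \log(1/\beta)}$ with $\beta$-probability.

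Third, substitute the envelopes $\mu_L, \mu_H$. For the upper tail, since $x \mapsto x + \sqrt{3Mx\log(1/\beta)}$ is monotonically increasing, $\EE[X] \leq \mu_H$ immediately yields the stated event $\{X > \mu_H + \sqrt{3M\mu_H\log(1/\beta)}\}$ is contained in $\{X \geq \EE[X] + \sqrt{3M\EE[X]\log(1/\beta)}\}$, hence has probability at most $\beta$. The only place I would have to be careful is the lower tail, where $x \mapsto x - \sqrt{2Mx\log(1/\beta)}$ is not globally monotone; however, it is increasing on the regime where the bound is nontrivial (i.e., where the threshold is nonnegative), and in the degenerate regime $\mu_L - \sqrt{2M\mu_L\log(1/\beta)} < 0$ the claim holds trivially because $X \geq 0$. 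This case split is the only mildly delicate step, and I would handle it in a single sentence; otherwise the proof is a routine application of rescaling plus the textbook Chernoff inequality.
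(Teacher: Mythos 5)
The paper never actually proves this theorem---it is imported by citation---so the only question is whether your argument stands on its own. The rescaling, the translation between multiplicative and additive forms, and the substitution of the envelopes $\mu_L,\mu_H$ (including your case split for the non-monotone lower-tail threshold, which is exactly the right thing to worry about there) are all fine. But there is a genuine gap in the upper tail, and you state the relevant hypothesis yourself and then drop it: the bound $\Pr[Y \ge (1+\delta)\mu] \le \exp(-\delta^2\mu/3)$ holds only for $\delta \in [0,1]$; for $\delta>1$ the correct simplification of $\bigl(e^\delta/(1+\delta)^{1+\delta}\bigr)^\mu$ is the weaker $\exp(-\delta\mu/3)$ or $\exp(-\delta^2\mu/(2+\delta))$. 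Your choice $t=\sqrt{3M\,\EE[X]\log(1/\beta)}$ corresponds to $\delta=\sqrt{3\log(1/\beta)/\mu}$ with $\mu=\EE[X]/M$, so the step is legitimate only when $\EE[X]\ge 3M\log(1/\beta)$. Outside that regime the conclusion is not merely unproven but false: take $X$ a sum of Bernoullis with $M=1$ and $\EE[X]=\mu_H=0.01$, and choose $\beta$ so that $\sqrt{3\cdot 0.01\cdot\log(1/\beta)}=0.99$, i.e.\ $\beta\approx e^{-32.7}$. The theorem would then assert $\Pr[X>1]=\Pr[X\ge 2]\le\beta\approx 10^{-14}$, whereas in fact $\Pr[X\ge 2]\approx\mu^2/2=5\times 10^{-5}$. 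A low-mean sum has upper tail decaying like $\exp(-\Theta(t\log(t/\mu)))$, not $\exp(-t^2/\mu)$, so no MGF manipulation will deliver the statement as written ``for any $\beta>0$.''

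To be fair, the defect is inherited from the theorem statement itself (a common looseness; textbook versions either carry the restriction $\delta\le 1$ or state the Bernstein form $\exp(-t^2/(2\mu+2t/3))$), and everywhere the paper invokes it the relevant mean ($b_j$, $\OPT$, $\|b\|_1$) is assumed large enough that $\EE[X]\gtrsim M\log(1/\beta)$. So yours is the right proof of the theorem that is actually needed; to make it airtight you should either add the hypothesis $\mu_H \ge 3M\log(1/\beta)$ (equivalently, verify $\delta\le1$ before invoking the $\exp(-\delta^2\mu/3)$ form, just as you verified the sign condition on the lower tail) or run the upper tail through the Bernstein/Bennett bound and absorb the resulting change of constant. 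Also note that the statement's $\mu_M$ is evidently a typo for $\mu_H$, which you silently and correctly repaired.
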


\begin{theorem} \label{thm:tight-opt}
  Let $\beta > 0$ be given. Then with probability at least $1 - \beta$, \TRPDD
  run on a packing linear program $\cO = (S, v, c, b)$ with gradient sensitivity
  bounded by $\sigma$, a dual bound $\tau$, and width bounded by $w$ has
  objective satisfying $\< v, \tilde{x} \> \geq \OPT - \alpha$, for
  \[
    \alpha = O \left(
      \kappa \cdot \OPT
      +
      \left( \frac{k \tau \sigma }{\epsilon}
      \left( 1 + \frac{V}{\alpha} \right)
      + \sqrt{V \cdot \OPT} \right)
      \log\left(\frac{wk}{\beta} \right)
      \log^{1/2}\left( \frac{w}{\delta} \right)
    \right) ,
  \]
  for
  \[
    \kappa = \max_j \xi /b_j < 1 ,
  \]
  so
  \[
    \min_j b_j \gg \xi =
    \Omega \left(
      \| b \|^{1/2}_\infty
      + \frac{k \tau \sigma C_\infty}{\epsilon}
      \log^2\left(\frac{w k}{\beta} \right)
      \log^{1/2} \left( \frac{w}{\delta} \right)
      \left( \frac{1}{\tau} + \frac{C_\infty k}{\alpha} \right)
    \right)
    .
  \]
\end{theorem}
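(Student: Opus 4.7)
The plan is to decompose the welfare loss incurred by \TRPDD into four contributions and bound each separately: (i) the tightening of the feasibility polytope, (ii) the approximation error of \PDD on the tightened problem, (iii) the reassignment of unsatisfied bidders to their favorite solution at prices $\bar\lambda$, and (iv) the random rounding from the time-averaged primal play $\bar x^{(i)}$ to one of the per-step best responses $x^{(i)}_t$.

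For (i), I would invoke \Cref{lem:opt-red} directly: $\OPT_{red}\ge (1-\kappa)\OPT$, so the tightening costs at most $\kappa\cdot\OPT$ in the welfare bound. For (ii) and (iii), I would apply the analysis underlying \Cref{cor:truthful-acc} (which itself rests on \Cref{thm:bigthm} and \Cref{lem:unsat}) to the reduced instance $\cO_{red}$: the pair $(\bar x,\bar\lambda)$ output by \PDD is an $\cR_p$-approximate equilibrium of the Lagrangian game for $\cO_{red}$, so $\sum_i\langle v^{(i)},\bar x^{(i)}\rangle\ge \OPT_{red}-2\cR_p$, and at most $\cR_p/\alpha$ agents are not $\alpha$-satisfied by \Cref{lem:unsat}. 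Each such agent can change welfare by at most $V$ after being reassigned, so the post-reassignment (but pre-rounding) expected welfare is at least $\OPT_{red}-\cR_p(2+V/\alpha)$. Since the algorithm runs \PDD with privacy parameters $(\epsilon/2,\delta/2,\beta/2)$, I would simply absorb the constants into the $O(\cdot)$ expression for $\cR_p$.

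For (iv), the key observation is that for each $\alpha$-satisfied agent $i$, the algorithm draws $\tilde x^{(i)}$ uniformly from $\{x^{(i)}_1,\dots,x^{(i)}_T\}$, so $\Ex{}{\tilde x^{(i)}}=\bar x^{(i)}$; and for each unsatisfied agent the output is deterministic. Because the objective is linear, $\Ex{}{\langle v,\tilde x\rangle}$ equals the welfare of the (modified) $\bar x$ bounded in the previous paragraph. The per-agent contributions $\langle v^{(i)},\tilde x^{(i)}\rangle$ are mutually independent (the rounding is done independently across agents), each lies in $[0,V]$, and the mean is at most $\OPT$. Applying the lower-tail form of \Cref{thm:chernoff} with $M=V$ and $\mu_L\le \OPT$ then yields, with probability at least $1-\beta/2$,
\begin{equation*}
\langle v,\tilde x\rangle \;\ge\; \Ex{}{\langle v,\tilde x\rangle} - \sqrt{2V\cdot\OPT\log(2/\beta)}.
\end{equation*}
Combining the four bounds, taking a union bound over the $\beta/2$ failure probabilities of \PDD and the concentration, and substituting the stated expression for $\cR_p$ gives the claimed inequality. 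The feasibility precondition $\kappa<1$ is exactly what makes $\OPT_{red}$ well-defined; the displayed lower bound on $\min_j b_j$ is obtained by unfolding the definition of $\xi$ set in the algorithm.

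The main obstacle is bookkeeping: one must be careful that the concentration step uses the true expected welfare (which matches the analysis of the unrounded $\bar x$) rather than its approximation, and that independence across agents is preserved despite all agents sharing the same dual prices $\bar\lambda$, which I would justify by conditioning on the transcript of the dual player (the ``billboard'') and noting that, conditional on $\bar\lambda$ and the sequence $\{x^{(i)}_t\}_t$, the rounding choices are drawn independently across $i$.
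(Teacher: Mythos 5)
Your proposal follows essentially the same route as the paper's proof: bound the loss from tightening via \Cref{lem:opt-red}, bound the expected post-reassignment welfare via \Cref{cor:truthful-acc} applied to $\cO_{red}$ with the $(\epsilon/2,\delta/2,\beta/2)$ parameters, and then apply the Chernoff--Hoeffding bound of \Cref{thm:chernoff} to the independent per-agent rounding, finishing with a union bound. Your extra care about conditioning on the dual transcript to justify independence of the rounding is a point the paper leaves implicit, but the argument and the resulting bound are the same.
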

\begin{proof}
  Let $\bar{x}$ be the output from the $(\epsilon/2, \delta/2)$-private \PDD on
  the reduced problem $\cO_{red}$.  Note that the objective $\< v, \tilde{x} \>$
  is the sum of $i$ independent random variables, each bounded in $[0, V]$. We
  can lower bound the expected welfare with by
  \Cref{cor:truthful-acc}; reassigning the unsatisfied bidders leads to welfare
  at least
  \[
    \EE \left[ \< v, \tilde{x} \> \right] = \< v, \bar{x} \>
    \geq
    \OPT_{red}
    - \frac{80 \sqrt{8} k \tau \sigma }{\epsilon} \log\left(\frac{4
        w^2 k}{\beta} \right) \log^{1/2} \left( \frac{2w^2}{\delta} \right)
    \left(2 + \frac{V}{\alpha} \right) .
  \]
  with probability at least $1 - \beta/2$. Applying the concentration bound
  (\Cref{thm:chernoff}), we have
  \begin{align*}
    \< v, \tilde{x} \>
    &>
    \OPT_{red}
    - \frac{80 \sqrt{8} k \tau \sigma }{\epsilon} \log\left(\frac{4 w^2
        k}{\beta} \right) \log^{1/2} \left( \frac{2w^2}{\delta} \right)
    \left(2 + \frac{V}{\alpha} \right)
    - \sqrt{2V \cdot \OPT_{red} \log \left( \frac{2}{\beta} \right)} \\
    &\geq (1 - \kappa) \OPT
    - \frac{80 \sqrt{8} k \tau \sigma }{\epsilon} \log\left(\frac{4 w^2
        k}{\beta} \right) \log^{1/2} \left( \frac{2w^2}{\delta} \right)
    \left(2 + \frac{V}{\alpha} \right)
    - \sqrt{2V \cdot \OPT \log \left( \frac{2}{\beta} \right)} \\
    &= \OPT - O \left(
      \kappa \cdot \OPT
      +
      \left( \frac{k \tau \sigma }{\epsilon}
        \left( 1 + \frac{V}{\alpha} \right)
        + \sqrt{V \cdot \OPT} \right)
      \log\left(\frac{wk}{\beta} \right)
      \log^{1/2}\left( \frac{w}{\delta} \right)
    \right)
  \end{align*}
  with probability at least $1 - \beta/2$, so everything holds with probability
  at least $1 - \beta$.
\end{proof}

Finally, let's look at the constraint violation.

\begin{theorem} \label{thm:tight-feas}
  Let $\beta > 0$ be given. Then with probability at least $1 - \beta$, \TRPDD
  run on a packing linear program $\cO = (S, v, c, b)$ with gradient sensitivity
  bounded by $\sigma$, a dual bound $\tau$, and width bounded by $w$ has
  produces an exactly feasible solution as long as
  \[
    \kappa = \max_j \xi /b_j < 1 ,
  \]
  so
  \[
    \min_j b_j \gg \xi =
    \Omega \left(
      \| b \|^{1/2}_\infty
      + \frac{k \tau \sigma C_\infty}{\epsilon}
      \log^2\left(\frac{w k}{\beta} \right)
      \log^{1/2} \left( \frac{w}{\delta} \right)
      \left( \frac{1}{\tau} + \frac{C_\infty k}{\alpha} \right)
    \right)
    .
  \]
\end{theorem}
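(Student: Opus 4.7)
The plan is to fix an arbitrary coupling constraint $j$ and show $\sum_i \<c^{(i)}_j, \tilde{x}^{(i)}\> \le b_j$ with probability at least $1-\beta/k$, then union-bound over the $k$ constraints. First I would partition the agents according to whether the $\PDD$ output $\bar{x}^{(i)}$ is $\alpha$-satisfied at the final prices $\bar{\lambda}$: let $S$ be the satisfied set (whose final allocation is a uniformly random best response) and $U$ the unsatisfied set (who are reassigned to their favorite action at prices $\bar{\lambda}$). The $\PDD$ run inside \TRPDD is on the reduced problem $\cO_{\text{red}}$ with scalars $b_j - \xi$ and privacy budget $(\epsilon/2, \delta/2)$, so by \Cref{thm:bigthm} together with \Cref{lem:unsat}, with probability at least $1-\beta/2$ we have $|U| \le \cR_p/\alpha$ and
\[
  \sum_{i=0}^n \<c^{(i)}_j, \bar{x}^{(i)}\> \;\le\; (b_j - \xi) + \Delta,
\]
where $\Delta = O(\cR_p/\tau)$ bounds the total violation summed over all coupling constraints, and hence also the violation on constraint $j$.

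Next I would invoke \Cref{thm:chernoff} to bound the deviation introduced by rounding. Conditioning on the entire dual-variable sequence produced by $\PDD$ (equivalently, on the best responses $\{x^{(i)}_t\}$, and hence on $\bar{x}$, $\bar{\lambda}$, and the partition $S/U$), the rounded choices $\tilde{x}^{(i)}$ for $i \in S$ are mutually independent uniform draws from $\{x^{(i)}_1,\ldots,x^{(i)}_T\}$, so $\ex{\<c^{(i)}_j, \tilde{x}^{(i)}\>} = \<c^{(i)}_j, \bar{x}^{(i)}\>$ and each contribution lies in $[0,C_\infty]$. Applying the upper-tail Chernoff--Hoeffding bound with $M = C_\infty$ and $\mu_H \le b_j$ yields, with conditional probability at least $1 - \beta/(2k)$,
\[
  \sum_{i \in S} \<c^{(i)}_j, \tilde{x}^{(i)}\>
  \;\le\; \sum_{i \in S} \<c^{(i)}_j, \bar{x}^{(i)}\>
  + \sqrt{3\, C_\infty\, b_j\, \log(4k/\beta)} .
\]

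Combining the two estimates and using the trivial bound $\sum_{i \in U} \<c^{(i)}_j, \tilde{x}^{(i)}\> \le |U|\,C_\infty$ gives
\[
  \sum_i \<c^{(i)}_j, \tilde{x}^{(i)}\>
  \;\le\; b_j - \xi + \Delta + \frac{\cR_p\, C_\infty}{\alpha} + \sqrt{3\, C_\infty\, b_j\, \log(4k/\beta)} ,
\]
so it suffices to verify that the value of $\xi$ set by \TRPDD dominates the last three error terms. Substituting the regret bound $\cR_p = O\!\left(\tfrac{k\tau\sigma}{\epsilon} \log(wk/\beta) \log^{1/2}(w/\delta)\right)$ from \Cref{lem:regret} (with the halved privacy parameters) into the definition of $\xi$ shows that the $\cR_p/\tau$ piece absorbs $\Delta$, the $C_\infty k/\alpha$ factor absorbs $\cR_p C_\infty/\alpha$, and the $\sqrt{3 b_j}$ term absorbs the Chernoff correction. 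A union bound over the $k$ constraints together with the $\PDD$ failure event gives total failure probability at most $\beta$, establishing exact feasibility.

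The main obstacle is getting the conditioning right. The satisfied/unsatisfied partition, the prices $\bar{\lambda}$, the best responses $\{x^{(i)}_t\}$, and the average $\bar{x}$ are all jointly random and correlated, and the rounded contributions are only independent after conditioning on the entire dual-variable sequence; the $|U|$ bound and the $\PDD$ approximate feasibility must be invoked at that same conditioning level. Once this is set up carefully the remaining arithmetic that matches $\xi$ to the three error terms is routine.
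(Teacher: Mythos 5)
Your proposal is correct and follows essentially the same route as the paper: bound the expected post-reassignment constraint value via the \PDD{} guarantee plus the count of unsatisfied agents (the paper packages this as \Cref{cor:truthful-acc}, which you unbundle into \Cref{thm:bigthm} and \Cref{lem:unsat}), apply the Chernoff--Hoeffding bound of \Cref{thm:chernoff} to the independent roundings, verify that $\xi$ absorbs the error terms, and union-bound over the $k$ constraints. Your explicit treatment of the conditioning on the dual sequence, and the per-constraint bound of $C_\infty$ for each unsatisfied agent (slightly tighter than the paper's use of $C_1 = C_\infty k$), are fine refinements but do not change the argument.
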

\begin{proof}
  Let $\bar{x}$ be the output from the $(\epsilon/2, \delta/2)$-private \PDD on
  the reduced problem $\cO_{red}$. For each constraint,note that the objective
  $\< c_j, \tilde{x} \>$ is the sum of $i$ independent random variables, each
  bounded in $[0, C_\infty]$. We can bound the expected left-hand side of each
  constraint for $\tilde{x}$ by \Cref{cor:truthful-acc}; reassigning the
  unsatisfied bidders makes the constraints at most
  \[
    \EE \left[ \< c_j, \tilde{x} \> \right] = \< c_j, \bar{x} \>
    \leq
    b_j - \xi +
    \frac{80 \sqrt{8} k \tau \sigma }{\epsilon} \log\left(\frac{4 w^2
        k^2}{\beta} \right) \log^{1/2} \left( \frac{2w}{\delta} \right) \left(
      \frac{2}{\tau} + \frac{C_\infty k}{\alpha} \right)
  \]
  with probability at least $1 - \beta/2k$, since the total amount any agent
  contributes to the constraints is $C_1 = C_\infty k$. Applying the
  concentration bound (\Cref{thm:chernoff}), we have
  \begin{align*}
    \< c_j, \tilde{x} \>
    &\leq
    b_j - \xi +
    \frac{80 \sqrt{8} k \tau \sigma }{\epsilon} \log\left(\frac{4 w^2
        k^2}{\beta} \right) \log^{1/2} \left( \frac{2w}{\delta} \right) \left(
      \frac{2}{\tau} + \frac{C_\infty k}{\alpha} \right) \\
    &+ \sqrt{3 \left(
        b_j - \xi +
        \frac{80 \sqrt{8} k \tau \sigma }{\epsilon} \log\left(\frac{4 w^2
            k^2}{\beta} \right) \log^{1/2} \left( \frac{2w}{\delta} \right) \left(
          \frac{2}{\tau} + \frac{C_\infty k}{\alpha} \right)
        \right)
      C_\infty \log \left(
        \frac{2k}{\beta} \right)} \\
    &\leq b_j - \xi +
    \sqrt{3 b_j} + \frac{160 \sqrt{8} k \tau \sigma C_\infty}{\epsilon}
    \log^2 \left(\frac{4 w^2 k^2}{\beta} \right) \log^{1/2} \left(
      \frac{2w}{\delta} \right) \left( \frac{2}{\tau} + \frac{C_\infty
        k}{\alpha}
    \right) \leq b_j
  \end{align*}
  with probability at least $1 - \beta/2k$, so taking a union bound over all $k$
  constraints, everything holds with probability at least $1 - \beta$.
\end{proof}

\begin{remark*}
  While we have presented \TRPDD as achieving approximate truthulness, we can
  also run \TRPDD just for the rounding and exact feasibility by letting the
  truthfulness parameter $\alpha$ be large; the welfare and constraint violation
  bounds degrade gracefully.
\end{remark*}

Finally, it is straightforward to show that \TRPDD is private.

\begin{theorem} \label{true-priv}
  \TRPDD satisfies $(\epsilon, \delta)$-joint differential privacy.
\end{theorem}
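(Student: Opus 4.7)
The plan is to reduce this to the privacy analysis of $\PDD$ via the billboard lemma, exactly as in the analogous privacy proof for $\TPDD$. The one new wrinkle is the rounding step, which we must verify does not introduce any additional leakage.

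First, observe that $\TRPDD$ invokes $\PDD$ on the reduced problem $\cO_{red}$ with privacy parameters $(\epsilon/2, \delta/2)$. By \Cref{thm:dual-priv}, the sequence of dual variables $\lambda^{(1)},\dots,\lambda^{(T)}$ (and hence the average $\bar{\lambda}$, together with the shared variables $x^{(0)}_1,\dots,x^{(0)}_T$) is $(\epsilon/2, \delta/2)$-differentially private. Note that the tightening parameter $\xi$ and the reduced scalars $b_j - \xi$ are fixed functions of the public problem parameters, so passing from $\cO$ to $\cO_{red}$ does not affect the privacy analysis: neighboring instances of $\cO$ induce neighboring instances of $\cO_{red}$ with identical shifts.

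Second, I would argue that each agent $i$'s final output $(\tilde{x}^{(i)}, p^{(i)}(\tilde{x}^{(i)}))$ is computable as a function of (i) agent $i$'s private data $D_i$ and (ii) the publicly-derived quantities $(\bar{\lambda}, \lambda^{(1)},\dots,\lambda^{(T)})$. Indeed, the price function $p^{(i)}(x) = \sum_j \bar{\lambda}_j c^{(i)}_j(x)$ depends only on $\bar{\lambda}$ and $i$'s own constraint functions; the test for $\alpha$-satisfaction and the reassignment $\argmax_{x \in S^{(i)}} v^{(i)}(x) - p^{(i)}(x)$ both depend only on $\bar{\lambda}$ and $i$'s own valuation and feasible set; and in the satisfied case, the sequence of best responses $x^{(i)}_1,\dots,x^{(i)}_T$ is obtained by solving per-agent optimization problems that take as input the dual iterates $\lambda^{(t)}$ and agent $i$'s private data only. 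The uniform random selection from these $T$ vectors is a post-processing step that also uses only agent $i$'s best responses.

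Finally, I would invoke the billboard lemma (\Cref{billboard}): we treat the entire $(\epsilon/2, \delta/2)$-differentially private transcript $(\lambda^{(1)},\dots,\lambda^{(T)}, \bar{\lambda}, x^{(0)}_1,\dots,x^{(0)}_T)$ as the public signal, and we define $f_i$ to be the deterministic/randomized map described above that produces agent $i$'s output from this signal together with $D_i$. The billboard lemma then yields $(\epsilon/2, \delta/2)$-joint differential privacy, which is stronger than the claimed $(\epsilon,\delta)$-JDP. The main (mild) subtlety to check is that the internal randomness used for the uniform rounding choice is drawn independently of the private data across agents, so that it can be folded into the post-processing $f_i$ without affecting the privacy accounting; this is immediate from the algorithm description.
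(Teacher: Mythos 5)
Your proposal is correct and takes essentially the same route as the paper, which simply cites the privacy of \PDD (\Cref{thm:dual-priv,thm:primal-jdp}) and the billboard lemma (\Cref{billboard}); you have merely spelled out the details, including the correct observations that the tightening by $\xi$ is data-independent and that the per-agent rounding is post-processing foldable into the billboard functions $f_i$.
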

\begin{proof}
  By the privacy of \PDD (\Cref{thm:dual-priv,thm:primal-jdp}) and the billboard
  lemma (\Cref{billboard}).
\end{proof}

\paragraph*{Comparison with \citet{HHRRW14}}

In recent work, \citet{HHRRW14} give an algorithm for the $d$-demand allocation
problem (a packing linear program) we considered in \Cref{sec:examples}, but
assuming additionally the \emph{gross substitutes condition} \citep{GS99} from
the economics literature on agent valuations. In this setting, \citet{HHRRW14}
give an $(\epsilon, 0)$-joint differentially private algorithm with the
following welfare.

\begin{theorem}[\citet{HHRRW14}] \label{thm:gs-welfare}
  There is an algorithm that on input a $d$-demand problem with gross
  substitutes valuations and goods with supply $s$, with high probability finds
  a solution with welfare at least
  \[
    \OPT - \alpha n ,
  \]
  and exactly meets the supply constraints, as long as
  \[
    s = \tilde{\Omega} \left( \frac{d^3}{\alpha^3\epsilon} \right),
  \]
  ignoring logarithmic factors, for any $\alpha > 0$.
\end{theorem}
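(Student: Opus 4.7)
The plan is to implement a privacy-preserving ascending-price auction based on a deferred-acceptance procedure, exploiting gross substitutes to guarantee convergence to a Walrasian equilibrium, and then bound the damage done by the required differential-privacy noise. Start with all item prices at zero; at each round, let each agent (implicitly) demand her favorite bundle at the current prices---under gross substitutes this demand is monotone in prices, which is what enables the procedure to converge---and raise prices on over-demanded items. Broadcasting the price vector from round to round fits exactly into the billboard model of \Cref{billboard}: once the sequence of price updates is shown to be differentially private, each agent's final allocation is a post-processing function of her own input and a public signal, and joint differential privacy follows.

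Next I would privatize the price updates. Instead of reading off the true demand count per item, perturb the count by independent Laplace (or Gaussian) noise and raise the price by a small discretized increment whenever the noisy demand exceeds supply. Over $T$ rounds, advanced composition (\Cref{lem:composition}) forces the per-round budget to be of order $\eps/\sqrt{T}$, so the per-round noise per item is of order $\sqrt{T}/\eps$. A standard potential argument for ascending auctions under gross substitutes then shows that $T = \tilde{O}(d/\alpha)$ rounds (with price increments of size $\alpha/d$) suffice to reach an approximate equilibrium, and that each agent's utility at the resulting prices is within $\alpha$ of her Walrasian utility: she buys at most $d$ items, and the error in each item's price is $\tilde{O}(\sqrt{T}/\eps)$, which we calibrate to be $\alpha/d$.

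To obtain \emph{exact} feasibility I would ``tighten'' each supply constraint by reserving a small slack proportional to the worst-case over-allocation consistent with the noisy prices, run the auction on the reduced problem, and argue that the realized over-demand is covered by the reserved slack with high probability. The hypothesis $s = \tilde{\Omega}(d^3/(\alpha^3\eps))$ is exactly the regime in which the reserved fraction is small enough that the induced per-agent welfare loss is $O(\alpha)$, so summing over agents and combining with the per-agent $\alpha$-approximation from the previous step yields total welfare at least $\OPT - \alpha n$.

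The main obstacle is coupling the per-round demand noise with the discrete dynamics of the ascending auction: a small price error can trigger a spurious increment on an item that is actually under-demanded, and if many items overshoot simultaneously the auction can fail to converge to a near-Walrasian point. The proof therefore needs a carefully chosen potential function that tracks both the current prices and the cumulative ``noise budget,'' so that on the high-probability event where all noise terms stay within a union-bound envelope of size $\tilde{O}(\sqrt{T}/\eps)$, the auction terminates with prices approximating an equilibrium to within the $\alpha/d$ per-item tolerance. Once this concentration guarantee is in place, the welfare bound per agent, the exact feasibility via slack reservation, and the joint differential privacy via \Cref{billboard} all follow from standard arguments.
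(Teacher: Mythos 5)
First, a framing point: the paper does not prove this statement at all --- it is quoted verbatim from \citet{HHRRW14} purely for the purpose of comparing welfare bounds with \TRPDD and \RPDD, so there is no in-paper proof to match against. That said, your architecture is the right one and agrees with the paper's own description of \citet{HHRRW14}: an ascending-price, deferred-acceptance-style auction whose convergence relies on gross substitutes, with the price trajectory made differentially private and the final allocation obtained as each agent's post-processing of that public signal via \Cref{billboard}.

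The gap is in the quantitative accounting, which is exactly the part that has to produce the hypothesis $s = \tilde{\Omega}(d^3/(\alpha^3\eps))$. Your bound $T = \tilde{O}(d/\alpha)$ counts price \emph{levels} per item (increment $\alpha/d$, prices in $[0,1]$), not rounds of the dynamic: in a deferred-acceptance process the potential argument bounds the total number of \emph{proposals}, which grows with $n$ and with the number of price levels, and agents' tentative holdings are repeatedly revised. Paying fresh noise of order $\sqrt{T}/\eps$ per step via advanced composition (\Cref{lem:composition}) over that many steps is ruinous; this is why \citet{HHRRW14} instead maintain each item's cumulative proposal count with a private counter under continual observation (tree-based aggregation), whose error is only polylogarithmic in the stream length, and require the supply $s$ to dominate that counter error so that a price increment is triggered only by genuine over-demand. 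Relatedly, your calibration ``$\tilde{O}(\sqrt{T}/\eps) = \alpha/d$'' conflates noise on a demand \emph{count} with an error in a \emph{price}, and with $T = d/\alpha$ it would yield a supply requirement around $\sqrt{d/\alpha}/\eps$ rather than $d^3/(\alpha^3\eps)$; the cubic dependence arises from combining the $d/\alpha$ price levels, the $d$ items per agent, and the accuracy the counters must have relative to $s$. Until the step count, the composition strategy, and the supply condition are derived consistently, the sketch does not establish the stated theorem; the convergence-under-noise potential argument you correctly identify as the main obstacle is precisely where this accounting has to be done.
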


While we do not have algorithms specific to the gross substitutes case, we can
consider the $d$-demand problem (a packing linear program) with general
valuations, rather than gross substitutes.  As discussed in \Cref{cor:ddemand},
$\tau = 1$ is a dual bound, $\sigma = d\sqrt{2}$ bounds the gradient
sensitivity, and $w = n$ bounds the width. The maximum welfare for any agent is
$V = 1$ and each agent contributes at most $C_\infty = 1$ towards any coupling
constraint. While we can also achieve approximate truthfulness (unlike
\citet{HHRRW14}), we will take $\alpha$ to be large; this makes \TRPDD
non-truthful. Applying \Cref{thm:tight-opt}, we can lower bound the welfare of
\TRPDD on a $d$-demand problem.

\begin{corollary} \label{cor:tight-d-demand}
  With high probability, \TRPDD run on a $d$-demand problem with supply $s$ for
  each good finds a solution with welfare at least
  \[
    \OPT - \tilde{O} \left(
      \left( \frac{1}{\sqrt{s}} + \frac{kd}{s\epsilon} \right) \cdot \OPT
      + \frac{kd}{\epsilon}
    \right)
  \]
  and exactly meets the supply constraints, as long as $s = \tilde{\Omega}
  (kd/\epsilon)$, ignoring logarithmic factors.
\end{corollary}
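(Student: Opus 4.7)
The plan is to specialize Theorem~\ref{thm:tight-opt} and Theorem~\ref{thm:tight-feas} to the $d$-demand setting using the problem-specific parameters already recorded in Corollary~\ref{cor:ddemand}: dual bound $\tau=1$, gradient sensitivity $\sigma = d\sqrt{2}$, width $w=n$, valuation bound $V=1$, per-agent constraint contribution $C_\infty=1$, and supply $b_j = s$ for every $j\in[k]$. Substituting these into the expression for $\xi$ from \TRPDD gives, up to polylogarithmic factors,
\[
\xi \;=\; \tilde{O}\!\left(\sqrt{s} + \frac{kd}{\epsilon}\Bigl(1+\frac{k}{\alpha}\Bigr)\right),
\]
so that $\kappa = \xi/s = \tilde{O}\bigl(\tfrac{1}{\sqrt{s}} + \tfrac{kd}{s\epsilon} + \tfrac{k^2 d}{s\alpha\epsilon}\bigr)$. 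The hypothesis $s = \tilde{\Omega}(kd/\epsilon)$ is exactly what is needed to guarantee $\kappa<1$, which in turn makes the reduced LP feasible and, by Theorem~\ref{thm:tight-feas}, yields an \emph{exactly feasible} rounded solution with high probability. This takes care of the feasibility half of the corollary.

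For the welfare half I would then plug the same parameters into the welfare bound of Theorem~\ref{thm:tight-opt}, giving (up to logs) the inequality
\[
\alpha \;=\; O\!\left(\kappa\cdot\OPT \;+\; \frac{kd}{\epsilon}\Bigl(1+\frac{1}{\alpha}\Bigr) \;+\; \sqrt{\OPT}\right).
\]
This is a coupled fixed-point: $\alpha$ appears on both sides, through the $V/\alpha$ factor in the theorem and through the $k/\alpha$ term I picked up in $\kappa$. The standard trick is to pick $\alpha$ large enough to neutralize both occurrences. Choosing $\alpha \ge \tilde{\Omega}(kd/\epsilon)$ drives $(1+V/\alpha)$ and $(1+k/\alpha)$ to $O(1)$ and, combined with $s \ge \tilde{\Omega}(kd/\epsilon)$, ensures $k^2 d/(s\alpha\epsilon) \le kd/(s\epsilon)$, so $\kappa$ collapses to $\tilde{O}(1/\sqrt{s} + kd/(s\epsilon))$. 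One then verifies after the fact that the resulting bound is indeed at least $\tilde\Omega(kd/\epsilon)$, closing the fixed point.

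The last step is to absorb the Chernoff concentration term $\sqrt{\OPT}$ into the two error terms that appear in the claimed bound. I would split into cases: if $\OPT \ge s$ then $\sqrt{\OPT} \le \OPT/\sqrt{s}$, so this term is dominated by $\tfrac{1}{\sqrt{s}}\cdot \OPT$; if $\OPT < s$, an AM-GM step bounds $\sqrt{\OPT} \le \tfrac{1}{\sqrt{s}}\OPT + \sqrt{s}$, and the $\sqrt{s}$ piece is in turn absorbed into $\tfrac{kd}{\epsilon}$ (when $s \le (kd/\epsilon)^2$) or back into $\tfrac{1}{\sqrt{s}}\OPT$ using the regime assumption. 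Putting the three pieces together recovers the advertised $\tilde{O}\bigl((\tfrac{1}{\sqrt{s}}+\tfrac{kd}{s\epsilon})\OPT + \tfrac{kd}{\epsilon}\bigr)$ loss. The main obstacle here is not conceptual but bookkeeping: carefully solving the coupled fixed point in $\alpha$, verifying the implicit side condition $\kappa<1$ under the stated hypothesis on $s$, and cleanly absorbing the $\sqrt{\OPT}$ concentration term into the two dominant error terms without inflating the final expression.
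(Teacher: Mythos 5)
Your proposal matches the paper's own proof, which simply instantiates \Cref{thm:tight-opt} (and the feasibility requirement $\kappa<1$) with $\tau=1$, $\sigma=d\sqrt{2}$, $w=n$, $V=C_\infty=1$, $b_j=s$, takes the truthfulness parameter $\alpha$ large to kill the $V/\alpha$ and $C_\infty k/\alpha$ terms (the paper announces this in the sentence preceding the corollary), and reads off $\kappa=O\left(1/\sqrt{s}+kd/(s\epsilon)\right)$. You are in fact more careful than the paper about absorbing the $\sqrt{V\cdot\OPT}$ concentration term, which the paper's three-line proof ignores entirely; note only that your last sub-case --- absorbing $\sqrt{s}$ into $\frac{1}{\sqrt{s}}\cdot\OPT$ when $\OPT<s$ and $s>(kd/\epsilon)^2$ --- does not literally go through, an imprecision inherited from the corollary's statement rather than introduced by your argument.
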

\begin{proof}
  Note that the max scalar $\|b\|_\infty$ in the $d$-demand problem is simply
  supply for each item $s$. The welfare theorem for \TRPDD
  (\Cref{thm:tight-opt}) holds as long as
  \[
    \min_j b_j = s = \tilde{\Omega} \left(
      \sqrt{s} + \frac{k\sigma}{\epsilon}
    \right) .
  \]
  As discussed in \Cref{sec:examples}, the gradient sensitivity for the
  $d$-demand problem is bounded by $\sigma = d\sqrt{2}$. So, we need $s \gg
  kd/\epsilon$. The welfare guarantee follows from \Cref{thm:round-opt},
  plugging in parameters $V = C_\infty = 1$ and noting that $\kappa = O \left(
    1/\sqrt{s} + kd/s\eps \right)$.
\end{proof}

The two algorithms are somewhat incomparable, for several reasons:
\begin{itemize}
  \item The welfare of \TRPDD (\Cref{cor:round-d-demand}) depends on $k$ (but
    not $n$), while the welfare guarantee in \Cref{thm:gs-welfare} depends on
    $n$ (but not $k$).
  \item The algorithm of \citet{HHRRW14} requires the gross substitutes condition
    while \TRPDD does not.
  \item The algorithm of \citet{HHRRW14} satisfies pure $(\epsilon, 0)$-joint
    differential privacy, while \TRPDD satisfies $(\epsilon, \delta)$-joint
    differential privacy only for $\delta > 0$.
\end{itemize}

Nevertheless, we can try to make a rough comparison. In the algorithm by
\citet{HHRRW14}, take $\alpha \approx (d^2/k)^{1/3}$, and say the supply $s
\approx d^3/\alpha^3 \epsilon \approx kd/\epsilon$ (the minimum needed for both
\Cref{thm:gs-welfare} and \Cref{cor:round-d-demand} to apply). Then by
\Cref{cor:tight-d-demand}, \TRPDD gets welfare at least
\[
  (1 - \tilde{O}(1)) \cdot \OPT - \tilde{O} \left( \frac{k d}{\epsilon} \right)
\]
versus $\OPT - \alpha n \approx \OPT - n(d^2/k)^{1/3}$. Thus, \TRPDD improves
when $n$ is large compared to $k$ and $\OPT$:
\[
  n \gg \frac{k^{4/3}}{d\epsilon} + \frac{k^{1/3}}{d^{2/3}} \cdot \OPT  .
\]

\SUBSECTION{Exact feasibility, take $2$}
\label{sec:rounddude}

If we don't need approximate truthfulness, we can round and guarantee exact
feasibility in a different way, with an accuracy guarantee incomparable to
\Cref{thm:tight-opt}. We will continue to work with packing linear programs with
a null action (\Cref{def:packing}), assuming one more thing.

\begin{assumption*}
  We will consider classes of packing linear programs $\cO = (S, v, c, b)$ with
  one extra condition: Each $S^{(i)}$ is a polytope such that at each vertex
  $x$, for all $j$, $c^{(i)}_j(x) = 0$ or $c^{(i)}_j(x) \geq L > 0$. This
  parameter $L$ is valid for the entire class; in particular, it does not depend
  on private data.
\end{assumption*}

Our modification to the solution of \PDD will work in two steps. Instead of
tightening the constraints like \TRPDD, we run \PDD on the original linear
program. Similar to \TRPDD, each agent will then round her solution by selecting
a uniformly random best response from her set of best responses $x^{(i)}_1,
\dots, x^{(i)}_T$. To handle the constraint violation, we will maintain a
differentially private flag on each constraint, which is raised when the
constraint goes tight. In order, we will take agent $i$'s rounded solution if
the flag is down for every constraint she contributes to, i.e., for every $j$
with $\< c^{(i)}_j, x^{(i)}\> > 0$.  Otherwise, she goes \emph{unserved}: we
give her solution $0$. The full code is in \Cref{alg:rounding}.

\begin{algorithm}[h]
  \caption{$\RPDD(\cO, \sigma, \tau, w, \eps, \delta, \beta)$}
  \label{alg:rounding}
  \begin{algorithmic}
    \STATE{{\bf Input}: Packing linear program $\cO = (S, v, c, b)$ with
      $n$ agents and $k$ coupling constraints, gradient sensitivity
      bounded by $\sigma$, a dual bound $\tau$, width bounded by $w$, and
      privacy parameters $\eps > 0, \delta \in
      (0, 1/2)$, and confidence parameter $\beta \in (0, 1)$.}

    \STATE{{\bf Initialize}:
      \begin{mathpar}
        \delta' := \frac{\delta}{2} ,
        \and \eps' := \frac{\eps}{2 \sqrt{8T\ln(1/\delta')}} ,
        \and \zeta := \frac{8 (\log n + \log(3k/\beta) )}{\epsilon'} ,
        \and T_j := b_j - \zeta,
        \quad F_j := \mathsf{Sparse}(\eps', T_j) \text{ for } j \in [k] .
      \end{mathpar}
    }

    \STATE{{\bf Run} $\PDD$:
      \[
        (\bar{x}, \bar{\lambda}) := \PDD(\cO, \sigma, \tau, w, \epsilon/2,
        \delta/2, \beta/3) .
      \]
    }
    \STATE{{\bf for each } agent $i = 0 \dots n$:}
    \INDSTATE[1]{Select $\tilde{x}^{(i)}$ uniformly at random from best responses
      $x^{(i)}_1, \dots, x^{(i)}_T$.
    }
    \INDSTATE[1]{{\bf if } $F_j = \top$ for some constraint with
      $c^{(i)}_j(\tilde{x}^{(i)}) > 0$:}
    \INDSTATE[2]{Set $\hat{x}^{(i)} := 0$.}
    \INDSTATE[1]{{\bf else } Set $\hat{x}^{(i)} := \tilde{x}^{(i)}$.}
    \INDSTATE[1]{Query each sparse vector $j$ with
      \[
        q^{(i)}_j := \sum_{l = 0}^n c^{(l)}_j (\hat{x}^{(l)}) .
      \]
    }
    \STATE{{\bf Output}: $\hat{x}^{(i)} \text{ to agents } i \in [n]$.}
  \end{algorithmic}
\end{algorithm}

Let's consider the first step. By almost the same analysis as for \TRPDD, we can
show that the rounding procedure degrades the objective and violates the
constraints by only a small amount (past what was guaranteed by
\Cref{thm:bigthm}).

\begin{theorem} \label{thm:rounding}
  Let $\beta > 0$ be given. Suppose each agent $i$ independently and uniformly
  at random selects $\tilde{x}^{(i)}$ from $x^{(i)}_1, \dots, x^{(i)}_T$. Then
  with probability at least $1 - \beta$,
  \begin{itemize}
    \item the objective satisfies $\< v, \tilde{x} \> \geq \OPT - \alpha$, for
      \[
        \alpha = O \left(
          \left( \frac{k \tau \sigma}{\epsilon} +  \sqrt{V \cdot \OPT} \right)
          \log\left(\frac{wk}{\beta} \right)
          \log^{1/2} \left( \frac{w}{\delta} \right)
        \right) ;
      \]
      and
    \item the total constraint violation is bounded by
      \[
        \sum_{j = 1}^k \left( \< c_j, \tilde{x} \> - b_j \right)_+
        = O \left(
          \sqrt{ k C_\infty \|b\|_1 \log \left( \frac{k}{\beta} \right) }
        \right) ,
      \]
      as long as
      \[
        \| b \|_1 = \Omega \left(
          \frac{k \sigma}{\epsilon} \log^2 \left(\frac{w k}{\beta} \right) \log
          \left( \frac{w}{\delta} \right) \max \left\{ \frac{ \sigma
            }{\epsilon}, 1 \right\}
        \right) .
      \]

  \end{itemize}
\end{theorem}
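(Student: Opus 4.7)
The plan is to reduce the theorem to a direct application of the Chernoff--Hoeffding bound (\Cref{thm:chernoff}) on top of the guarantees already provided by \PDD (\Cref{thm:bigthm}). First, I would invoke \PDD with parameters $(\epsilon/2, \delta/2, \beta/3)$ to obtain the approximate primal--dual pair $(\bar{x}, \bar{\lambda})$ together with the per-agent sequences $x^{(i)}_1, \dots, x^{(i)}_T$. By \Cref{thm:bigthm}, with probability at least $1-\beta/3$ the average primal play satisfies $\< v, \bar{x}\> \geq \OPT - 2\cR_p$ and $\sum_j \bigl(\<c_j,\bar{x}\> - b_j\bigr)_+ = O(\cR_p/\tau)$. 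The key observation is that since $\tilde{x}^{(i)}$ is drawn uniformly and independently across agents from $\{x^{(i)}_1,\dots,x^{(i)}_T\}$, linearity gives $\EE[\tilde{x}^{(i)}] = \bar{x}^{(i)}$, and hence $\EE[\<v,\tilde{x}\>] = \<v,\bar{x}\>$ and $\EE[\<c_j,\tilde{x}\>] = \<c_j,\bar{x}\>$ for each coupling constraint $j$.

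For the objective bound, $\<v,\tilde{x}\> = \sum_{i} \<v^{(i)}, \tilde{x}^{(i)}\>$ is a sum of independent random variables each bounded in $[0,V]$, with expectation at least $\OPT - 2\cR_p$. The lower-tail version of \Cref{thm:chernoff} with failure probability $\beta/3$ yields $\<v,\tilde{x}\> \geq \OPT - 2\cR_p - O\bigl(\sqrt{V \cdot \OPT \cdot \log(1/\beta)}\bigr)$. Substituting the expression for $\cR_p$ from \Cref{thm:bigthm} and collecting the two $\log^{1/2}$ factors gives exactly the stated $\alpha$.

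For the feasibility bound, for each constraint $j$ the quantity $\<c_j,\tilde{x}\>$ is a sum of independent variables bounded in $[0, C_\infty]$ with mean $\<c_j,\bar{x}\>$. Applying the upper tail of \Cref{thm:chernoff} and union bounding over $k$ constraints, with probability at least $1-\beta/3$,
\begin{mathdisplayfull}
\<c_j,\tilde{x}\> - \<c_j,\bar{x}\> \leq \sqrt{3 C_\infty \<c_j,\bar{x}\> \log(3k/\beta)} \qquad \text{for all } j \in [k].
\end{mathdisplayfull}
Summing over $j$ and applying Cauchy--Schwarz to the square roots gives
\begin{mathdisplayfull}
\sum_{j=1}^k \bigl(\<c_j,\tilde{x}\> - \<c_j,\bar{x}\>\bigr)_+ \leq \sqrt{3 k C_\infty \log(3k/\beta)\cdot \textstyle\sum_{j} \<c_j,\bar{x}\>}.
\end{mathdisplayfull}
Since $\sum_j \<c_j,\bar{x}\> \leq \|b\|_1 + O(\cR_p/\tau)$, the hypothesis $\|b\|_1 = \Omega(k\sigma\log^2(\cdot)\cdot\max\{\sigma/\eps,1\}/\eps)$ is exactly what is needed to absorb the $O(\cR_p/\tau)$ term into $O(\|b\|_1)$, yielding $\sum_j \<c_j,\bar{x}\> = O(\|b\|_1)$. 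Combining with the $O(\cR_p/\tau)$ violation of $\bar{x}$ itself (which is dominated by the rounding term under the same hypothesis on $\|b\|_1$) gives the claimed $O(\sqrt{k C_\infty \|b\|_1 \log(k/\beta)})$ total violation. A final union bound over the three $\beta/3$ failure events completes the proof.

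The main obstacle is getting the constraint-violation bound to scale with $\sqrt{k\|b\|_1}$ rather than $k\sqrt{\|b\|_\infty}$: a naive per-constraint Chernoff plus a trivial sum over $j$ would cost an extra factor of $\sqrt{k}$. The Cauchy--Schwarz step that turns the sum of square roots into a single square root of $\sum_j \<c_j,\bar{x}\>$, together with the crude but sufficient bound $\sum_j \<c_j,\bar{x}\> = O(\|b\|_1)$ supplied by \Cref{thm:bigthm} under the hypothesis on $\|b\|_1$, is what makes the right-hand side tight.
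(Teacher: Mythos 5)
Your proposal is correct and follows essentially the same route as the paper: condition on the guarantees of \Cref{thm:bigthm} for $\bar{x}$, observe that $\EE[\tilde{x}^{(i)}]=\bar{x}^{(i)}$ so the rounded objective and constraint values concentrate around those of $\bar{x}$ via \Cref{thm:chernoff}, and then combine the $k$ per-constraint deviations into a single $\sqrt{kC_\infty\|b\|_1\log(k/\beta)}$ term, using the hypothesis on $\|b\|_1$ to absorb both the pre-rounding violation $\sum_j y_j$ and the extra additive term inside the square root. The only cosmetic difference is that you invoke Cauchy--Schwarz where the paper invokes Jensen's inequality for the step $\sum_j\sqrt{a_j}\leq\sqrt{k\sum_j a_j}$; these are the same bound here.
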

\begin{proof}
  Let $\bar{x}$ be the output from the $(\epsilon/2, \delta/2)$-private
  \PDD.  Note that the objective $\< v, \tilde{x} \>$ is the sum
  of $i$ independent random variables, each bounded in $[0, V]$. By
  \Cref{thm:bigthm}, we can lower bound the expected objective:
  \[
    \EE \left[ \< v, \tilde{x} \> \right] = \< v, \bar{x} \>
    \geq \OPT -
    \frac{160 \sqrt{8} k \tau \sigma }{\epsilon} \log\left(\frac{4 w^2
        k}{\beta} \right)^2 \log \left( \frac{2}{\delta} \right) ,
  \]
  with probability at least $1 - \beta/2$. Applying the concentration bound
  (\Cref{thm:chernoff}), we have
  \begin{align*}
    \< v, \tilde{x} \> &>
    \OPT -
    \frac{160 \sqrt{8} k \tau \sigma }{\epsilon} \log\left(\frac{4 w^2
        k}{\beta} \right)^2 \log \left( \frac{2}{\delta} \right)
    - \sqrt{2V \cdot \OPT \log \left( \frac{2(k + 1)}{\beta} \right)} \\
    &= \OPT - O \left(
      \left( \frac{k \tau \sigma}{\epsilon} +  \sqrt{V \cdot \OPT} \right)
      \log\left(\frac{wk}{\beta} \right) \log^{1/2}\left( \frac{w}{\delta} \right)
    \right)
  \end{align*}
  with probability at least $1 - \beta/2(k + 1)$.

  For the constraints, define $y_j = (\< c_j \bar{x} \> - b_j)_+$ to be the
  constraint violation for $j$, if any. For each constraint, we can bound the
  expected left-hand side of each constraint for $\tilde{x}$:
  \[
    \EE \left[ \< c_j, \tilde{x} \>  \right]
    =
    \< c_j, \bar{x} \> \leq b_j + y_j .
  \]
  Since $\< c_j, \tilde{x} \>$ is the sum of $i$ independent random variables in
  $[0, C_\infty]$, applying \Cref{thm:chernoff} gives:
  \[
    \< c_j, \tilde{x} \> < b_j + y_j + \sqrt{3 (b_j + y_j) C_\infty \log \left(
        \frac{2(k + 1)}{\beta} \right)}
  \]
  with probability at least $\beta/2(k + 1)$. So, the total constraint violation
  is bounded by
  \begin{align*}
    \sum_{j = 1}^k \left( \< c_j, \tilde{x} \> - b_j \right)_+
    &\leq
    \sum_{j = 1}^k y_j + \sqrt{3 (b_j + y_j) C_\infty \log \left( \frac{2(k +
          1)}{\beta} \right)} \\
    &\leq
    \left( \sum_{j = 1}^k y_j \right)
    + \sqrt{3 k C_\infty
      \log \left( \frac{2(k + 1)}{\beta} \right)
      \left( \|b\|_1 + \sum_{j = 1}^k y_j \right) } ,
  \end{align*}
  where the second step is by Jensen's inequality. The sum of $y_j$ is the total
  constraint violation of $\bar{x}$, which is bounded by \Cref{thm:bigthm}:
  \[
    \sum_{j = 1}^k y_j \leq
    \frac{160 \sqrt{8} k \sigma }{\epsilon} \log\left(\frac{4 w^2
        k}{\beta} \right) \log^{1/2} \left( \frac{2w}{\delta} \right)
    \ll  \| b \|_1 ,
  \]
  where we have assumed that the constraint violation guarantees for \PDD are
  non-trivial.\footnote{%
    If the constraint violation for \PDD itself is already too big, then there
    is no hope for getting non-trivial constraint violation for \RPDD.}
  Then,
  \begin{align*}
    \sum_{j = 1}^k \left( \< c_j, \tilde{x} \> - b_j \right)_+
    &\leq
    \frac{160 \sqrt{8} k \sigma }{\epsilon} \log\left(\frac{4 w^2
        k}{\beta} \right) \log^{1/2} \left( \frac{2w}{\delta} \right)
    + \sqrt{6 k C_\infty \|b\|_1 \log \left( \frac{2(k + 1)}{\beta} \right) } \\
    &= O \left(
      \sqrt{ k C_\infty \|b\|_1 \log \left( \frac{k}{\beta} \right) }
    \right)
  \end{align*}
  with probability at least $1 - \beta/2(k + 1)$; the last step holds since
  we have assumed
  \[
    \frac{k \sigma^2 }{\epsilon^2} \log^2 \left(\frac{w
        k}{\beta} \right) \log \left( \frac{w}{\delta} \right)
    \ll  \| b \|_1 .
  \]
  Taking a union bound over the $k + 1$ Chernoff bounds, everything holds with
  probability at least $1 - \beta/2$. Since \PDD succeeds with
  probability $1 - \beta/2$, the total failure probability is $1 - \beta$.
\end{proof}

Now, let's consider the second step. To maintain the flags on each constraint,
we will use $k$ copies of the \emph{sparse vector mechanism}
\citep{DNRRV09}. This standard mechanism from differential privacy takes
a numeric threshold and a sequence of (possibly adaptively chosen) queries.
Sparse vector outputs $\bot$ while the current query has answer substantially
less than the threshold, and outputs $\top$ and halts when the query has
answer near or exceeding the threshold. The code is in \Cref{alg:sv}.

\begin{algorithm}[h]
  \caption{Sparse vector mechanism $\mathsf{Sparse}(\eps, T)$}
  \label{alg:sv}
  \begin{algorithmic}
    \STATE{{\bf Input}: Privacy parameter $\eps > 0$, threshold $T$, and stream
    of queries $q_1, q_2, \dots$.}
    \STATE{{\bf Initialize}: $\hat{T} := T + \Lap \left( \frac{2}{\epsilon}
      \right)$. }
    \STATE{{\bf for each } query $q_i$:}
    \INDSTATE[1]{Let $y := q_i + \Lap\left( \frac{4}{\epsilon} \right)$.}
    \INDSTATE[1]{{\bf if } $y \geq \hat{T}$:}
    \INDSTATE[2]{{\bf Output } $a_i = \top$, {\bf Halt}.}
    \INDSTATE[1]{{\bf else}}
    \INDSTATE[2]{{\bf Output } $a_i := \bot$.}
  \end{algorithmic}
\end{algorithm}

Our queries will measure how large each constraint is for the users who have
already submitted their solution $x^{(i)}$, and the threshold will be slightly
less than the constraint bound $b_j$. We want to argue two things: (a) sparse
vector doesn't halt while the constraint is at least $\alpha$ away from being
tight (for some $\alpha$ to be specified), and (b) each constraint ends up
feasible.  We will use a standard accuracy result about the sparse vector
mechanism.

\begin{lemma}[see e.g., \citet{DR14} for a proof] \label{lem:sv-acc}
  Let $\alpha > 0, \beta \in (0, 1)$. Say sparse vector on query sequence $q_1,
  \dots, q_k$ and threshold $T$ is \emph{$(\alpha, \beta)$-accurate} if with
  probability at least $1 - \beta$, it outputs $\bot$ while $q_i$ has value at
  most $T - \alpha$, and halts on the first query with value greater than $T +
  \alpha$.  Then, sparse vector with privacy parameter $\epsilon$ is $(\alpha,
  \beta)$-accurate for
  \[
    \alpha = \frac{8 (\log k + \log(2/\beta) )}{\epsilon} .
  \]
\end{lemma}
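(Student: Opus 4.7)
The plan is to prove this by a standard noise-concentration argument for the Laplace distribution, followed by a union bound over the threshold noise and the per-query noise, and finally a deterministic case analysis showing that when every noise draw is small, the mechanism is correct on every query.

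First I would recall the Laplace tail bound: for $Z \sim \Lap(b)$ and $t \geq 0$, $\Pr[|Z| \geq t] = \exp(-t/b)$. In \Cref{alg:sv}, the threshold perturbation is $Z_T \sim \Lap(2/\epsilon)$ and each query perturbation is $Z_i \sim \Lap(4/\epsilon)$. Setting $\alpha = (8/\epsilon)(\log k + \log(2/\beta))$ and applying the tail bound, I get $\Pr[|Z_T| \geq \alpha/2] \leq \exp(-\alpha \epsilon/4) \leq \beta/(2k) \leq \beta/2$ (since $k \geq 1$), and similarly $\Pr[|Z_i| \geq \alpha/2] \leq \exp(-\alpha \epsilon/8) = \beta/(2k)$ for each individual query. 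By a union bound over the single threshold noise and all $k$ query noises, with probability at least $1 - \beta$ we have $|Z_T| < \alpha/2$ and $|Z_i| < \alpha/2$ simultaneously for all $i \in [k]$. Call this the good event $E$.

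Next I would argue that under event $E$, the mechanism is deterministically $(\alpha, \beta)$-accurate. The mechanism outputs $\top$ on query $i$ exactly when $q_i + Z_i \geq T + Z_T$, equivalently $q_i \geq T + (Z_T - Z_i)$. On $E$, we have $|Z_T - Z_i| \leq |Z_T| + |Z_i| < \alpha$. Therefore: if $q_i \leq T - \alpha$, then $q_i + Z_i < T - \alpha + \alpha/2 = T - \alpha/2 < T + Z_T$, so the mechanism correctly outputs $\bot$; and if $q_i > T + \alpha$, then $q_i + Z_i > T + \alpha - \alpha/2 = T + \alpha/2 > T + Z_T$, so the mechanism correctly halts with $\top$. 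This covers both failure modes in the definition of $(\alpha, \beta)$-accuracy.

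Combining the two steps, with probability at least $1 - \beta$ the mechanism outputs $\bot$ on every query with value at most $T - \alpha$ and halts on the first query with value greater than $T + \alpha$, as required. There is no real obstacle here beyond bookkeeping: the only care needed is in choosing the split of failure probability (a factor of $2$ for the threshold noise and $\beta/(2k)$ for each query noise) so that the final union bound lands at $\beta$ and the resulting $\alpha$ comes out to exactly $(8/\epsilon)(\log k + \log(2/\beta))$.
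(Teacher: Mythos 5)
Your proof is correct and is exactly the standard argument for this lemma, which the paper does not prove itself but cites to \citet{DR14}; the Laplace tail bound, the $\beta/2$ versus $\beta/(2k)$ split between the single threshold noise and the $k$ query noises, and the deterministic case analysis under the good event all match the cited proof, and the constants check out ($\exp(-\alpha\epsilon/8) = \beta/(2k)$ for the $\Lap(4/\epsilon)$ query noise and $\exp(-\alpha\epsilon/4) = (\beta/(2k))^2 \leq \beta/2$ for the $\Lap(2/\epsilon)$ threshold noise). No gaps.
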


Now, we are ready to show how much objective \RPDD loses in order
to guarantee exact feasibility.

\begin{theorem} \label{thm:round-opt}
  Let $\beta \in (0, 1)$ be given. With probability at least $1 - \beta$,
  \RPDD run on a packing linear program produces a solution
  exactly satisfying all the constraints, and with objective at least $\OPT -
  \alpha$, for
  \[
    \alpha = O \left(
      \sqrt{V}\log\left(\frac{k}{\beta} \right)
      \left( \sqrt{\OPT } +
        \frac{\sqrt{V}}{L} \left(
          \frac{\log n}{\epsilon} + \sqrt{k C_\infty \| b \|_1}
        \right) \right)
    \right) ,
  \]
  as long as
  \[
    \| b \|_1 = \Omega \left(
      \frac{k \sigma}{\epsilon} \log^2 \left(\frac{w k}{\beta} \right) \log
      \left( \frac{w}{\delta} \right) \max \left\{ \frac{ \sigma
        }{\epsilon}, 1 \right\}
    \right) .
  \]
\end{theorem}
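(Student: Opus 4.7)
The plan is to combine three high-probability events, each absorbing a $\beta/3$ share of the failure budget: the accuracy of $\PDD$ from \Cref{thm:bigthm}, the concentration of per-agent rounding from \Cref{thm:chernoff}, and the correctness of each sparse vector instance from \Cref{lem:sv-acc}. First I would invoke $\PDD$ with parameters $(\sigma, \tau, w, \eps/2, \delta/2, \beta/3)$ to obtain a primal-dual pair $(\bar{x}, \bar{\lambda})$ whose objective satisfies $\langle v, \bar{x}\rangle \geq \OPT - \tilde O(k\tau\sigma/\eps)$ and whose total coupling-constraint violation is $\tilde O(k\sigma/\eps)$. Because each rounded $\tilde{x}^{(i)}$ is a uniform draw from the best-response list and $\bar{x}^{(i)}$ is precisely the average of that list, $\EE[\tilde{x}^{(i)}] = \bar{x}^{(i)}$, and a Chernoff--Hoeffding bound on $\langle v, \tilde{x}\rangle$ adds a $O(\sqrt{V\cdot \OPT \log(k/\beta)})$ concentration term to the welfare gap, matching the rounding-only analysis of \Cref{thm:rounding}.

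Next I would argue exact feasibility via the sparse vectors. Each $F_j$ is run with threshold $T_j = b_j - \zeta$ at privacy parameter $\eps'$, and \Cref{lem:sv-acc} (with our choice of $\zeta$) makes it $(\zeta, \beta/3k)$-accurate; a union bound over the $k$ constraints shows that, except with probability $\beta/3$, every $F_j$ outputs $\bot$ while its cumulative query is below $b_j - 2\zeta$ and halts by the time the query exceeds $b_j$. The assumption that every contributing agent's vertex contribution is at least $L$ (and at most $C_\infty$) combined with the lower bound on $\|b\|_1$ (which forces $\zeta$ to dominate the single-step overshoot of $C_\infty$) ensures that the cumulative usage for constraint $j$ at the halting step does not exceed $b_j$. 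Since, once $F_j$ is raised, any later agent with $c^{(i)}_j(\tilde{x}^{(i)}) > 0$ is reset to $\hat{x}^{(i)} = 0$, constraint $j$ is exactly satisfied in the final output.

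Finally I would account for welfare lost to sparse-vector skipping. Each skipped agent loses at most $V$; I would bound the number of such agents per constraint by observing that the total contribution of all agents who would have been served to constraint $j$ (the sum of their i.i.d.\ rounded contributions) concentrates around $\langle c_j, \bar{x}\rangle$ up to $O(\sqrt{C_\infty \|b\|_1 \log(k/\beta)})$ by \Cref{thm:chernoff}, while the ``absorbable'' excess above the halt is only $\zeta + C_\infty = \tilde O(\log n/\eps)$. Dividing the excess by $L$ (each skipped agent contributes $\geq L$) and multiplying by $V$ produces the $\frac{\sqrt{V}}{L}\bigl(\frac{\log n}{\eps} + \sqrt{k C_\infty \|b\|_1}\bigr)$ term in the claimed $\alpha$, while a final union bound collects the total failure probability into $\beta$.

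The main obstacle is the last step: translating the constraint-side concentration and sparse-vector slack into a welfare-side loss of the right form, and in particular pulling $\sqrt{V}$ rather than $V$ out front. This requires applying Chernoff to the total rounded constraint usage rather than a crude count of skipped agents, and then invoking the structural lower bound $L$ to amortize each unit of wasted capacity across the agents it displaces.
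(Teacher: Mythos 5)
Your proposal matches the paper's proof essentially step for step: the same three $\beta/3$ failure events (\PDD accuracy, Chernoff concentration of the per-agent rounding on both the objective and each constraint, and sparse-vector accuracy with a union bound over the $k$ flags), the same feasibility argument from the thresholds $T_j = b_j - \zeta$, and the same accounting of unserved agents by dividing the total excess capacity (sparse-vector slack plus rounding-induced constraint violation) by $L$ and charging $V$ per unserved agent. The ``obstacle'' you flag at the end is illusory: the paper does use the crude count of skipped agents times $V$, and the leading $\sqrt{V}$ in the stated bound is just an artifact of factoring, since $\sqrt{V}\cdot\bigl(\sqrt{V}/L\bigr) = V/L$.
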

\begin{proof}
  Let $\hat{x}$ be the output. Since we make at most $n$ queries to every flag,
  by \Cref{lem:sv-acc} and a union bound over all $k$ flags, with probability at
  least $1 - \beta/3$, the final left-hand side of each constraint is at most
  \[
    \< c_j, \hat{x} \> \leq T_j +
    \frac{32\sqrt{2} (\log n + \log(3k/\beta) \sqrt{\log(2/\delta)})}{\epsilon}
    = b_j ,
  \]
  so $\hat{x}$ is strictly feasible. Furthermore, each constraint with raised
  flag satisfies
  \[
    \< c_j, \hat{x} \> \geq T_j -
    \frac{32\sqrt{2} (\log n + \log(3k/\beta) \sqrt{\log(2/\delta)})}{\epsilon}
    = b_j - 2 \zeta.
  \]
  Now, an agent is only unserved if she contributes to a violated constraint.
  Since the best-response problem of each agent is to maximize a linear function
  over a polytope $S^{(i)}$, all best-responses are vertices. So, an unserved
  agent contributes at least $L > 0$ to violated constraints, and each unserved
  bidder reduces the total constraint violation by at least $L$.

  By \Cref{thm:rounding}, with probability at least $1 - 2\beta/3$, the total
  constraint violation of $\tilde{x}$ is:
  \[
    \sum_{j = 1}^k \left( \< c_j, \tilde{x} \> - b_j \right)_+
    \leq 3 \sqrt{ k C_\infty \|b\|_1 \log \left( \frac{k}{\beta} \right) } .
  \]
  Now the final output $\hat{x}$ has no constraint violation, and has reduced
  the right-hand side of each constraint by at most $2 \zeta$. So, the number of
  agents who are unserved is at most
  \[
    U \leq \frac{1}{L} \left( 2 \zeta +
      + 3 \sqrt{ k C_\infty \|b\|_1 \log \left( \frac{k}{\beta} \right) }
    \right) .
  \]
  Since each unserved agent contributes at most $V$ to the objective, the final
  output $\hat{x}$ reduces the objective of $\tilde{x}$ by at most $UV$, so
  \begin{align*}
    \< v, \hat{x} \> &\geq \OPT
    - \frac{160 \sqrt{8} k \tau \sigma }{\epsilon} \log\left(\frac{6 w^2
        k}{\beta} \right) \log^{1/2} \left( \frac{2w}{\delta} \right) \\
    &- \sqrt{2V \cdot \OPT \log \left( \frac{3(k + 1)}{\beta} \right)} \\
    &- \frac{V}{L} \left(
      \frac{64\sqrt{2} (\log n + \log(3k/\beta)
        \sqrt{\log(2/\delta)})}{\epsilon}
      + 3 \sqrt{ k C_\infty \|b\|_1 \log \left( \frac{k}{\beta} \right) }
    \right) \\
    &= \OPT - O \left(
      \sqrt{V}\log\left(\frac{k}{\beta} \right)
      \left( \sqrt{\OPT } +
        \frac{\sqrt{V}}{L} \left(
          \frac{\log n}{\epsilon} + \sqrt{k C_\infty \| b \|_1}
        \right) \right)
    \right) .
  \end{align*}
  With probability at least $1 - \beta$, sparse vector is accurate, the rounding
  succeeds, and \PDD succeeds.
\end{proof}

To show privacy, we use a result about the privacy of sparse vector.

\begin{theorem}[see e.g., \citet{DR14} for a proof] \label{thm:sv-priv}
  Let $\epsilon > 0$. $\mathsf{Sparse}(\epsilon, T)$ is
  $\epsilon$-differentially private.
\end{theorem}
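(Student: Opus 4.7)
The plan is to give the standard coupling argument for the sparse vector mechanism. Fix neighboring databases $D, D'$ (differing in a single agent's data) and any fixed output sequence $A = (a_1, \dots, a_l)$ with $a_i = \bot$ for $i < l$ and $a_l = \top$ (or $A$ entirely $\bot$). We assume each query has sensitivity $|q_i(D) - q_i(D')| \le 1$, which is the standard setting for $\mathsf{Sparse}$. Writing $h \sim \Lap(2/\eps)$ for the threshold perturbation and $g_i \sim \Lap(4/\eps)$ for the $i$-th query perturbation, the output $A$ occurs on $D$ exactly when $q_i(D) + g_i < T + h$ for every $i < l$ and $q_l(D) + g_l \ge T + h$, with the analogous statement for $D'$. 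It suffices to show $\Pr[\mathsf{Sparse}(D) = A] \le e^\eps \Pr[\mathsf{Sparse}(D') = A]$.

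The key step is to construct a coupling between the noise realizations on $D$ and $D'$ that produces the same output $A$, and then to bound its density ratio. Define $h' := h + 1$, $g_i' := g_i$ for $i < l$, and $g_l' := g_l + 2$. Then for every $i < l$, using $q_i(D') \le q_i(D) + 1$, one checks $q_i(D') + g_i' \le q_i(D) + g_i + 1 < T + h + 1 = T + h'$, so each bot comparison still fails on $D'$; and for $i = l$, using $q_l(D') \ge q_l(D) - 1$, one checks $q_l(D') + g_l' \ge q_l(D) + g_l + 1 \ge T + h + 1 = T + h'$, so the top comparison still passes. Hence the primed noise on $D'$ produces exactly the output $A$.

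Finally, I would bound the Radon--Nikodym derivative of the primed noise with respect to the original. Shifting a $\Lap(2/\eps)$ random variable by $1$ changes its density by at most $e^{\eps/2}$; shifting a $\Lap(4/\eps)$ variable by $2$ likewise changes its density by at most $e^{\eps/2}$; the remaining $g_i$ are unchanged. Multiplying gives a total ratio bounded by $e^\eps$, and integrating over all noise realizations yields the required $\eps$-differential privacy. The main subtlety, and the reason the argument goes through, is that no matter how many $\bot$'s precede the $\top$, only the threshold noise and the \emph{single} final query's noise need to be shifted to flip the comparisons in the right direction: the bot comparisons are handled uniformly by the shift in $h$. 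This is precisely what makes the privacy guarantee independent of the number of queries, and is the hallmark of the sparse vector analysis.
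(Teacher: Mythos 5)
Your proposal is correct and is essentially the standard argument from Dwork--Roth that the paper cites in lieu of giving its own proof: fix the output sequence, couple the noise by shifting the threshold noise by the sensitivity and the final query's noise by twice the sensitivity, and pay $e^{\eps/2}$ for each shift. The one assumption you rightly flag --- that each query has sensitivity $1$ --- is exactly the calibration implicit in the noise scales $2/\eps$ and $4/\eps$, so nothing is missing.
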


Privacy of \RPDD follows directly.

\begin{theorem} \label{thm:rounding-priv}
  Let $\epsilon > 0, \delta \in (0, 1/2)$. Then, \RPDD satisfies
  $(\epsilon, \delta)$-joint differential privacy.
\end{theorem}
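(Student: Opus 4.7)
My plan is to apply the billboard lemma (\Cref{billboard}) after arguing that all public information released by \RPDD satisfies standard $(\eps,\delta)$-differential privacy. The key observation is that each agent's final output $\hat{x}^{(i)}$ is a deterministic function of (a) agent $i$'s own data, (b) the sequence of dual variables and public variables from the call to \PDD (which $i$ uses to re-derive her best responses $x^{(i)}_1,\dots,x^{(i)}_T$ and hence the rounded $\tilde{x}^{(i)}$), and (c) the $k$ sparse-vector flags $F_1,\dots,F_k$. So once I show that items (b) and (c) together are jointly $(\eps,\delta)$-differentially private in the usual sense, \Cref{billboard} upgrades this to $(\eps,\delta)$-joint differential privacy for the tuple of outputs.

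First I would handle component (b): \Cref{thm:dual-priv} applied with parameters $(\eps/2,\delta/2)$ immediately gives that the sequence of dual variables $\lambda^{(1)},\dots,\lambda^{(T)}$ and public variables $x^{(0)}_1,\dots,x^{(0)}_T$ produced by the internal call $\PDD(\cO,\sigma,\tau,w,\eps/2,\delta/2,\beta/3)$ satisfies $(\eps/2,\delta/2)$-differential privacy. Since $\hat{x}^{(i)}$ depends on the per-iteration best responses $x^{(i)}_t$ only through $\lambda^{(t)}$ and agent $i$'s own data, no additional leakage occurs from the best-response computations.

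Next I would handle (c): each sparse-vector instance $F_j = \mathsf{Sparse}(\eps',T_j)$ is $\eps'$-differentially private by \Cref{thm:sv-priv}. The queries $q^{(i)}_j = \sum_l c^{(l)}_j(\hat{x}^{(l)})$ fed to $F_j$ have sensitivity $O(C_\infty)$ in agent $i$'s data, so the usual sparse-vector guarantee applies even though the queries are adaptive. Composing the $k$ sparse vectors via advanced composition (\Cref{lem:composition}), with $\eps'$ chosen as in the initialization to be small enough in $k$ (and $\delta'=\delta/2$), the joint release of all $k$ flag streams is $(\eps/2,\delta/2)$-differentially private.

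Finally, a simple composition of the two $(\eps/2,\delta/2)$-private releases shows that the combined ``billboard'' (dual variables, public variables, and all $F_j$) is $(\eps,\delta)$-differentially private. Since $\hat{x}^{(i)}$ is a post-processing of this billboard together with $D_i$, \Cref{billboard} yields $(\eps,\delta)$-joint differential privacy. The only subtlety worth double-checking is that the $\eps'$ set in the \textbf{Initialize} step gives the right advanced-composition budget for $k$ sparse vectors (as opposed to $T$), but this is a routine calculation using \Cref{lem:composition}. No new ideas beyond those in the proofs of \Cref{thm:dual-priv,thm:primal-jdp,true-priv} are needed.
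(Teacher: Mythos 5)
Your proposal is correct and follows essentially the same route as the paper's proof: privacy of the flags via \Cref{thm:sv-priv} plus composition (\Cref{lem:composition}), privacy of the dual/public variables via \Cref{thm:dual-priv}, and then the billboard lemma (\Cref{billboard}) with a final composition. Your added remark about checking that the $\eps'$ budget (set using $T$ rather than $k$) suffices for composing the $k$ sparse vectors is a reasonable, conservative sanity check, but it does not change the argument.
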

\begin{proof}
  By \Cref{thm:sv-priv} and \Cref{lem:composition}, the flags are $(\epsilon/2,
  \delta/2)$-differentially private. \Cref{thm:dual-priv} shows that the result
  of \PDD is $(\epsilon/2, \delta/2)$-jointly differentially private. By the
  billboard lemma (\Cref{billboard}) and composition (\Cref{lem:composition}),
  \RPDD is $(\epsilon, \delta)$-jointly differentially private.
\end{proof}

\paragraph*{Comparison with \citet{HHRRW14}}

Like for \TRPDD, we can compare the welfare guarantee of \RPDD to the welfare
guarantee of \citet{HHRRW14} on the $d$-demand allocation problem.  As discussed
there in \Cref{cor:ddemand}, $\tau = 1$ is a dual bound, $\sigma = d\sqrt{2}$
bounds the gradient sensitivity, and $w = n$ bounds the width. The maximum
welfare for any agent is $V = 1$ and each agent contributes at most $C_1 = 1$
total towards all coupling constraints. Each agent's feasible set is simply the
simplex $\{ x : \RR^m \mid \sum_{i = 1}^m x_i \leq 1 \}$, so the minimum
non-zero coordinate at any vertex is $L = 1$.  Applying \Cref{thm:round-opt}, we
can lower bound the welfare of \RPDD on a $d$-demand problem.

\begin{corollary} \label{cor:round-d-demand}
  With high probability, \RPDD run on a $d$-demand problem with supply $s$ for
  each good finds a solution with welfare at least
  \[
    \OPT - \tilde{O} \left( \frac{k \sqrt{ s }}{\epsilon} \right)
  \]
  and exactly meets the supply constraints,
  as long as $s = \Omega (d^2/\epsilon^2)$, ignoring logarithmic factors.
\end{corollary}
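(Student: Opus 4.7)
The plan is to verify this corollary by directly instantiating Theorem~\ref{thm:round-opt} with the parameters for the $d$-demand allocation problem, which were already worked out in Corollary~\ref{cor:ddemand}. Specifically, from that earlier analysis we can take the dual bound $\tau = 1$, the gradient sensitivity $\sigma = d\sqrt{2}$, and the width $w = n$. The per-agent welfare is bounded by $V = 1$ since each valuation is bounded by $1$, and each agent contributes at most $C_\infty = 1$ to any single coupling constraint since $x^{(i)}(S)\in[0,1]$. The right-hand side vector $b$ has each coordinate equal to $s$, so $\|b\|_\infty = s$ and $\|b\|_1 = ks$.

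The first step is to verify the polytope assumption underlying \RPDD{}, in order to identify the parameter $L$. Each agent's feasible set is the simplex $\{x^{(i)} \ge 0 : \sum_{S \in \mathcal{B}} x^{(i)}(S) \le 1\}$, whose vertices are the zero vector and the indicator vectors of single bundles. At any vertex, each coupling-constraint coefficient $c^{(i)}_j(x^{(i)}) = \sum_{S \ni j} x^{(i)}(S)$ is either $0$ or $1$, so we may take $L = 1$. A null action ($x^{(i)} = 0$) also exists, confirming this is a packing LP in the sense of Definition~\ref{def:packing} with the extra polytope assumption.

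Next I would check that the hypothesis of Theorem~\ref{thm:round-opt}, namely $\|b\|_1 = \Omega\bigl(\tfrac{k\sigma}{\epsilon}\log^2(\cdot)\log(\cdot)\max\{\sigma/\epsilon,1\}\bigr)$, reduces to $s = \Omega(d^2/\epsilon^2)$ up to logarithmic factors. Plugging in $\|b\|_1 = ks$ and $\sigma = d\sqrt{2}$, the right-hand side is $\tilde{O}(kd^2/\epsilon^2)$, so dividing by $k$ gives exactly the supply condition $s = \tilde{\Omega}(d^2/\epsilon^2)$.

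Finally I substitute into the welfare bound of Theorem~\ref{thm:round-opt}:
\[
  \alpha = O\!\left(\sqrt{V}\log(k/\beta)\left(\sqrt{\OPT} + \frac{\sqrt{V}}{L}\left(\frac{\log n}{\epsilon} + \sqrt{kC_\infty\|b\|_1}\right)\right)\right).
\]
With $V = L = C_\infty = 1$ and $\|b\|_1 = ks$, this simplifies to $\tilde{O}\bigl(\sqrt{\OPT} + \log n/\epsilon + k\sqrt{s}\bigr)$. Since any feasible solution obeys the supply constraints, $\OPT \le ks$, so $\sqrt{\OPT} \le \sqrt{ks} \le k\sqrt{s}$, and the $\log n/\epsilon$ term is absorbed into $k\sqrt{s}/\epsilon$. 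The overall bound is therefore $\tilde{O}(k\sqrt{s}/\epsilon)$, as claimed. Exact feasibility is immediate from the same theorem. There is no real obstacle here; the step requiring the most care is simply tracking which logarithmic factors can be hidden by the $\tilde{O}$ notation and confirming that the hypotheses on $\|b\|_1$ are met for the stated range of $s$.
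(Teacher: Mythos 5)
Your proposal is correct and follows essentially the same route as the paper: instantiate \Cref{thm:round-opt} with $\tau = 1$, $\sigma = d\sqrt{2}$, $w = n$, $V = L = C_\infty = 1$, and $\|b\|_1 = ks$, check that the hypothesis on $\|b\|_1$ reduces to $s = \tilde{\Omega}(d^2/\epsilon^2)$, and absorb $\sqrt{\OPT} \le \sqrt{ks}$ and the $\log n/\epsilon$ term into $\tilde{O}(k\sqrt{s}/\epsilon)$. Your explicit verification of the polytope assumption (that every vertex of the bundle simplex has constraint coefficients in $\{0\} \cup [1,\infty)$, giving $L=1$) is slightly more careful than the paper's one-line remark, but the argument is the same.
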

\begin{proof}
  Note that the sum of the scalars $\|b\|_1$ in the $d$-demand problem is simply
  the total number of items $sk$. The welfare
  theorem for \RPDD (\Cref{thm:round-opt}) holds as long as
  \[
    sk = \| b \|_1 \geq \tilde{\Omega} \left(
          \frac{k \sigma}{\epsilon} \max \left\{ \frac{ \sigma }{\epsilon}, 1
          \right\}
    \right) .
  \]
  As discussed in \Cref{sec:examples}, the gradient sensitivity for the
  $d$-demand problem is bounded by $\sigma = d\sqrt{2}$. So, we need $s \geq
  d^2/\epsilon^2$. The welfare guarantee follows from \Cref{thm:round-opt},
  plugging in parameters $V = L = C_1 = 1$ and noting that the maximum welfare
  $\OPT$ is at most the total number of goods $sk$, so $\OPT \leq \| b \|_1$.
\end{proof}

While the welfare guarantees are somewhat incomparable (see discussion for
\TRPDD), we can try to make a rough comparison. Take $\alpha \approx (d
\epsilon)^{1/3}$, and say the supply $s \approx d^3/\alpha^3 \epsilon \approx
d^2/\epsilon^2$ (the minimum needed for both \Cref{thm:gs-welfare} and
\Cref{cor:round-d-demand} to apply). Then by \Cref{cor:round-d-demand}, \RPDD
gets welfare at least
\[
  \OPT - \tilde{O} \left( \frac{k d}{\epsilon^2} \right)
\]
versus $\OPT - \alpha n \approx \OPT - n(d \epsilon)^{1/3}$. Thus, \RPDD
improves for larger $n$:
\[
  n \gg \frac{k d^{2/3}}{\epsilon^{7/3}} .
\]
Unlike \TRPDD, the welfare loss of \RPDD depend on $\OPT$, but does degrade for
larger $s$.
\fi

\paragraph*{Acknowledgments} This paper has benefitted from conversations with
many people. We would like to particularly thank Tim Roughgarden, who made
significant contributions to early discussions about this work during his
sabbatical at Penn. We would also like to thank Moritz Hardt for suggesting
the smart-grid example application, and Jon Ullman for enlightening
conversations.

\ifsubmission
  \shortfalse
  \firstfalse

  \vfill
  \pagebreak

  \appendix
  \setcounter{page}{1}
  
\else

\fi

\bibliographystyle{plainnat}
\bibliography{header,refs}

\newcommand{\SortNoop}[1]{}
\begin{thebibliography}{33}
\providecommand{\natexlab}[1]{#1}
\providecommand{\url}[1]{\texttt{#1}}
\expandafter\ifx\csname urlstyle\endcsname\relax
  \providecommand{\doi}[1]{doi: #1}\else
  \providecommand{\doi}{doi: \begingroup \urlstyle{rm}\Url}\fi

\bibitem[Bassily et~al.(2014)Bassily, Smith, and Thakurta]{BST14}
Raef Bassily, Adam Smith, and Abhradeep~Guha Thakurta.
\newblock \href{http://arxiv.org/abs/1405.7085}{Differentially private
  empirical risk minimization: Efficient algorithms and tight error bounds}.
\newblock In \emph{{IEEE} {S}ymposium on {F}oundations of {C}omputer {S}cience
  (FOCS), Philadelphia, Pennsylvania}, 2014.

\bibitem[Blum et~al.(2005)Blum, Dwork, Mc{S}herry, and Nissim]{BDMN05}
Avrim Blum, Cynthia Dwork, Frank Mc{S}herry, and Kobbi Nissim.
\newblock \href{http://research.microsoft.com/pubs/64351/bdmn.pdf}{Practical
  privacy: the {SuLQ} framework}.
\newblock In \emph{{ACM} {SIGACT--SIGMOD--SIGART} {S}ymposium on {P}rinciples
  of {D}atabase {S}ystems (PODS), Baltimore, Maryland}, pages 128--138, 2005.

\bibitem[Boyd and Vandenberghe(2004)]{cvxbook}
Stephen Boyd and Lieven Vandenberghe.
\newblock \emph{Convex Optimization}.
\newblock Cambridge University Press, New York, NY, USA, 2004.
\newblock ISBN 0521833787.

\bibitem[Boyd et~al.(2011)Boyd, Parikh, Chu, Peleato, and Eckstein]{dualdecomp}
Stephen Boyd, Neal Parikh, Eric Chu, Borja Peleato, and Jonathan Eckstein.
\newblock
  \href{https://web.stanford.edu/~boyd/papers/pdf/admm_distr_stats.pdf}{Distributed
  optimization and statistical learning via the alternating direction method of
  multipliers}.
\newblock \emph{Foundations and Trends{\textregistered} in Machine Learning},
  3\penalty0 (1):\penalty0 1--122, 2011.

\bibitem[Chaudhuri et~al.(2011)Chaudhuri, Monteleoni, and Sarwate]{CMS11}
Kamalika Chaudhuri, Claire Monteleoni, and Anand~D Sarwate.
\newblock
  \href{http://jmlr.org/papers/volume12/chaudhuri11a/chaudhuri11a.pdf}{Differentially
  private empirical risk minimization}.
\newblock \emph{Journal of Machine Learning Research}, 12:\penalty0 1069--1109,
  2011.

\bibitem[Chen et~al.(2013)Chen, Chong, Kash, Moran, and Vadhan]{CCKMV13}
Yiling Chen, Stephen Chong, Ian~A Kash, Tal Moran, and Salil Vadhan.
\newblock \href{http://arxiv.org/abs/1111.5472}{Truthful mechanisms for agents
  that value privacy}.
\newblock In \emph{{ACM} {SIGecom} {C}onference on {E}conomics and
  {C}omputation (EC), Philadelphia, Pennsylvania}, pages 215--232, 2013.

\bibitem[Cummings et~al.(2014)Cummings, Kearns, Roth, and Wu]{CKRW14}
Rachel Cummings, Michael Kearns, Aaron Roth, and Zhiwei~Steven Wu.
\newblock \href{http://arxiv.org/abs/1407.7740}{Privacy and truthful
  equilibrium selection for aggregative games}.
\newblock \emph{CoRR}, abs/1407.7740, 2014.

\bibitem[Dinur and Nissim(2003)]{DN03}
Irit Dinur and Kobbi Nissim.
\newblock
  \href{http://www.cse.psu.edu/~asmith/privacy598/papers/dn03.pdf}{Revealing
  information while preserving privacy}.
\newblock In \emph{{ACM} {SIGACT--SIGMOD--SIGART} {S}ymposium on {P}rinciples
  of {D}atabase {S}ystems (PODS), San Diego, California}, pages 202--210, 2003.

\bibitem[Dubhashi and Panconesi(2009)]{concentration-book}
Devdatt~P Dubhashi and Alessandro Panconesi.
\newblock \emph{Concentration of measure for the analysis of randomized
  algorithms}.
\newblock Cambridge University Press, 2009.

\bibitem[Dwork and Roth(2014)]{DR14}
Cynthia Dwork and Aaron Roth.
\newblock \href{http://dx.doi.org/10.1561/0400000042}{The algorithmic
  foundations of differential privacy}.
\newblock \emph{Foundations and Trends in Theoretical Computer Science},
  9\penalty0 (3-4):\penalty0 211--407, 2014.

\bibitem[Dwork et~al.(2006)Dwork, Mc{S}herry, Nissim, and Smith]{DMNS06}
Cynthia Dwork, Frank Mc{S}herry, Kobbi Nissim, and Adam Smith.
\newblock
  \href{http://www.iacr.org/cryptodb/archive/2006/TCC/3650/3650.pdf}{Calibrating
  noise to sensitivity in private data analysis}.
\newblock In \emph{{IACR} {T}heory of {C}ryptography {C}onference (TCC), New
  York, New York}, 2006.

\bibitem[Dwork et~al.(2009)Dwork, Naor, Reingold, Rothblum, and
  Vadhan]{DNRRV09}
Cynthia Dwork, Moni Naor, Omer Reingold, Guy~N. Rothblum, and Salil~P. Vadhan.
\newblock \href{http://dl.acm.org/citation.cfm?id=1536467}{On the complexity of
  differentially private data release: efficient algorithms and hardness
  results}.
\newblock In \emph{{ACM} {SIGACT} {S}ymposium on {T}heory of {C}omputing
  (STOC), Bethesda, Maryland}, pages 381--390, 2009.

\bibitem[Dwork et~al.(2010)Dwork, Rothblum, and Vadhan]{dwork-composition}
Cynthia Dwork, Guy~N Rothblum, and Salil Vadhan.
\newblock
  \href{http://research.microsoft.com/pubs/155170/dworkrv10.pdf}{Boosting and
  differential privacy}.
\newblock In \emph{{IEEE} {S}ymposium on {F}oundations of {C}omputer {S}cience
  (FOCS), Las Vegas, Nevada}, pages 51--–60, 2010.

\bibitem[Dwork et~al.(2012)Dwork, Naor, and Vadhan]{DNV12}
Cynthia Dwork, Moni Naor, and Salil Vadhan.
\newblock
  \href{http://projects.iq.harvard.edu/files/privacytools/files/06375318.pdf}{The
  privacy of the analyst and the power of the state}.
\newblock In \emph{{IEEE} {S}ymposium on {F}oundations of {C}omputer {S}cience
  (FOCS), New Brunswick, New Jersey}, pages 400--409, 2012.

\bibitem[Freund and Schapire(1996)]{FS96}
Y.~Freund and R.E. Schapire.
\newblock \href{http://dl.acm.org/citation.cfm?id=238163}{Game theory, on-line
  prediction and boosting}.
\newblock In \emph{{C}onference on {C}omputational {L}earning {T}heory
  ({CoLT}), Desenzano sul Garda, Italy}, pages 325--332, 1996.

\bibitem[Gul and Stacchetti(1999)]{GS99}
Faruk Gul and Ennio Stacchetti.
\newblock \href{http://www.princeton.edu/~fgul/walras.pdf}{Walrasian
  equilibrium with gross substitutes}.
\newblock \emph{Journal of Economic Theory}, 87\penalty0 (1):\penalty0 95--124,
  1999.

\bibitem[Gupta et~al.(2010)Gupta, Ligett, Mc{S}herry, Roth, and
  Talwar]{GLMRT10}
Anupam Gupta, Katrina Ligett, Frank Mc{S}herry, Aaron Roth, and Kunal Talwar.
\newblock \href{http://arxiv.org/pdf/0903.4510v2}{Differentially private
  combinatorial optimization}.
\newblock In \emph{{ACM--SIAM} {S}ymposium on {D}iscrete {A}lgorithms (SODA),
  Austin, Texas}, pages 1106--1125, 2010.

\bibitem[Hsu et~al.(2013)Hsu, Roth, and Ullman]{HRU13}
Justin Hsu, Aaron Roth, and Jonathan Ullman.
\newblock \href{http://arxiv.org/pdf/1211.0877v2}{Differential privacy for the
  analyst via private equilibrium computation}.
\newblock In \emph{{ACM} {SIGACT} {S}ymposium on {T}heory of {C}omputing
  (STOC), Palo Alto, California}, pages 341--350, 2013.

\bibitem[Hsu et~al.(2014{\natexlab{a}})Hsu, Huang, Roth, Roughgarden, and
  Wu]{HHRRW14}
Justin Hsu, Zhiyi Huang, Aaron Roth, Tim Roughgarden, and Zhiwei~Steven Wu.
\newblock \href{http://arxiv.org/abs/1311.2828}{Private matchings and
  allocations}.
\newblock In \emph{{ACM} {SIGACT} {S}ymposium on {T}heory of {C}omputing
  (STOC), New York, New York}, pages 21--30, 2014{\natexlab{a}}.

\bibitem[Hsu et~al.(2014{\natexlab{b}})Hsu, Roth, Roughgarden, and
  Ullman]{HRRU14}
Justin Hsu, Aaron Roth, Tim Roughgarden, and Jonathan Ullman.
\newblock \href{http://arxiv.org/abs/1402.3631}{Privately solving linear
  programs}.
\newblock In \emph{International Colloquium on Automata, Languages and
  Programming (ICALP), Copenhagen, Denmark}, pages 612--624,
  2014{\natexlab{b}}.

\bibitem[Huang and Kannan(2012)]{HK12}
Zhiyi Huang and Sampath Kannan.
\newblock \href{http://arxiv.org/abs/1204.1255}{The exponential mechanism for
  social welfare: Private, truthful, and nearly optimal}.
\newblock In \emph{{IEEE} {S}ymposium on {F}oundations of {C}omputer {S}cience
  (FOCS), New Brunswick, New Jersey}, pages 140--149, 2012.

\bibitem[Jain and Thakurta(2014)]{JT14}
Prateek Jain and Abhradeep~Guha Thakurta.
\newblock \href{http://jmlr.org/proceedings/papers/v32/jain14.pdf}{({N}ear)
  dimension independent risk bounds for differentially private learning}.
\newblock In \emph{{I}nternational {C}onference on {M}achine {L}earning (ICML),
  Beijing, China}, pages 476--484, 2014.

\bibitem[Jain et~al.(2011)Jain, Kothari, and Thakurta]{JKT11}
Prateek Jain, Pravesh Kothari, and Abhradeep~Guha Thakurta.
\newblock \href{http://arxiv.org/abs/1109.0105}{Differentially private online
  learning}.
\newblock \emph{arXiv preprint arXiv:1109.0105}, 2011.

\bibitem[Kearns et~al.(2014)Kearns, Pai, Roth, and Ullman]{KPRU14}
Michael Kearns, Mallesh Pai, Aaron Roth, and Jonathan Ullman.
\newblock \href{http://arxiv.org/pdf/1207.4084v3}{Mechanism design in large
  games: Incentives and privacy}.
\newblock In \emph{{ACM} {SIGACT} {I}nnovations in {T}heoretical {C}omputer
  {S}cience (ITCS), Princeton, New Jersey}, pages 403--410, 2014.

\bibitem[Kifer et~al.(2012)Kifer, Smith, and Thakurta]{KST12}
Daniel Kifer, Adam Smith, and Abhradeep~Guha Thakurta.
\newblock
  \href{http://jmlr.org/proceedings/papers/v23/kifer12/kifer12.pdf}{Private
  convex empirical risk minimization and high-dimensional regression}.
\newblock \emph{Journal of Machine Learning Research}, 1:\penalty0 41, 2012.

\bibitem[Kneser(1952)]{Kneser52}
Hellmuth Kneser.
\newblock Sur un th{\'e}oreme fondamental de la th{\'e}orie des jeux.
\newblock \emph{Comptes Rendus de l'Acad{\`e}mie des Science}, 234:\penalty0
  2418--2420, 1952.

\bibitem[McSherry and Talwar(2007)]{MT07}
Frank McSherry and Kunal Talwar.
\newblock
  \href{http://doi.ieeecomputersociety.org/10.1109/FOCS.2007.41}{Mechanism
  design via differential privacy}.
\newblock In \emph{{IEEE} {S}ymposium on {F}oundations of {C}omputer {S}cience
  (FOCS), Providence, Rhode Island}, pages 94--103, 2007.

\bibitem[Nissim et~al.(2007)Nissim, Raskhodnikova, and Smith]{NRS07}
Kobbi Nissim, Sofya Raskhodnikova, and Adam Smith.
\newblock
  \href{http://www.cse.psu.edu/~asmith/pubs/NRS07/NRS07-full-draft-v1.pdf}{Smooth
  sensitivity and sampling in private data analysis}.
\newblock In \emph{{ACM} {SIGACT} {S}ymposium on {T}heory of {C}omputing
  (STOC), San Diego, California}, pages 75--84, 2007.

\bibitem[Nissim et~al.(2012)Nissim, Smorodinsky, and Tennenholtz]{NST12}
Kobbi Nissim, Rann Smorodinsky, and Moshe Tennenholtz.
\newblock \href{http://arxiv.org/abs/1004.2888}{Approximately optimal mechanism
  design via differential privacy}.
\newblock In \emph{{ACM} {SIGACT} {I}nnovations in {T}heoretical {C}omputer
  {S}cience (ITCS), Cambridge, Massachusetts}, pages 203--213, 2012.

\bibitem[Rogers and Roth(2014)]{RR14}
Ryan~M Rogers and Aaron Roth.
\newblock \href{http://arxiv.org/pdf/1311.2625v1}{Asymptotically truthful
  equilibrium selection in large congestion games}.
\newblock In \emph{{ACM} {SIGecom} {C}onference on {E}conomics and
  {C}omputation (EC), Palo Alto, California}, pages 771--782, 2014.

\bibitem[Slater(1950)]{Slater50}
Morton Slater.
\newblock \href{http://cowles.econ.yale.edu/P/cd/d00b/d0080.pdf}{Lagrange
  multipliers revisited}.
\newblock Technical report, Cowles Commission Discussion Paper, Mathematics,
  1950.

\bibitem[Xiao(2013)]{Xia13}
David Xiao.
\newblock \href{https://eprint.iacr.org/2011/005}{Is privacy compatible with
  truthfulness?}
\newblock In \emph{{ACM} {SIGACT} {I}nnovations in {T}heoretical {C}omputer
  {S}cience (ITCS), Berkeley, California}, pages 67--86, 2013.

\bibitem[Zinkevich(2003)]{zinkevich}
Martin Zinkevich.
\newblock \href{http://www.aaai.org/Library/ICML/2003/icml03-120.php}{Online
  convex programming and generalized infinitesimal gradient ascent}.
\newblock In \emph{{I}nternational {C}onference on {M}achine {L}earning (ICML),
  Washington, DC}, pages 928--936, 2003.

\end{thebibliography}

\ifsubmission
\else
  \appendix
\fi

\section{Private Online Linear Optimization}
\label{sec:polo}

In this section we consider a private version of the
online linear optimization problem. The techniques we use to solve this problem are relatively standard (and are similar to solutions in e.g. \cite{KPRU14} and \cite{BST14}), but we work in a somewhat different setting, so we provide proofs here for completeness. 

 The learner has action set
$\cP\subset \RR^k$, and the adversary has action space $\cX\subset
[-X, X]^k$. Given any action $p$ of the learner, and action $x$ of the
adversary, the (linear) loss function for the learner is $\ell(p, x) =
\langle p, x\rangle$. For any sequence of $T$ actions from the
adversary $\{x_1, \ldots, x_T\}$, the learner's goal is to minimize
the regret defined as
\[
\cR_T = \frac{1}{T} \sum_{t=1}^T \langle p_t, x_t \rangle - \min_{p\in
  \cP} \frac{1}{T} \sum_{t=1}^T \langle p, x_t\rangle.
\]

For privacy reason, the learner only get to observe noisy and private
versions of the adversary's actions. In particular, we can think of
the loss vectors over $T$ rounds as a $(T\times k)$-dimensional
statistics $x = (x_1, \ldots, x_T)$ some underlying sensitive
population $D$. Suppose we add noise sampled from the Gaussian
distribution $\cN(0,\sigma^2)$ on every entry of $x$. We will
determine the scale of $\sigma$ in the privacy analysis, and we
will use the following concentration bound of Gaussian distribution.

\begin{fact}[Gaussian Tails]\label{gaussian}
Let $Y$ be a random variable sampled from the distribution $\cN(0,
\sigma^2)$ and $a = \ln{2} / (2\pi)$, then for all $\lambda > 0$
\[
\Pr[|Y| > \lambda] \leq 2\exp\left( -a \lambda^2/\sigma^2 \right).
\]
\end{fact}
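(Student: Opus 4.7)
The plan is to prove this via the standard Chernoff-style moment generating function argument, then weaken the constant to match the stated form.

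First, I would recall (or verify by completing the square in the Gaussian integral) that for $Y \sim \cN(0, \sigma^2)$, the moment generating function is $\EE[e^{tY}] = e^{t^2 \sigma^2 / 2}$. Applying Markov's inequality to the nonnegative random variable $e^{tY}$ gives, for any $t > 0$,
\[
\Pr[Y > \lambda] \;=\; \Pr\!\big[e^{tY} > e^{t\lambda}\big] \;\leq\; e^{-t\lambda}\,\EE[e^{tY}] \;=\; \exp\!\big(-t\lambda + t^2 \sigma^2/2\big).
\]
Minimizing the right-hand side over $t$ by choosing $t = \lambda / \sigma^2$ yields the standard one-sided Gaussian tail bound $\Pr[Y > \lambda] \leq \exp(-\lambda^2 / (2\sigma^2))$.

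Next, I would use the symmetry of the mean-zero Gaussian, namely $\Pr[Y < -\lambda] = \Pr[Y > \lambda]$, together with a union bound, to obtain the two-sided bound $\Pr[|Y| > \lambda] \leq 2 \exp(-\lambda^2 / (2\sigma^2))$.

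Finally, I would compare constants: since $\ln 2 < \pi$, the constant $a = \ln 2 / (2\pi)$ satisfies $a < 1/2$, so $\exp(-\lambda^2/(2\sigma^2)) \leq \exp(-a \lambda^2/\sigma^2)$ for every $\lambda$ and $\sigma$, giving the claimed inequality. There is essentially no obstacle here; this is a textbook concentration result and all three steps are routine. The only slightly curious aspect is the specific constant $a = \ln 2/(2\pi)$, which is weaker than the sharp exponent $1/2$; presumably this particular form is chosen to line up cleanly with a subsequent inversion of the bound when it is applied in the noisy online gradient descent analysis.
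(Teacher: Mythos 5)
Your proof is correct. The paper states this as a \emph{Fact} without proof, treating it as a standard Gaussian concentration bound; your Chernoff/moment-generating-function argument, followed by symmetry and the observation that $a = \ln 2/(2\pi) < 1/2$ makes the stated exponent weaker than the sharp $\lambda^2/(2\sigma^2)$, is exactly the routine verification one would supply, and every step checks out.
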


\begin{fact}[Gaussian Sums]\label{gsum}
  Let $Y_1, \ldots, Y_n$ be independent variables with distribution
  $\cN(0, \sigma^2)$. Let $Y = \sum_i Y_i$. Then, the random variable
  $Y\sim \cN(0, n\sigma^2)$, and so
\[
\Pr[|Y| > \lambda] \leq 2\exp\left( -a \lambda^2/(n\sigma^2) \right).
\]
\end{fact}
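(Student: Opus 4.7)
The plan is to decompose the fact into two independent pieces: (i) the distributional claim that $Y \sim \cN(0, n\sigma^2)$, and (ii) the tail bound, which then reduces to a direct application of the preceding Fact (Gaussian Tails) with variance parameter $n\sigma^2$ in place of $\sigma^2$.

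For piece (i), the cleanest route is via moment generating functions. Each $Y_i \sim \cN(0,\sigma^2)$ has MGF $\mathbb{E}[e^{tY_i}] = \exp(\sigma^2 t^2/2)$. By independence, the MGF of $Y = \sum_{i=1}^n Y_i$ factors as
\[
\mathbb{E}[e^{tY}] = \prod_{i=1}^n \mathbb{E}[e^{tY_i}] = \exp\!\left( n\sigma^2 t^2/2 \right),
\]
which is exactly the MGF of a $\cN(0, n\sigma^2)$ random variable. By the uniqueness of MGFs on a neighborhood of zero, $Y \sim \cN(0, n\sigma^2)$. (Alternatively one could convolve densities, or argue via characteristic functions; the MGF route is shortest here and avoids any analytic subtleties since Gaussians are determined by their moments.)

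For piece (ii), I would simply invoke the previously stated Fact (Gaussian Tails) on the random variable $Y$ with its variance $n\sigma^2$. Substituting into
\[
\Pr[|Y| > \lambda] \leq 2 \exp\!\left( -a\lambda^2 / \text{Var}(Y) \right)
\]
immediately yields the stated bound $2 \exp(-a\lambda^2/(n\sigma^2))$.

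There is no real obstacle here; this is a textbook consequence of closure of the Gaussian family under convolution plus the prior tail fact. The only care needed is to make sure the MGF computation is justified (all MGFs in question are finite for every $t \in \RR$, so the factorization under independence is valid), after which both pieces combine in one line.
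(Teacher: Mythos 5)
Your proof is correct. The paper states this as a background \emph{Fact} and provides no proof of its own, so there is nothing to compare against; your two-step argument (stability of the Gaussian family under independent sums, established via moment generating functions, followed by an application of the preceding Gaussian Tails fact with variance $n\sigma^2$) is exactly the standard justification and is complete.
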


Let $\{\hat x_t\}$ be the noisy loss vectors. The learner will update
the action $p_t$ using projected gradient descent
\[
p_{t+1} = \Pi_\cP \left[p_t - \eta\hat x_t\right],
\]
where $\Pi_\cP$ is the Euclidean projection map onto the set $\cP$:
$\min_{p'} \|p - p'\|_2$, $\eta$ is the step size.

Before we show the regret bound for our noisy gradient descent, we
here include the no-regret result for standard online gradient descent
(with no noise).

\begin{lemma}[\citep{zinkevich}]
\label{lem:zinkevich}
  For any actions and losses space $\cP$ and $\cX$, the gradient
  descent algorithm: $p_{t+1} = \Pi_\cP \left[p_t - \eta x_t\right]$
  has regret
\[
\cR_T \leq \frac{\|\cP\|^2}{2\eta T} + \frac{\eta \|\cX\|^2}{2}.
\]
\end{lemma}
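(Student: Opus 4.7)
The plan is to prove this via the standard potential-function argument from online convex optimization, with potential $\Phi_t := \|p_t - p^\star\|_2^2$ tracking the squared Euclidean distance from the current iterate to the best fixed action $p^\star \in \arg\min_{p \in \cP} \sum_{t=1}^T \langle p, x_t \rangle$. The proof will boil down to a per-round ``regret-to-potential-drop'' inequality which I then sum and telescope.

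The key per-round inequality I will establish is
\[
\langle x_t, p_t - p^\star \rangle \leq \frac{\|p_t - p^\star\|_2^2 - \|p_{t+1} - p^\star\|_2^2}{2\eta} + \frac{\eta \|x_t\|_2^2}{2}.
\]
I would derive it in two short steps. First, I invoke the non-expansiveness of Euclidean projection onto a convex set: since $p^\star \in \cP$, we have $\|p_{t+1} - p^\star\|_2 = \|\Pi_\cP[p_t - \eta x_t] - p^\star\|_2 \leq \|p_t - \eta x_t - p^\star\|_2$. Second, I expand the squared norm on the right as $\|p_t - p^\star\|_2^2 - 2\eta \langle x_t, p_t - p^\star \rangle + \eta^2 \|x_t\|_2^2$ and rearrange.

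Next I sum the per-round inequality over $t = 1, \dots, T$. The distance terms telescope, leaving $(\|p_1 - p^\star\|_2^2 - \|p_{T+1} - p^\star\|_2^2)/(2\eta) \leq \|\cP\|^2/(2\eta)$, interpreting $\|\cP\|$ as the diameter of the feasible set. The second summand is bounded by $\eta T \|\cX\|^2 / 2$, using $\|x_t\|_2 \leq \|\cX\|$ for each $t$. Dividing through by $T$ and recognizing the left-hand side as $T \cdot \cR_T$ yields the claimed bound.

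There is essentially no obstacle in this argument; the only points that require care are (a) citing the non-expansiveness of Euclidean projection onto a closed convex set to discard the projection cleanly, and (b) fixing the interpretation of $\|\cP\|$ as the diameter of $\cP$ and $\|\cX\|$ as $\max_{x \in \cX} \|x\|_2$, which is the reading that makes the bound dimensionally consistent and minimized by the canonical choice $\eta = \|\cP\| / (\|\cX\| \sqrt{T})$.
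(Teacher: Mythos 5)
Your proof is correct and is exactly the standard argument: the paper itself supplies no proof of this lemma, merely citing \citet{zinkevich}, and your telescoping potential argument---non-expansiveness of the Euclidean projection, expanding the square, rearranging into a per-round regret-to-potential-drop inequality, and summing---is precisely the proof in that reference. The only minor point worth noting is that the paper defines $\|\cP\| = \max_{p\in\cP}\|p\|$ rather than the diameter, but since the algorithm is initialized at the origin (here $\lambda^{(1)} = 0 \in \Lambda$), the bound $\|p_1 - p^\star\|_2 \leq \|\cP\|$ holds under either reading and the stated inequality follows unchanged.
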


We are now ready to show the
following regret bound for this noisy gradient descent.

\begin{theorem}
\label{thm:noisy-GD}
Let $\|\cP\| = \max_{p\in\cP} \|p\|$ and $\|\cX\|_2 = \max_{x\in \cX}
\|x\|_2$, then with probability at least $1 - \beta$,
\[
\cR_T = O\left(\frac{\|\cP\|_2\sqrt{k}}{\sqrt{T}}\left( X + \sigma
    \log\left(\frac{Tk}{\beta} \right)\right) \right) .
\]
\end{theorem}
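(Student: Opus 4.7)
\medskip

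\noindent\textbf{Proof proposal for \Cref{thm:noisy-GD}.} The plan is to reduce the noisy setting to Zinkevich's standard guarantee (\Cref{lem:zinkevich}) applied to the noisy loss sequence $\{\hat{x}_t\}$ that the learner actually sees, and then translate the resulting bound back to the true sequence $\{x_t\}$ by controlling the Gaussian error terms uniformly in $t$.

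First, I would bound the norms of the noisy loss vectors. Since $x_t \in [-X,X]^k$, we have $\|x_t\|_2 \le X\sqrt{k}$. For the noise, each coordinate $(z_t)_i \sim \cN(0,\sigma^2)$, so by \Cref{gaussian} and a union bound over $Tk$ coordinates, with probability at least $1-\beta/2$ every coordinate satisfies $|(z_t)_i| = O\bigl(\sigma\sqrt{\log(Tk/\beta)}\bigr)$, hence $\|\hat{x}_t\|_2 \le X\sqrt{k} + O\bigl(\sigma\sqrt{k\log(Tk/\beta)}\bigr)$ uniformly in $t$. Applying \Cref{lem:zinkevich} to the sequence $\{\hat{x}_t\}$ then yields
\[
  \frac{1}{T}\sum_{t=1}^T \<p_t,\hat{x}_t\> \;-\; \min_{p\in\cP}\frac{1}{T}\sum_{t=1}^T \<p,\hat{x}_t\>
  \;\le\; \frac{\|\cP\|^2}{2\eta T} + \frac{\eta}{2}\Bigl(X\sqrt{k}+O(\sigma\sqrt{k\log(Tk/\beta)})\Bigr)^2.
\]

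Next, I translate this into a regret bound on the true losses. Writing $\hat{x}_t = x_t + z_t$ and letting $p^\star$ be the offline optimum against $\{x_t\}$,
\[
  T\cR_T \;=\; \sum_{t=1}^T \<p_t,x_t\> - \sum_{t=1}^T \<p^\star,x_t\>
  \;\le\; \sum_{t=1}^T \<p_t,\hat{x}_t\> - \sum_{t=1}^T \<p^\star,\hat{x}_t\>
  \;+\; \sum_{t=1}^T \<p^\star - p_t, z_t\>.
\]
The second sum is the main obstacle, since $p^\star$ depends on $\{x_t\}$ (but not on the noise) and $p_t$ depends on $z_1,\dots,z_{t-1}$ (but not on $z_t$). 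I would split it into $\<p^\star,\sum_t z_t\>$, which for fixed $p^\star$ is a Gaussian with variance at most $T\sigma^2\|\cP\|^2$ per coordinate (controlled by \Cref{gsum} plus a union bound over coordinates), and $\sum_t \<p_t,z_t\>$, a martingale-difference sum whose terms are conditionally Gaussian with variance at most $\sigma^2\|\cP\|^2$; both are $O\bigl(\|\cP\|\sigma\sqrt{Tk\log(k/\beta)}\bigr)$ in absolute value with probability $1-\beta/2$.

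Combining the two steps and optimizing the step size $\eta$ (choosing $\eta \asymp \|\cP\|/(\sqrt{T}\cdot(X\sqrt{k}+\sigma\sqrt{k\log(Tk/\beta)}))$ to balance the Zinkevich terms) gives per-round regret $O\bigl(\|\cP\|\sqrt{k}(X+\sigma\sqrt{\log(Tk/\beta)})/\sqrt{T}\bigr)$ from the first piece; the noise-coupling term contributes $O\bigl(\|\cP\|\sigma\sqrt{k\log(k/\beta)/T}\bigr)$, which is absorbed into the same expression. A final union bound over the two high-probability events yields the claim with failure probability $\beta$, after loosening the logarithmic factor from $\sqrt{\log(Tk/\beta)}$ to $\log(Tk/\beta)$ as written in the statement. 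I expect the only real subtlety to be the clean handling of the $\sum_t \<p^\star - p_t,z_t\>$ term via the martingale/independence structure; everything else is Zinkevich plus Gaussian concentration.
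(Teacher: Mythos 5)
Your proposal is correct and follows the same high-level route as the paper: bound $\|\hat x_t\|_2$ uniformly via per-coordinate Gaussian tails and a union bound over $Tk$ coordinates, invoke \Cref{lem:zinkevich} on the noisy sequence $\{\hat x_t\}$ with the step size tuned to $\|\hat\cX\|$, and then account separately for a term of the form $\langle p^\star, \sum_t z_t\rangle$ (handled by \Cref{gsum} plus Cauchy--Schwarz, exactly as in the paper) and a term of the form $\sum_t \langle p_t, z_t\rangle$. The one place where you genuinely diverge is the latter term, and your treatment is the more careful one. The paper disposes of it by writing $-\frac{1}{T}\sum_t \langle p_t,\nu_t\rangle \le -\frac{1}{T}\min_{p\in\cP}\sum_t\langle p,\nu_t\rangle$, i.e.\ by replacing the adaptively chosen sequence $p_1,\dots,p_T$ with a single worst-case fixed point; this is not a valid pointwise inequality (an adaptive sequence can extract more from the signs of the $\nu_t$ than any fixed $p$ can), and the naive per-round fix $|\langle p_t,\nu_t\rangle|\le\|\cP\|\,\|\nu_t\|_2$ would lose the $1/\sqrt{T}$ decay entirely. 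Your observation that $p_t$ depends only on $z_1,\dots,z_{t-1}$, so that $\sum_t\langle p_t,z_t\rangle$ is a martingale-difference sum with conditionally Gaussian increments of variance at most $\sigma^2\|\cP\|^2$, is exactly the right way to recover the $O\bigl(\|\cP\|\sigma\sqrt{T\log(1/\beta)}\bigr)$ bound that the paper's final expression implicitly relies on. Your two-term decomposition $T\cR_T \le T\hat\cR_T + \sum_t\langle p^\star-p_t,z_t\rangle$ is also slightly cleaner than the paper's three-term split (which separately compares the two offline minimizers $p^*$ and $\hat p^*$), though the two are arithmetically equivalent after the paper's manipulation. All constants and logarithmic factors land within the stated $O(\cdot)$, so the proof goes through as you describe.
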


\begin{proof}
Let $\nu_t$ denote the noise vector we have in round $t$, we can
decompose the regret into several parts
\begin{align*}
  \cR_T &= \frac{1}{T}\sum_{t=1}^T \langle p_t, x_t\rangle -
  \frac{1}{T}
  \min_{p\in \cP}\sum_{t=1}^T \langle p , x_t \rangle\\
  &= \frac{1}{T} \sum_{t=1}^T \langle p_t , \hat x_t\rangle -
  \frac{1}{T} \sum_{t=1}^T \langle p_t , \nu_t\rangle -
  \frac{1}{T}\left[ \min_{p\in \cP}\sum_{t=1}^T \langle p, x_t\rangle
    - \min_{\hat p\in \cP}\sum_{t=1}^T \langle \hat p, \hat x_t\rangle
  \right] -\frac{1}{T} \min_{\hat p\in \cP} \sum_{t=1}^T \langle \hat
  p, \hat x_t  \rangle \\
  &= \left[\frac{1}{T} \sum_{t=1}^T \langle p_t , \hat x_t\rangle -
    \frac{1}{T} \min_{\hat p\in \cP} \sum_{t=1}^T \langle \hat p, \hat
    x_t \rangle\right] - \frac{1}{T} \sum_{t=1}^T \langle p_t ,
  \nu_t\rangle - \frac{1}{T}\left[ \min_{p\in \cP}\sum_{t=1}^T \langle
    p, x_t\rangle - \min_{\hat p\in \cP}\sum_{t=1}^T \langle \hat p,
    \hat x_t\rangle \right]\\
  &= \hat\cR_T - \frac{1}{T}\sum_{t=1}^T \langle p_t , \nu_t\rangle -
  \frac{1}{T}\left[ \min_{p\in \cP}\sum_{t=1}^T \langle p, x_t\rangle
    - \min_{\hat p\in \cP}\sum_{t=1}^T \langle \hat p, \hat
    x_t\rangle \right]\\
  &\leq \hat\cR_T - \frac{1}{T}\min_{p\in \cP}\sum_{t=1}^T \langle p ,
  \nu_t\rangle - \frac{1}{T}\left[ \min_{p\in \cP}\sum_{t=1}^T \langle
    p, x_t\rangle - \min_{\hat p\in \cP}\sum_{t=1}^T \langle \hat p,
    \hat x_t\rangle \right] .
\end{align*}

We will bound the three terms separately.
By the no-regret guarantee of online gradient descent in \Cref{lem:zinkevich},
we have the following the regret guarantee w.r.t the noisy losses if we
set $\eta = \frac{\|\cP\|}{\sqrt{T} \|\hat \cX\|}$
\[
\hat \cR_T =\frac{1}{T} \sum_{t=1}^T \langle p_t , \hat x_t \rangle -
\min_{p\in \cP} \frac{1}{T}\sum_{t=1}^T \langle p, \hat x_t \rangle
\leq \frac{\|\cP\|^2}{2\eta T} + \frac{\eta \|\hat\cX\|^2}{2} =
\frac{\|\cP\|\|\hat\cX\|}{\sqrt{T}},
\]
where $\|\cP\|$ and $\|\hat\cX\|$ denote the bound on the $\ell_2$
norm of the vectors $\{p_t\}$ and $\{\hat x_t\}$ respectively.

Recall that for any random variable $Y$ sampled from the Gaussian
distribution $\cN(0, \sigma^2)$, we know from~\Cref{gaussian}
\[
\Pr[|Y| \geq d \cdot \sigma] \leq 2\exp(-ad^2).
\]
For each noise vector $\nu_t$ and any coordinate $i$, we have with
probability except $\beta/Tk$ that $|\nu_t(i)| \leq \sigma\sqrt{\frac{1}{a}\cdot
\log\left(\frac{2Tk}{\beta} \right)}$. By union bound, we have with
probability except $\beta$ that
\[
\max_t\max_i |\nu_t(i)| \leq \sigma\sqrt{\frac{1}{a}\cdot
\log\left(\frac{2Tk}{\beta} \right)}
\quad \mbox{
  and so for all }t,
\quad \|\nu_t\|_2 \leq\sigma\sqrt{\frac{k}{a}\cdot
\log\left(\frac{2Tk}{\beta} \right)}
\]
Since $\hat x_t = x_t + \nu_t$, we know that
\[
\|\hat\cX\| \leq \sqrt{k}\left(X +  \sigma\sqrt{\frac{1}{a}\cdot
\log\left(\frac{2Tk}{\beta} \right)} \right).
\]

Now we bound the second and third term.
From~\Cref{gsum}, we know with probability at least $1-\beta/k$,
for each coordinate $i$,
\[
\sum_t \nu_t(i) \leq \sigma\sqrt{\frac{T}{a}\cdot
  \log\left(\frac{2k}{\beta} \right)} .
\]
By a union bound, we know with probability at least $1-\beta$
\[
\left\|\sum_t \nu_t/T\right\|_\infty \leq
\frac{\sqrt{\frac{\sigma^2}{a}\log(k/\beta)}}{\sqrt{T}} \qquad\mbox{ and
    so } \qquad \left\|\sum_t \nu_t/T\right\|_2 \leq
 \frac{\sqrt{\frac{k\sigma^2}{a}\log(k/\beta)}}{\sqrt{T}} .
\]
Now we could use Holder's inequality to bound the second term
\[
- \frac{1}{T} \min_{p\in\cP} \sum_{t=1}^T \langle p, \nu_t\rangle \leq
\left|\min_{p\in\cP} \langle p, \sum_{t=1}^T \nu_t/T\rangle\right| \leq
\|\cP\|\frac{{\sqrt{k\sigma^2\log(k/\beta)}}}{\sqrt{aT}} .
\]



Let $p^*\in \arg\min_{p\in\cP} \sum_{t=1}^T \langle p, x_t\rangle$ and
$\hat p^*\in \arg\min_{\hat p\in \cP} \sum_{t=1}^T \langle \hat p,
\hat x_t\rangle$. We can rewrite the third term as
\begin{align*}
  - \frac{1}{T} \left[\min_{p\in \cP}\sum_{t=1}^T \langle p,
    x_t\rangle - \min_{p\in \cP}\sum_{t=1}^T \langle \hat p, \hat
    x_t\rangle \right] &= -\langle p^*, \sum_t x_t/T\rangle +
  \langle\hat p^*, \sum_t \hat
  x_t/T\rangle\\
  &= -\langle p^*, \sum_t \hat x_t/T\rangle + \langle \hat p^*, \sum_t
  \hat x_t/T\rangle - \left\langle p^*, \frac{1}{T}\left(\sum_t x_t -
    \sum_t \hat x_t \right)\right\rangle\\
  \mbox{(Holder's inequality)} &\leq \langle p^* , \frac{1}{T} \sum_t
  (\nu_t)\rangle \leq \|\cP\|\frac{{\sqrt{k\sigma^2\log(k/\beta)}}}{\sqrt{aT}} .
\end{align*}

In the end, we have that
\begin{align*}
  \cR_T &\leq \frac{\|\cP\| \sqrt{k}}{\sqrt{T}}\left( {\left(X +
        \sigma\sqrt{\frac{1}{a}\cdot \log\left(\frac{2Tk}{\beta}
          \right)} \right)} + 2\frac{{\sqrt{\sigma^2\log(k/\beta)}}}{\sqrt{a}}\right)\\
  &\leq \frac{\|\cP\| \sqrt{k}}{\sqrt{T}}\left( X +
        2 \sigma\sqrt{\frac{1}{a}\cdot \log\left(\frac{2Tk}{\beta}
          \right)} \right) \\
  &= \frac{\|\cP\| \sqrt{k}}{\sqrt{T}} \left( X +
    \left( \sqrt{\frac{8\pi}{\log 2}} \right) \sigma \sqrt{
          \log\left(\frac{2Tk}{\beta} \right)} \right) \\
  &= O\left(\frac{\|\cP\|\sqrt{k}}{\sqrt{T}}\left( X + \sigma
      \log\left(\frac{Tk}{\beta} \right)\right) \right) .
\end{align*}

\end{proof}

\end{document}